\documentclass[11pt]{article}

\usepackage{amsmath}
\usepackage{amsfonts}
\usepackage{mathtools}
\usepackage{amsthm}
\usepackage{kantlipsum}
\usepackage{graphicx}

\usepackage{xcolor}
\usepackage{caption}
\usepackage{subcaption}
\usepackage{hyperref}
\usepackage{soul}
\usepackage{multirow}
\usepackage[round]{natbib}

\usepackage{float}
\floatstyle{plaintop}
\restylefloat{table}

\graphicspath{{./img/}}

\newtheorem{proposition}{Proposition}
\newtheorem{lemma}{Lemma}

\newenvironment{keywords}
  {\par\smallskip\noindent\footnotesize\textbf{Keywords: }\ignorespaces}
{\par\smallskip}

\newcommand{\E}[1]{\mathbb{E}\left(#1\right)}
\newcommand{\Esquare}[1]{\mathbb{E}\left[#1\right]}
\newcommand{\Ep}[2]{\mathbb{E}^{#2}\left(#1\right)}
\newcommand{\Epsquare}[2]{\mathbb{E}^{#2}\left[#1\right]}
\newcommand{\Es}[2]{\mathbb{E}_{#2}\left(#1\right)}
\newcommand{\Essquare}[2]{\mathbb{E}_{#2}\left[#1\right]}
\newcommand{\Var}[1]{\operatorname{Var}\left(#1\right)}
\newcommand{\Vars}[2]{\operatorname{Var}_{#2}\left(#1\right)}
\newcommand{\Varssquare}[2]{\operatorname{Var}_{#2}\left[#1\right]}
\newcommand{\Varp}[2]{\operatorname{Var}^{#2}\left(#1\right)}
\newcommand{\Cov}[2]{\operatorname{Cov}\left(#1, #2\right)}

\newcommand{\lhs}[1]{\widetilde{#1}}
\newcommand{\single}[1]{\{#1\}}

\def\cD{{\cal D}}

\def\cI{{\cal I}}
\def\cL{{\cal L}}

\def\cO{{\cal O}}

\def\cS{{\cal S}}

\def\cX{{\cal X}}
\def\cY{{\cal Y}}
\def\cU{{\cal U}}
\def\cu{{\tiny{\cal U}}}

\def\bA{{\bf A}}

\def\bX{{\bf X}}
\def\bx{{\bf x}}

\def\bZ{{\bf Z}}

\allowdisplaybreaks

\graphicspath{ {./img/}}

\usepackage[
margin=0.9in,
footskip=0.25in,
headsep=0.1in
]{geometry}
\usepackage{setspace}
\usepackage{etoolbox}
\usepackage{titlesec}
\AtBeginEnvironment{thebibliography}{
    \footnotesize
    \setstretch{0.9}
}

\titleformat{\section}{\normalfont\large\bfseries}{\thesection}{1em}{}
\titleformat{\subsection}{\normalfont\normalsize\bfseries}{\thesubsection}{1em}{}
\titlespacing*{\section}{0pt}{1.5ex plus 0.5ex minus .2ex}{0.8ex plus .2ex}
\titlespacing*{\subsection}{0pt}{1.2ex plus 0.4ex minus .2ex}{0.5ex plus .1ex}
\titlespacing*{\subsubsection}{0pt}{1.2ex plus 0.4ex minus .2ex}{0.5ex plus .1ex}

\appto\appendix{
	\titleformat{\section}
	{\normalfont\normalsize\bfseries}{\thesection}{1em}{}

	\titleformat{\subsection}
	{\normalfont\small\bfseries}
	{\thesubsection}
	{1em}
	{}

	\titleformat{\subsubsection}
	{\normalfont\small\bfseries}
	{\thesubsubsection}
	{1em}
	{}
}

\title{Nested Simulation Methods for Sobol' Index Estimation: Bias Correction, Budget Allocation, and Latin Hypercube Sampling}

\author{
Jingtao Zhang\thanks{This work was completed before Jingtao Zhang joined Amazon.}  \quad
Xi Chen\thanks{E-mail: \protect\nolinkurl{xchen6@vt.edu}.}  \\
\normalsize{Grado Department of Industrial and Systems  Engineering, Virginia Tech, USA}
}

\date{}

\begin{document}

\maketitle

\begin{abstract}

	Estimating the variance of a conditional expectation is a recurring problem in stochastic simulation, with applications in global sensitivity analysis and Sobol' index estimation. This paper revisits Sobol' index estimation through the lens of nested simulation and develops a unified comparison of classical pick--freeze estimators and nested simulation estimators under a common computational budget. We show that several standard pick--freeze estimators can be interpreted as nested simulation estimators with fixed inner-level sample sizes, enabling direct performance comparisons and clarifying their bias--variance behavior. Building on this perspective, we analyze the standard nested simulation estimator for the Sobol' index numerator and propose two jackknife-based extensions: an unbiased jackknife estimator and a split jackknife estimator that uses an independent preliminary sample to estimate the mean. Under crude Monte Carlo (CMC), the split jackknife estimator attains the canonical mean squared error (MSE) rate, whereas the standard nested simulation and unbiased jackknife estimators attain the slower nested simulation rate. We also characterize the associated allocations of outer- and inner-level simulation effort. Finally, we study the impact of Latin hypercube sampling (LHS), showing that it can improve the standard nested simulation estimator while undermining bias reduction in jackknife-based estimators unless the inner-level sample size grows with the total budget. Numerical experiments corroborate the theory and provide practical guidance on estimator selection for Sobol' index estimation under CMC and LHS.

\end{abstract}

 \begin{keywords}
Nested Simulation, Sobol' Index Estimation, Latin Hypercube Sampling, Convergence Rate
\end{keywords}

\section{Introduction}

Estimating the variance of a conditional expectation is a fundamental problem in stochastic simulation, with applications in risk measurement, uncertainty quantification, and global sensitivity analysis. In this paper, we focus on Sobol' index estimation, where the numerator is the variance of the conditional expectation of a model output given a subset of input variables. Sobol' indices are widely used in global sensitivity analysis (GSA) to quantify how input uncertainty contributes to output variability \citep{sobol1990sensitivity}, with applications in epidemiological modeling \citep{kouye2022exploiting}, manufacturing \citep{liu2020identifying}, and sustainability analysis \citep{jaxa2021variance}. A large body of work has developed Monte Carlo (MC) estimators for Sobol' indices \citep{tarantola2007estimating,saltelli2010variance}. Many commonly used methods rely on the pick--freeze scheme \citep{saltelli2010variance,owen2013better,janon2014asymptotic}, which is straightforward to implement and attains the canonical mean squared error (MSE) rate $\cO(T^{-1})$ under a computational budget of $T$ model evaluations. Two prominent estimators in this class are due to \cite{janon2014asymptotic} and \cite{owen2013better}.

A complementary perspective is offered by \emph{nested simulation}, a general framework for estimating functionals of conditional expectations, including risk measures and variance-type quantities \citep{gordy2010nested,sun2011efficient}. Nested simulation generates an \emph{outer-level} sample of scenarios and, conditional on each outer-level scenario, draws \emph{inner-level} observations to approximate the corresponding conditional expectation. Under a fixed budget $T$, \cite{gordy2010nested} showed that the standard nested simulation estimator achieves its optimal MSE convergence rate of $\cO(T^{-2/3})$ using $\cO(T^{2/3})$ outer-level scenarios and $\cO(T^{1/3})$ inner-level observations per scenario. Although this rate is slower than the $\cO(T^{-1})$ rate attained by standard MC estimators for the mean of a random variable, bias-reduction techniques can improve nested simulation efficiency \citep{gordy2010nested,giles2019multilevel,liang2024fast}. Separately, within an analysis of variance (ANOVA) framework, \cite{sun2011efficient} derived an unbiased ``$1\frac{1}{2}$-level'' estimator for the variance of a conditional expectation that can achieve an $\cO(T^{-1})$ MSE convergence rate.

The connection between Sobol' index estimation and nested simulation has recently become more explicit. For example, \cite{goda2017computing} studied the pick--freeze estimators of \cite{janon2014asymptotic} and introduced a class of non-nested estimators that clarify the relationship between GSA and the nested simulation problem of estimating the variance of a conditional expectation. This perspective suggests that tools from nested simulation, including estimator construction, bias reduction, and budget allocation, can inform Sobol' index estimation, while classical pick--freeze estimators can also be viewed through a nested simulation lens. However, systematic comparisons of these estimators under a common computational budget remain limited.

Another practical ingredient in Sobol' index estimation is \emph{Latin hypercube sampling} (LHS), a widely used variance-reduction technique in MC integration, experimental design, and uncertainty quantification \citep{mckay1979comparison}. In GSA, LHS is often used to improve the empirical performance of Sobol' index estimators relative to crude Monte Carlo  \citep[CMC,][]{janon2014asymptotic,kouye2022exploiting,puy2022comprehensive}. Recent work has proposed more efficient LHS-based designs to reduce sampling effort in numerical studies \citep{gilquin2019making,ehre2020framework,damblin2021adaptive}. From a theoretical perspective, \cite{tissot2012estimating} and \cite{gilquin2016recursive} analyzed pick--freeze Sobol' index estimators under LHS, showing that these estimators can achieve lower variance than under CMC.
However, the effect of LHS on nested simulation estimators is less straightforward than in standard MC settings. This motivates a systematic analysis of how LHS interacts with estimator structure in nested simulation.

In this paper, we develop a unified analysis of nested simulation methods for Sobol' index estimation under a fixed computational budget, considering both CMC and LHS.
 We first show that classical pick--freeze estimators, including those of \cite{janon2014asymptotic} and \cite{owen2013better}, can be interpreted as nested simulation estimators with fixed inner-level sample sizes. This perspective places pick--freeze, standard nested simulation, and bias-corrected nested simulation estimators within a common framework and enables direct comparison of their bias, variance, and MSE behavior under the same computational budget. For the standard nested simulation estimator, we characterize the bias--variance trade-off and derive the asymptotically optimal allocation of outer- and inner-level simulation effort.

Building on this framework, we propose two jackknife-based nested simulation estimators. The first is an unbiased jackknife estimator that removes the leading bias of the standard nested simulation estimator. The second is a split jackknife estimator that uses an independent preliminary sample to estimate the mean. Under CMC, the split jackknife estimator attains the $\cO(T^{-1})$ MSE rate, whereas the standard nested simulation and unbiased jackknife estimators attain the slower nested simulation rate. This comparison highlights the role of mean estimation: unlike settings with a known centering constant, the Sobol' index numerator requires estimating the mean of the conditional expectation, and this additional step affects the dependence structure and variance behavior of bias-corrected estimators.

We further characterize how LHS affects these estimator families. While LHS does not improve the convergence rates of pick--freeze estimators, it can improve the bias and variance behavior of the standard nested simulation estimator, particularly for first-order Sobol' indices. By contrast, LHS can undermine the bias cancellation of jackknife-based nested simulation estimators and the ``$1\frac{1}{2}$-level'' estimator unless the inner-level sample size grows with the total budget. Numerical experiments corroborate these findings and yield practical recommendations for choosing among pick--freeze, standard nested simulation, and bias-reduced nested simulation estimators under CMC and LHS.

The remainder of the paper is organized as follows. Section \ref{sec:sobol_index} reviews Sobol' indices and their estimation via the pick--freeze scheme. Section \ref{sec:nest} studies nested simulation estimators, including the proposed jackknife estimators, under CMC. Section \ref{sec:ns_lhs} investigates the impact of LHS. Section \ref{sec:experiment} reports numerical experiments, and Section \ref{sec:conclusion} concludes the paper.

\section{Review of Sobol' Indices and the Pick-Freeze Scheme}\label{sec:sobol_index}
This section provides a brief overview of Sobol' indices for global sensitivity analysis and reviews two representative estimators based on the pick-freeze scheme.

\subsection{Sobol' Indices for Global Sensitivity Analysis}\label{subsec:Sobol_index_review}
GSA quantifies how uncertainty in model inputs propagates to variability in the output. Among variance-based GSA methods, Sobol' indices are widely used. Let $\cX \subset \mathbb{R}^{p}$ denote the $p$-dimensional input space, and consider a computational model $\cY=f(\bX)$, where $f:\cX\mapsto\mathbb{R}$ maps the input vector $\bX=(X_1,X_2,\dots,X_p)^{\top}\in\cX$ to the scalar output $\cY$. Throughout the paper, we assume that the fourth moment of the output is bounded, i.e., $\E{\cY^{4}}<\infty$. Let $\cu\subset [p]$, where $[p]\coloneqq\{1,2,\dots,p\}$, denote an index set of inputs. Define $\bX_{\cu}$ as the subvector of $\bX$ indexed by $\cu$ (e.g., if $\cu=\{1,2\}$ then $\bX_{\cu}=(X_1,X_2)^{\top}$), and let $\bX_{-\cu}\coloneqq \bX\setminus \bX_{\cu}$.

Sobol' indices quantify the contribution of input variables to the output variance through the functional ANOVA decomposition,
\[
\Var{\cY}=\sum_{i=1}^{p}\sum_{|\cu|=i}V_{\cu}(\cY) \ ,
\]
where $V_{i}(\cY) \coloneqq \Var{\E{\cY \mid X_i}}, V_{\{i,j\}}(\cY) \coloneqq \Var{\E{\cY \mid X_i, X_j}} - V_{i}(\cY)-  V_{j}(\cY)$ for $i \neq j$, and so forth.  Formally, the Sobol' index associated with $\bX_{\cu}$ is
\begin{equation}\label{eq:sobol}
    S^{\cu} \coloneqq \frac{\Var{\E{\cY \mid \bX_{\cu}}}}{\Var{\cY}} \ .
\end{equation}
In particular, when $\cu$ consists of a single input variable (i.e., $|\cu|=1$), $S^{\cu}$ is referred to as the first-order Sobol' index. Sobol' indices take values in $[0,1]$, with larger values indicating a stronger influence of the corresponding input(s) on the model output. While the denominator $\Var{\cY}$ in \eqref{eq:sobol} can be readily estimated via MC simulation, estimating the numerator $\Var{\E{\cY \mid \bX_{\cu}}}$ is more challenging. The remainder of this work focuses on estimating this quantity; for brevity, we denote it by $V$ in what follows.

\subsection{Pick-Freeze Scheme and Estimators}\label{subsec:pick_freeze}
The pick-freeze scheme underlies many variance-based GSA methods for estimating Sobol' indices. The key idea is to \emph{freeze} one or more input variables and \emph{pick} (i.e., randomly sample) the remaining variables to assess the induced variability in the output. This enables one to isolate and quantify the contribution of the frozen input variable(s) to the output variance and thereby estimate $V$.

Specifically, the pick-freeze scheme exploits the identity (see Lemma 2.2 of \cite{janon2014asymptotic})
\[
V = \Cov{\cY}{\cY^{\prime}} = \E{\cY \cY^{\prime}} - \E{\cY}\E{\cY^{\prime}} \ ,
\]
where $\cY=f(\bX_{\cu},\bX_{-\cu})$, $\cY'=f(\bX_{\cu},\bX_{-\cu}')$, and $\bX_{-\cu}'$ is an independent copy of $\bX_{-\cu}$.  In practice, one draws independent realizations of $\bX_{-\cu}$ conditional on $\bX_{\cu}$, evaluates the model at input vectors $(\bX_{\cu},\bX_{-\cu})$ and $(\bX_{\cu},\bX_{-\cu}')$, and uses the resulting paired outputs to estimate $V$.

A classical pick-freeze estimator studied in \cite{ishigami1990importance}, \cite{janon2014asymptotic} , and \cite{saltelli2010variance} is widely used due to its simplicity and practical effectiveness. Let $K$ denote the sample size, and consider two sets of $p$-dimensional input vectors,
$\{(\bX_{\cu,i},\bX_{-\cu,i})\!:i\in [K]\}$ and $\{(\bX_{\cu,i},\bX_{-\cu,i}')\!:i\in [K]\}$,
where $\bX_{\cu,i}$ (respectively, $\bX_{-\cu,i}$) is the $i$th realization of $\bX_{\cu}$ (resp., $\bX_{-\cu}$), and $\bX_{-\cu,i}'$ is an independent realization of $\bX_{-\cu}$ conditional on $\bX_{\cu}$.
The classical pick-freeze (\textbf{PF}) estimator of $V$ is
\begin{equation}\label{eq:pick_freeze}
	V_{\mbox{\tiny PF}}^{\cu}
	=\frac{1}{K}\sum_{i=1}^{K} f(\bX_{\cu,i},\bX_{-\cu,i})\, f(\bX_{\cu,i},\bX_{-\cu,i}')
	-\left(\frac{1}{2K}\sum_{i=1}^{K}\Big[f(\bX_{\cu,i},\bX_{-\cu,i})+f(\bX_{\cu,i},\bX_{-\cu,i}')\Big]\right)^2 \  ,
\end{equation}
which is widely adopted owing to its ease of implementation and its effectiveness across a broad
range of applications.

A notable limitation of $V_{\mbox{\tiny PF}}^{\cu}$ in \eqref{eq:pick_freeze} is its poor performance when $V$ is small (and hence $S^{\cu}$ is small) \citep{owen2013better}. This stems from the fact that the two terms in $V=\E{\cY \cY^{\prime}} - \E{\cY}\E{\cY^{\prime}}$ can be close in magnitude; subtracting them may lead to catastrophic cancellation, degrading numerical stability and estimator reliability.
To improve estimation when Sobol' indices are small, \cite{owen2013better} proposed the ``Correlation 2'' (\textbf{CR}) method for estimating $V$, which is particularly effective when $S^{\cu}$ is small (e.g., below $0.1$).  Although the CR method retains the pick-freeze structure, it uses four sets of input vectors rather than two. Specifically, the CR estimator is
\begin{equation}\label{eq:corr2}
	V_{\mbox{\tiny CR}}^{\cu}
	= \frac{1}{K} \sum_{i=1}^K
	\left( f(\bX_{\cu,i}, \bX_{-\cu,i}) - f(\bX_{\cu,i}^{\prime\prime}, \bX_{-\cu,i}) \right) \cdot
	\left( f(\bX_{\cu,i}, \bX_{-\cu,i}^{\prime}) - f(\bX_{\cu,i}^{\prime}, \bX_{-\cu,i}^{\prime}) \right)\ ,
\end{equation}
where $\bX_{\cu,i}^{\prime}$ and $\bX_{\cu,i}^{\prime\prime}$ are independent realizations of $\bX_{\cu}$ for $i\in [K]$, and the remaining subvectors are defined as in \eqref{eq:pick_freeze}.

Several remarks on the properties of these two pick-freeze estimators are in order. \cite{goda2017computing} showed that the PF estimator in \eqref{eq:pick_freeze} is biased, with bias decaying at rate $\cO(K^{-1})$ (see Theorem~1 in \cite{goda2017computing}). In contrast,  \cite{owen2013better} established that the CR estimator is unbiased (see Theorem~5.1 in \cite{owen2013better}). Moreover, the structures of both estimators imply variances of order $\cO(K^{-1})$, and thus mean squared errors (MSEs) of order $\cO(K^{-1})$.
It is important to note that these properties are derived under CMC sampling. Empirical studies have further indicated that LHS can improve efficiency in practice \citep{janon2014asymptotic, kouye2022exploiting, puy2022comprehensive}. From a theoretical standpoint, although no faster MSE rate has been established to the best of our knowledge, LHS has been shown to achieve a smaller asymptotic variance \citep{gilquin2016recursive,tissot2012estimating}.

\section{Nested Simulation Estimation of the Variance of a Conditional Expectation Under Crude Monte Carlo}\label{sec:nest}

This section develops a nested simulation framework under CMC. Subsection~\ref{subsec:standard_ns} presents new bias--variance and MSE analyses for the standard nested simulation estimator and relates this perspective to existing pick-freeze estimators. Subsection~\ref{subsec:jackknife} then introduces two new jackknife estimators built on the nested simulation framework and establishes their theoretical properties.

\subsection{Nested Simulation Framework and Connections to Existing Estimators}\label{subsec:standard_ns}

Nested simulation has been widely studied for estimating functionals of conditional expectations. In this framework, the object of interest can often be written as $\rho(L(\bX_\cu))$, where $L(\bX_\cu):=\E{\cY\mid \bX_\cu}$ and $\rho$ is a real-valued functional defined on a suitable space of random variables. For example, \cite{gordy2010nested}, \cite{zhang2022bootstrap}, and \cite{liang2024fast} investigated nested simulation estimators for functionals of the form $\E{g(L(\bX_\cu))}$ with a given function $g$, often with a known centering constant or threshold. In contrast, the numerator of the Sobol' index considered here is
	$
	V=\Var{L(\bX_\cu)}
	=\E{\left(L(\bX_\cu)-\E{L(\bX_\cu)}\right)^2}$,
	where the centering term $\E{L(\bX_\cu)}=\E{\cY}$ is unknown and must be estimated from simulation outputs. This additional estimation step is central to the bias--variance behavior of the nested simulation estimators studied below.

Given a total computational budget $T=KN$, where $K$ and $N$ denote the outer- and inner-level sample sizes, respectively, the standard nested simulation under CMC proceeds as follows. First, we generate $K$ independent and identically distributed (i.i.d.)\ outer-level scenarios $\{\bX_{\cu,i}\}_{i=1}^K$. Conditional on each $\bX_{\cu,i}$, we then generate $N$ i.i.d.\ inner-level input subvectors $\{\bX_{-\cu,j}^{(i)}\}_{j=1}^N$  and compute the corresponding outputs $\cY_{ij}\coloneqq f(\bX_{\cu,i},\bX_{-\cu,j}^{(i)})$, where $i\in[K]$ and $j\in[N]$.
Define the inner-level sample average output
$
L_N(\bX_{\cu,i})\coloneqq N^{-1}\sum_{j=1}^N \cY_{ij},
$
which serves as an estimator of $L(\bX_{\cu,i})\coloneqq \E{\cY_{ij}\mid \bX_{\cu,i}}$ for each $i \in [K]$.
The nested simulation (\textbf{NS}) estimator of $V$ is the sample variance of $\{L_N(\bX_{\cu,i})\}_{i=1}^K$:
\begin{equation}\label{eq:nest}
	V_{\mbox{\tiny NS}}^{\cu}
	=\frac{1}{K-1}\sum_{i=1}^{K}\left(L_N(\bX_{\cu,i})-\frac{1}{K}\sum_{i=1}^{K}L_N(\bX_{\cu,i})\right)^2 .
\end{equation}
To reduce MSE effectively, one typically requires $K,N\to\infty$ as $T\to\infty$, with a budget allocation that balances bias and variance.
We analyze $\mathrm{MSE}(V_{\mbox{\tiny NS}}^{\cu})$ via
\begin{equation}\label{eq:mse_decompose}
\mbox{MSE}(V_{\mbox{\tiny NS}}^{\cu}) = \Esquare{(V_{\mbox{\tiny NS}}^{\cu}-V)^2} = \bigg(\E{V_{\mbox{\tiny NS}}^{\cu}} - V\bigg)^2 + \Var{V_{\mbox{\tiny NS}}^{\cu}} \ .
\end{equation}
The following result summarizes the asymptotic properties of the bias and variance; the proof is deferred to Appendix \ref{app:bias_variance_proof}.

\begin{proposition}\label{prop:bias_variance}
	The bias and variance of the NS estimator satisfy
	\[
	\E{V_{\mbox{\tiny NS}}^{\cu}}-V
	=N^{-1}\big(\Var{\cY}-V\big),
	\qquad
	\Var{V_{\mbox{\tiny NS}}^{\cu}}
	=K^{-1}\,(\kappa_{\E{\cY \mid \bX_\cu}}-1)\,V^2+o\!\left(K^{-1}\right) \ ,
	\]
	where $\kappa_{\E{\cY \mid \bX_\cu}}$ denotes the kurtosis of $\E{\cY\mid \bX_\cu}$.
\end{proposition}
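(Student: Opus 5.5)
The plan is to treat $Z_i\coloneqq L_N(\bX_{\cu,i})$, $i\in[K]$, as i.i.d.\ random variables and reduce both claims to standard facts about the unbiased sample variance. Write $Z_i = L_i + \bar\varepsilon_i$, where $L_i\coloneqq L(\bX_{\cu,i})$ and $\bar\varepsilon_i \coloneqq N^{-1}\sum_{j=1}^N(\cY_{ij}-L_i)$. Conditional on $\bX_{\cu,i}$, the $\cY_{ij}$ are i.i.d.\ with mean $L_i$ and variance $\sigma^2(\bX_{\cu,i})\coloneqq\Var{\cY\mid\bX_{\cu,i}}$. Hence $\E{\bar\varepsilon_i\mid \bX_{\cu,i}}=0$ and $\E{\bar\varepsilon_i^2\mid\bX_{\cu,i}}=N^{-1}\sigma^2(\bX_{\cu,i})$.

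\textbf{Bias.} Since \eqref{eq:nest} is the unbiased sample variance of i.i.d.\ $Z_i$, we have $\E{V_{\mbox{\tiny NS}}^{\cu}}=\Var{Z_1}$. The law of total variance gives
\[
\Var{Z_1}=\Var{L_1}+\E{\Var{Z_1\mid\bX_{\cu,1}}}=V+N^{-1}\E{\sigma^2(\bX_{\cu})}.
\]
Applying total variance once more, $\E{\sigma^2(\bX_\cu)}=\Var{\cY}-V$. This yields the exact bias $N^{-1}(\Var{\cY}-V)$.

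\textbf{Variance.} I will use the classical exact formula for the variance of the unbiased sample variance of i.i.d.\ data:
\[
\Var{V_{\mbox{\tiny NS}}^{\cu}}=K^{-1}\Big(\mu_4(Z)-\tfrac{K-3}{K-1}\,\Var{Z}^2\Big),
\]
where $\mu_4(Z)=\E{(Z_1-\E{Z_1})^4}$. This formula can be cited or obtained by expanding the U-statistic representation. It then suffices to show that, as $N\to\infty$, $\Var{Z}\to V$ and $\mu_4(Z)\to\mu_4(L)=\kappa_{\E{\cY\mid\bX_\cu}}V^2$. Then $K\,\Var{V_{\mbox{\tiny NS}}^{\cu}}\to(\kappa-1)V^2$, which is the stated expansion with an $o(K^{-1})$ remainder. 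This presumes $K,N\to\infty$ jointly, which is the regime stated before the proposition.

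The first limit follows from the bias computation. The main step is the fourth central moment. Note $\E{Z_1}=\E{L_1}=\E{\cY}$. I will expand
\[
(L_1-\mu+\bar\varepsilon_1)^4
\]
and take expectations conditional on $\bX_{\cu,1}$:
\begin{itemize}
\item The cross term linear in $\bar\varepsilon_1$ vanishes.
\item The quadratic term is $6\,\E{(L-\mu)^2\sigma^2(\bX_\cu)}/N$.
\item The cubic term $\E{(L-\mu)\,\E{\bar\varepsilon^3\mid\bX_\cu}}$ is $\cO(N^{-2})$.
\item The quartic term $\E{\bar\varepsilon^4}$ is $\cO(N^{-2})$, by the standard i.i.d.\ moment bound $\E{\bar\varepsilon^4\mid\bX_\cu}\le 3N^{-2}\sigma^4+N^{-3}\E{(\cY-L)^4\mid\bX_\cu}$.
\end{itemize}
The obstacle is checking that these expectations are finite using only the assumption $\E{\cY^4}<\infty$. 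By conditional Jensen, $L$, $\sigma^2$ and $\E{(\cY-L)^4\mid\bX_\cu}$ have the needed moments. Cauchy--Schwarz and Hölder then bound the mixed terms, e.g.\ $\E{(L-\mu)^2\sigma^2}\le \E{(L-\mu)^4}^{1/2}\E{\sigma^4}^{1/2}$ with $\E{\sigma^4}\le\E{(\cY-L)^4}<\infty$. Therefore $\mu_4(Z)=\mu_4(L)+\cO(N^{-1})$, which completes the variance claim.
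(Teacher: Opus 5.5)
Your proposal is correct and follows essentially the same route as the paper: the bias comes from the law of total variance, and the variance comes from the classical formula $K^{-1}\bigl(\mu_4 - \frac{K-3}{K-1}\sigma^4\bigr)$ for the unbiased sample variance, with the limiting fourth central moment identified as $\E{(L-\mu)^4}=\kappa V^2$. Your expansion of $(L-\mu+\bar\varepsilon)^4$ makes explicit the $\cO(N^{-1})$ error, the integrability checks, and the requirement $N\to\infty$; the paper leaves all three implicit when it replaces $\E{(L_N-\mu)^4}$ by $E_4$ and $\Var{L_N}$ by $V$.
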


From Proposition \ref{prop:bias_variance}, we obtain the normalized MSE as
	\begin{equation}\label{eq:mse}
	\frac{\mbox{MSE}(V_{\mbox{\tiny NS}}^{\cu})}{\Varp{\cY}{2}} = K^{-1} \cdot (\kappa_{\E{\cY \mid \bX_\cu}} - 1) \cdot \left( S^{\cu} \right)^2   + N^{-2} \cdot \left(1 - S^{\cu} \right)^2 +o\!\left(K^{-1}\right) \ .
    \end{equation}
Consequently, the asymptotically optimal allocation that minimizes the normalized MSE is
\begin{equation}\label{eq:opt_allocation}
	K^{\ast} = \left(\frac{(\kappa_{\E{\cY \mid \bX_\cu}}-1)\left( S^{\cu} \right)^2}{2(1-S^{\cu})^2}\right)^{1/3} \cdot  T^{2/3}, \quad N^{\ast} = \left(\frac{2(1-S^{\cu})^2}{(\kappa_{\E{\cY \mid \bX_\cu}}-1)\left( S^{\cu} \right)^2} \right)^{1/3} \cdot T^{1/3} \ ,
\end{equation}
where $K^{\ast}$ and $N^{\ast}$ are the asymptotically optimal outer- and inner-level sample sizes for constructing $V_{\mbox{\tiny NS}}^{\cu}$.
Equation \eqref{eq:opt_allocation} shows that the optimal allocation depends on the Sobol' index $S^{\cu}$. When $S^{\cu}$ is close to $0$, $\bX_\cu$ contributes little to $\Var{\cY}$, and the optimal $K^{\ast}$ is relatively small. When $S^{\cu}$ is close to $1$, a larger $K^{\ast}$ is preferred to resolve variation across outer-level scenarios. The dependence of the inner-level sample size $N^{\ast}$ on $S^{\cu}$ follows the same reasoning.

To implement \eqref{eq:opt_allocation}, one must estimate $\kappa_{\E{\cY \mid \bX_\cu}}$ and $S^{\cu}$. We do so via a pilot experiment. Specifically, given a total budget $T$, allocate $2\alpha T$ to the pilot stage, where $\alpha\in(0,0.5)$ and $\alpha T$ is assumed to be an integer for simplicity. Using $\{(\cY_i,\cY_i') : i\in[\alpha T]\}$ with $\cY_i=f(\bX_{\cu,i},\bX_{-\cu,i})$ and $\cY_i'=f(\bX_{\cu,i},\bX_{-\cu,i}')$, define $\overline{\cY}_i\coloneqq(\cY_i+\cY_i')/2$ and $\overline{\overline{\cY}}\coloneqq(\alpha T)^{-1}\sum_{i=1}^{\alpha T}\overline{\cY}_i$. We estimate $\Var{\cY}$ and $\kappa_{\E{\cY \mid \bX_\cu}}$ by
\[
\widehat{\operatorname{Var}}(\cY) = \frac{1}{\alpha T}\sum_{i=1}^{\alpha T} \frac{\cY_i^2+ (\cY^{\prime}_i)^2}{2} - \left(\overline{\overline{\cY}}\right)^2  \mbox{ and } \  \widehat{\kappa}_{\E{\cY \mid \bX_\cu}}  =  \frac{\alpha T \cdot\sum_{i=1}^{\alpha T}\left(\overline{\cY}_i-\overline{\overline{\cY}}\right)^4}{ \left(\sum_{i=1}^{\alpha T}\left(\overline{\cY}_i-\overline{\overline{\cY}}\right)^2\right)^{2}} \ ,
\]
and then estimate $S^{\cu}$ via $\widehat{S}^{\cu}=V_{\mbox{\tiny PF}}^{\cu}/\widehat{\operatorname{Var}}(\cY)$, where $V$ is estimated using the PF estimator in \eqref{eq:pick_freeze}.
The optimal outer- and inner-level sample sizes $K^{\ast}$ and $N^{\ast}$ are then obtained by plugging $\widehat{S}^{\cu}$ and $\widehat{\kappa}_{\E{\cY \mid \bX_\cu}}$ into \eqref{eq:opt_allocation}, and the remaining budget $(1-2\alpha)T$ is allocated to construct $V_{\mbox{\tiny NS}}^{\cu}$ in \eqref{eq:nest}. Numerical experiments indicate that a small pilot proportion (e.g., $\alpha=0.05$) is sufficient to estimate both $\kappa_{\E{\cY \mid \bX_\cu}}$ and $S^{\cu}$ and to provide reliable estimates of $K^{\ast}$ and $N^{\ast}$ for constructing the NS estimator.

Equations \eqref{eq:mse}--\eqref{eq:opt_allocation} imply that $V_{\mbox{\tiny NS}}^{\cu}$ attains an MSE rate of $\cO(T^{-2/3})$ under CMC. Under the same CMC setting, \cite{sun2011efficient} proposed a ``$1\frac{1}{2}$-level'' nested simulation estimator within the ANOVA framework for estimating $V$, which we refer to as the \textbf{OH} estimator. To facilitate further discussion, we express it in terms of $V_{\mbox{\tiny NS}}^{\cu}$ as
	\begin{equation}\label{eq:one-half}
		V_{\mbox{\tiny OH}}^{\cu} = V_{\mbox{\tiny NS}}^{\cu} - \frac{1}{K(N-1)}\sum_{i=1}^{K}\sum_{j=1}^{N}(\cY_{ij} - L_N(\bX_{\cu, i}))^2 \ .
	\end{equation}
	\cite{sun2011efficient} showed that $V_{\mbox{\tiny OH}}^{\cu}$ is unbiased and satisfies $\Var{V_{\mbox{\tiny OH}}^{\cu}}=\cO(K^{-1})$. Furthermore, $V_{\mbox{\tiny OH}}^{\cu}$ achieves its optimal MSE rate of $\cO(T^{-1})$ under the allocation $K^{\ast}=T/N^{\ast}$ with a bounded inner-level sample size $N^{\ast}<\infty$, improving upon the nested simulation estimator $V_{\mbox{\tiny NS}}^{\cu}$ in \eqref{eq:nest}. This improvement stems from the second term on the right-hand side (RHS) of \eqref{eq:one-half}, which corrects the bias of $V_{\mbox{\tiny NS}}^{\cu}$ without altering the $\cO(K^{-1})$ variance decay. \cite{sun2011efficient} also described how to select the optimal inner-level sample size given a total budget $T$ via a pilot experiment.

Viewed through a nested simulation lens, the pick-freeze estimators in Subsection~\ref{subsec:pick_freeze} correspond to using a fixed inner-level sample size (two for PF and four for CR). Under a total budget $T$, this yields outer-level sample sizes $K\approx T/2$ (PF) and $K\approx T/4$ (CR), ignoring integrality. In light of the discussion at the end of Section~\ref{sec:sobol_index}, both estimators achieve an MSE convergence rate of $\cO(T^{-1})$ under CMC.

We close this subsection by noting that the slower $\cO(T^{-2/3})$ MSE convergence rate of $V_{\mbox{\tiny NS}}^{\cu}$ in \eqref{eq:nest} is primarily due to its $\cO(N^{-1})$ bias, which forces the inner-level sample size $N$ to grow with the budget $T$ and diverts budget that could otherwise increase the outer-level sample size $K$. This observation motivates bias-reduction techniques that can potentially improve the efficiency of the NS estimator.

\subsection{New Jackknife Estimators}\label{subsec:jackknife}

Following the discussion in Subsection~\ref{subsec:standard_ns}, we propose two new jackknife estimators for the variance of a conditional expectation under the nested simulation framework with CMC sampling.

\subsubsection{Unbiased Jackknife Estimator}\label{subsec:unbias_jack}

Under the nested simulation framework, we construct a jackknife estimator based on the NS estimator in \eqref{eq:nest}. For each outer-level scenario $\bX_{\cu,i}$ with $i \in [K]$, the output sample $\{\cY_{ij}\}_{j \in [N]}$ associated with the inner-level sampling is partitioned into $I$ nonoverlapping sections, where $I$ is a fixed integer satisfying $2 \leq I \leq N$, and $N$ is divisible by $I$ \citep{gordy2010nested, liang2024fast}.
Define
$\cI_l \coloneqq \{j \in [N]: (l-1)\cdot (N/I) + 1 \leq j \leq l\cdot (N/I)\}$
as the index set of the $l$th section for each $l \in [I]$. Let $V^{\cu}_{\mbox{\tiny NS}, -l}$ denote the NS estimator constructed according to \eqref{eq:nest} after omitting the outputs in the $l$th section, $\{\cY_{ij}\}_{j \in \cI_l}$, for each outer-level scenario $\bX_{\cu,i}$. We then define the jackknife (\textbf{JK}) estimator for $V$ as
\begin{equation}\label{eq:jack}
	V^{\cu}_{\mbox{\tiny JK}} = I \cdot V_{\mbox{\tiny NS}}^{\cu} -    \frac{(I-1)}{I}\sum_{l=1}^{I}V^{\cu}_{\mbox{\tiny NS},-l} \ .
\end{equation}
The following result summarizes the properties of $V_{\mbox{\tiny JK}}^{\cu}$, with the proof provided in Appendix \ref{app:jack_proof}.
\begin{proposition}\label{prop:jack}
The jackknife estimator is unbiased, i.e., $\E{V_{\mbox{\tiny JK}}^{\cu}} - V= 0$, and its variance satisfies $\Var{V_{\mbox{\tiny JK}}^{\cu}} = aK^{-1} + bN^{-2} + o(K^{-1}) + o(N^{-2})$, where $a$ and $b$ are positive constants.
\end{proposition}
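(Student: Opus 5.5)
The plan is to treat bias and variance separately. Throughout, write $c\coloneqq \Var{\cY}-V$.

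\emph{Unbiasedness.} This should follow directly from Proposition~\ref{prop:bias_variance}, which gives the bias of the NS estimator exactly, not only asymptotically, for any inner sample size.
\begin{itemize}
\item Applied with inner size $N$, it gives $\E{V_{\mbox{\tiny NS}}^{\cu}}=V+c/N$.
\item Each leave-one-section-out estimator $V^{\cu}_{\mbox{\tiny NS},-l}$ is itself a standard NS estimator with $K$ outer scenarios and $N(I-1)/I$ i.i.d.\ inner samples per scenario. Hence $\E{V^{\cu}_{\mbox{\tiny NS},-l}}=V+cI/((I-1)N)$.
\item Substituting into \eqref{eq:jack} gives
\[
\E{V_{\mbox{\tiny JK}}^{\cu}} = IV+\frac{Ic}{N}-\frac{I-1}{I}\cdot I\Big(V+\frac{Ic}{(I-1)N}\Big)=V .
\]
\end{itemize}

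\emph{Variance: reduction to a statistic over outer scenarios.}
\begin{itemize}
\item Let $M_{il}\coloneqq (I/N)\sum_{j\in\cI_l}\cY_{ij}$ denote the section means. Then $L_N(\bX_{\cu,i})=I^{-1}\sum_l M_{il}$, and the leave-out inner mean is $L_{N,-l}(\bX_{\cu,i})=(I\,L_N(\bX_{\cu,i})-M_{il})/(I-1)$.
\item Expanding each $V^{\cu}_{\mbox{\tiny NS},-l}$ as a sample variance over $i$ and summing over $l$, I expect the jackknife to collapse into an OH-type form. Namely, $V^{\cu}_{\mbox{\tiny JK}}$ equals the sample variance of $\{L_N(\bX_{\cu,i})\}$ minus $\frac{1}{(K-1)I(I-1)}\sum_i\sum_l\big[(M_{il}-L_N(\bX_{\cu,i}))-(\bar M_{\cdot l}-\bar L_N)\big]^2$. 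This is the section-level analogue of \eqref{eq:one-half}, with extra centering by outer averages.
\item Consequently $V^{\cu}_{\mbox{\tiny JK}}$ is a symmetric quadratic statistic of the i.i.d.\ vectors $Z_i=(\bX_{\cu,i},M_{i1},\dots,M_{iI})$, $i\in[K]$. In particular it is a U-statistic of degree two in the $Z_i$.
\end{itemize}

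\emph{Variance: Hoeffding decomposition.}
\begin{itemize}
\item Decompose $V^{\cu}_{\mbox{\tiny JK}}-V=K^{-1}\sum_i h_1(Z_i)+R_K$, where $R_K$ is the degenerate second-order part.
\item The linear part contributes $K^{-1}\Var{h_1(Z_1)}$. Writing $M_{il}=L(\bX_{\cu,i})+\varepsilon_{il}$, where $\varepsilon_{il}$ are conditionally mean-zero with conditional variance $\sigma^2(\bX_{\cu,i})I/N$, I would expand $\Var{h_1}$ in powers of $N^{-1}$. The leading term is $(\kappa_{\E{\cY\mid\bX_\cu}}-1)V^2$, which matches Proposition~\ref{prop:bias_variance}; the remaining terms are $\cO(N^{-1})$ and are absorbed into $o(K^{-1})$. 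This gives $a>0$, using the assumption $\E{\cY^4}<\infty$.
\item The degenerate part satisfies $\Var{R_K}=\cO(K^{-2})$ times moments of $\varepsilon$. The pieces involving products of two inner errors have variance of order $N^{-2}$.
\end{itemize}

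\emph{Main obstacle.} The hardest step is the bookkeeping of the cross-covariances created by the estimated centering terms $\bar L_N$ and $\bar M_{\cdot l}$. These couple different outer scenarios, and one must show they contribute only $o(K^{-1})$ plus a term bounded by $bN^{-2}$, with a strictly positive constant $b$ coming from the fourth-order conditional moments of the inner errors. I would first verify that all terms linear in a single $\varepsilon_{il}$ cancel between the two pieces of the OH-type representation. After that, I would bound the quadratic-in-$\varepsilon$ terms using conditional independence across sections together with the fourth-moment assumption.
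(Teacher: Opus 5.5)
\textbf{Gap: the positive $bN^{-2}$ term you aim for does not exist.} Your unbiasedness argument is correct and matches the paper's. Your OH-type representation is also correct. With $D_i = L_N(\bX_{\cu,i})-\bar L_N$ and $E_{il} = M_{il}-\bar M_{\cdot l}$, one has $\sum_l (E_{il}-D_i)=0$. So the cross terms drop out when the squared centred leave-one-section-out means are summed over $l$, and $V_{\mbox{\tiny JK}}^{\cu}=\binom{K}{2}^{-1}\sum_{i<k}h(Z_i,Z_k)$ with
\[
h(z,z') = \tfrac12 (L-L')^2 - \tfrac{1}{2I(I-1)}\sum_{l=1}^{I}(W_l-W_l')^2, \qquad W_l = M_l - L .
\]
This route differs from the paper's. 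The paper writes $V_{\mbox{\tiny JK}}^{\cu}=V_{\mbox{\tiny NS}}^{\cu}+\frac{I-1}{I}\sum_l b^{(l)}$ with $b^{(l)}=V_{\mbox{\tiny NS}}^{\cu}-V^{\cu}_{\mbox{\tiny NS},-l}$, and then bounds $\Var{b^{(1)}}$ and the covariances crudely (triangle inequality, Cauchy--Schwarz).

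Your final step cannot work because no term $bN^{-2}$ with $b>0$ is there to extract. For a degree-two U-statistic of i.i.d.\ $Z_i$, the variance is exactly $\binom{K}{2}^{-1}[2(K-2)\zeta_1+\zeta_2]$, where $\zeta_1=\Var{\E{h(Z_1,Z_2)\mid Z_1}}$ and $\zeta_2=\Var{h(Z_1,Z_2)}$. The estimated centerings $\bar L_N$ and $\bar M_{\cdot l}$ already sit inside the pairwise kernel, so the ``bookkeeping'' you anticipate is unnecessary.

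Under $\E{\cY^4}<\infty$, both $\zeta_1$ and $\zeta_2$ are bounded uniformly in $N$. By convexity $\E{(L_N-\mu)^4}\le\E{(\cY-\mu)^4}$ with $\mu=\E{\cY}$, and the same holds for $W_l$ up to a constant. Moreover, $4\zeta_1=\Var{(L_N-\mu)^2-\frac{1}{I(I-1)}\sum_l W_l^2}=(\kappa_{\E{\cY \mid \bX_\cu}}-1)V^2+\cO(N^{-1})$. Hence
\[
\Var{V_{\mbox{\tiny JK}}^{\cu}}=(\kappa_{\E{\cY \mid \bX_\cu}}-1)V^2K^{-1}+\cO(K^{-1}N^{-1})+\cO(K^{-2}) .
\]
Every contribution carries at least a factor $K^{-1}$, including the products of two inner errors you single out. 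Along $K=N^3\to\infty$ the variance is $\cO(N^{-3})$, which is incompatible with $bN^{-2}(1+o(1))$ for any $b>0$. A concrete check: for $N=I=2$, $V_{\mbox{\tiny JK}}^{\cu}$ equals exactly the sample covariance $\frac{1}{K-1}\sum_i(\cY_{i1}-\bar{\cY}_{\cdot 1})(\cY_{i2}-\bar{\cY}_{\cdot 2})$. Its variance is $\cO(K^{-1})$, with no floor depending on $N$.

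So the variance claim can only be established as the upper bound $\Var{V_{\mbox{\tiny JK}}^{\cu}}=\cO(K^{-1})+\cO(N^{-2})$, and that is all the paper's proof actually delivers. The paper's $N^{-2}$ comes from keeping $(\E{b^{(1)}})^2=\cO(N^{-2})$ inside its bound on $\E{(b^{(1)})^2}$. That term cancels exactly in $\Var{b^{(1)}}$; indeed $\Var{b^{(1)}}\le 2\Var{V_{\mbox{\tiny NS}}^{\cu}}+2\Var{V^{\cu}_{\mbox{\tiny NS},-1}}=\cO(K^{-1})$.

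If you stop at the U-statistic variance formula, your route proves unbiasedness together with the sharper, correct asymptotics above. These imply the paper's bound a fortiori. They also show that with bounded $N$ the JK estimator attains an $\cO(T^{-1})$ MSE, contrary to the discussion following the proposition.
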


Proposition \ref{prop:jack} implies that, although $V_{\mbox{\tiny JK}}^{\cu}$ is unbiased for $V$, its variance is affected by the bias-reduction terms $V_{\mbox{\tiny NS},-l}^{\cu}$ for $l \in [I]$ in \eqref{eq:jack}. This variance behavior leads to an asymptotically optimal budget allocation $K^{\ast}=\cO(T^{2/3})$ and $N^{\ast}=\cO(T^{1/3})$, and hence the MSE convergence rate of $V_{\mbox{\tiny JK}}^{\cu}$ remains $\cO(T^{-2/3})$, the same as $V^{\cu}_{\mbox{\tiny NS}}$. In other words, the jackknife construction removes bias but does not improve the MSE convergence rate. This may appear to conflict with \cite{liang2024fast}, who reported an MSE convergence rate of at least $\cO(T^{-4/5})$ for their jackknife estimator. The difference stems from the estimation target: Liang et al.\ (2024) considered $\E{(\E{\cY \mid \bX_\cu} - z)^2}$ with a known constant $z$, whereas here $z=\E{\cY}$ is unknown and must be estimated.
In our setting, $\E{\cY}$ is estimated in $V_{\mbox{\tiny NS}}^{\cu}$ (resp., in $V^{\cu}_{\mbox{\tiny NS},-l}$) by $K^{-1}\sum_{i=1}^{K}L_N(\bX_{\cu, i})$ (resp., $K^{-1}\sum_{i=1}^{K}L_{N, -l}(\bX_{\cu, i})$). Here, $L_N(\bX_{\cu, i})$ and $L_{N, -l}(\bX_{\cu, i})$ estimate $\E{\cY \mid \bX_{\cu,i}}$ for each $i \in [K]$, and $L_{N,-l}(\bX_{\cu,i})$ has the same functional form as $L_{N}(\bX_{\cu,i})$ but excludes the outputs in the $l$th section.
Estimating $\E{\cY}$ and $\E{\cY \mid \bX_{\cu,i}}$ from overlapping observations induces additional correlations between $V^{\cu}_{\mbox{\tiny NS}}$ and $V^{\cu}_{\mbox{\tiny NS},-l}$ for each $l \in [I]$ in \eqref{eq:jack}, compared with the estimation target considered in \cite{liang2024fast}. As a result, the best attainable variance decay for $V^{\cu}_{\mbox{\tiny JK}}$ matches that of $V^{\cu}_{\mbox{\tiny NS}}$, achieved under the optimal allocation $K^{\ast}=\cO(T^{2/3})$ and $N^{\ast}=\cO(T^{1/3})$.

Implementing the asymptotically optimal allocation for the JK estimator in \eqref{eq:jack} requires estimating the constants $a$ and $b$. Related problems have been studied via bootstrap-based methods \citep{zhang2022bootstrap,liang2024fast}, but in our setting the estimation of $a$ and $b$ is more involved and cannot be handled directly by their approaches. We leave this issue for future research. In the present study, we adopt the scaling $K^{\ast}=\cO(T^{2/3})$ and $N^{\ast}=\cO(T^{1/3})$ for $V_{\mbox{\tiny JK}}^{\cu}$ in our numerical experiments.

\subsubsection{Split Jackknife Estimator}\label{subsec:split_jack}

Motivated by the analysis in Subsection~\ref{subsec:unbias_jack}, we propose the split jackknife (\textbf{SJ}) estimator. The SJ estimator follows the same jackknife construction as $V_{\mbox{\tiny JK}}^{\cu}$ in \eqref{eq:jack}, with one key modification: instead of using $K^{-1}\sum_{i=1}^{K}L_N(\bX_{\cu, i})$ and $K^{-1}\sum_{i=1}^{K}L_{N, -l}(\bX_{\cu, i})$ (for $l \in [I]$) to estimate $\E{\cY}$, we estimate $\E{\cY}$ using an independent dataset. This modification helps reduce the correlation among the components of the JK estimator.

Specifically, the total budget is split to generate two datasets: a preliminary dataset $\cD_{\mbox{\tiny pre}}$ for estimating the mean $\E{\cY}$ and an estimation dataset $\cD_{\mbox{\tiny est}}$ for estimating $V$.
Let $\widehat{\mu} \coloneqq J^{-1}\sum_{j=1}^{J}\cY_j$ denote the sample mean computed from $\cD_{\mbox{\tiny pre}}$, where $J \coloneqq |\cD_{\mbox{\tiny pre}}|$. Following a structure analogous to the JK estimator, the SJ estimator is defined as
\begin{align}\label{eq:fse}
	V^{\cu}_{\mbox{\tiny SJ}} &= \frac{I}{K}\sum_{i=1}^{K} (L_N(\bX_{\cu,i}) - \widehat{\mu})^2 - \frac{I-1}{I} \sum_{l=1}^{I} \frac{1}{K}\sum_{i=1}^{K}(L_{N,-l}(\bX_{\cu,i})-\widehat{\mu})^2  \nonumber \\
	&= \frac{1}{K}\sum_{i=1}^{K}  \left( I \cdot (L_N(\bX_{\cu,i}) - \widehat{\mu})^2 - \frac{I-1}{I} \sum_{l=1}^{I} (L_{N,-l}(\bX_{\cu,i})-\widehat{\mu})^2  \right) \nonumber \\
	&\coloneqq \frac{1}{K}\sum_{i=1}^{K}V_{\mbox{\tiny SJ},i}^{\cu} \ .
\end{align}

A key advantage of the SJ estimator is that, conditional on $\cD_{\mbox{\tiny pre}}$, the per-scenario contributions $V_{\mbox{\tiny SJ},i}^{\cu}$ in \eqref{eq:fse} are independent across $i$. This independence enables $V^{\cu}_{\mbox{\tiny SJ}}$ to achieve a faster variance decay than the JK estimator $V^{\cu}_{\mbox{\tiny JK}}$ in \eqref{eq:jack} by choosing the size of $\cD_{\mbox{\tiny pre}}$ appropriately. The following result summarizes the properties of $V^{\cu}_{\mbox{\tiny SJ}}$, with the proof provided in Appendix \ref{app:new_jack_proof}.

\begin{proposition}\label{prop:new_jack}
	The SJ estimator $V^{\cu}_{\mbox{\tiny SJ}}$ satisfies $\E{V^{\cu}_{\mbox{\tiny SJ}}} - V= J^{-1} \Var{\cY}$ and $\Var{V^{\cu}_{\mbox{\tiny SJ}}} = cK^{-1} + dJ^{-2}+ o(K^{-1})  + o(J^{-2})$, where $c$ and $d$ are positive constants.
\end{proposition}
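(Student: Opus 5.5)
Conditioning on the preliminary sample $\cD_{\mbox{\tiny pre}}$ will be the central device, since $\widehat{\mu}$ is independent of $\cD_{\mbox{\tiny est}}$. Write $L_i\coloneqq L(\bX_{\cu,i})$, $\mu\coloneqq\E{\cY}$, and $\sigma^2(\bX_\cu)\coloneqq\Var{\cY\mid\bX_\cu}$, so that $\E{\sigma^2(\bX_\cu)}=\Var{\cY}-V$. For fixed $m\in\mathbb{R}$, the per-scenario quantity $(L_N(\bX_{\cu,i})-m)^2$ has conditional mean $(L_i-m)^2+\sigma^2(\bX_{\cu,i})/N$ given $\bX_{\cu,i}$. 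Likewise, $(L_{N,-l}(\bX_{\cu,i})-m)^2$ has conditional mean $(L_i-m)^2+\sigma^2(\bX_{\cu,i})/(N(I-1)/I)$. Substituting into the definition of $V^{\cu}_{\mbox{\tiny SJ},i}$, the coefficient of $(L_i-m)^2$ is $I-(I-1)=1$. The coefficient of $\sigma^2/N$ is $I-\frac{I-1}{I}\cdot I\cdot\frac{I}{I-1}=0$. Hence $\E{V^{\cu}_{\mbox{\tiny SJ},i}\mid \widehat{\mu}=m}=\E{(L_i-m)^2}=V+(\mu-m)^2$. Taking expectation over $\widehat{\mu}$ and using $\E{(\widehat{\mu}-\mu)^2}=J^{-1}\Var{\cY}$ gives the bias $J^{-1}\Var{\cY}$ exactly.

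For the variance, I will use the law of total variance conditioning on $\widehat{\mu}$:
\[
\Var{V^{\cu}_{\mbox{\tiny SJ}}}=\E{\Var{V^{\cu}_{\mbox{\tiny SJ}}\mid\widehat{\mu}}}+\Var{\E{V^{\cu}_{\mbox{\tiny SJ}}\mid\widehat{\mu}}} .
\]
\emph{Second term.} It equals $\Var{(\widehat{\mu}-\mu)^2}=\E{(\widehat{\mu}-\mu)^4}-J^{-2}\Var{\cY}^2$. Under the finite fourth moment assumption, $\E{(\widehat{\mu}-\mu)^4}=3J^{-2}\Var{\cY}^2+\cO(J^{-3})$. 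This yields $2J^{-2}\Var{\cY}^2+o(J^{-2})$, so $d=2\Var{\cY}^2$.

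\emph{First term.} Conditional on $\widehat{\mu}=m$, the $V^{\cu}_{\mbox{\tiny SJ},i}$ are i.i.d., so the conditional variance is $K^{-1}\Var{V^{\cu}_{\mbox{\tiny SJ},1}\mid m}$. I then expand $V^{\cu}_{\mbox{\tiny SJ},1}$ by writing $L_N=L_1+\bar\epsilon$ and $L_{N,-l}=L_1+\bar\epsilon_{-l}$, where $\bar\epsilon,\bar\epsilon_{-l}$ are inner-sample mean noises of order $N^{-1/2}$. This gives
\[
V^{\cu}_{\mbox{\tiny SJ},1}=(L_1-m)^2+2(L_1-m)\Big(I\bar\epsilon-\tfrac{I-1}{I}\textstyle\sum_l\bar\epsilon_{-l}\Big)+\Big(I\bar\epsilon^2-\tfrac{I-1}{I}\textstyle\sum_l\bar\epsilon_{-l}^2\Big).
\]
The identity $\sum_l\bar\epsilon_{-l}=\frac{I}{I-1}\cdot\frac{I-1}{I}\cdot I\,\bar\epsilon$ holds because each observation is omitted exactly once. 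It makes the middle linear term collapse to $2(L_1-m)\bar\epsilon$, whose variance is $\cO(N^{-1})$. The quadratic noise term has conditional mean zero and second moment $\cO(N^{-2})$ by the fourth-moment bound. The cross covariances are also bounded. Therefore
\[
\Var{V^{\cu}_{\mbox{\tiny SJ},1}\mid m}=\Var{(L_1-m)^2}+\cO(N^{-1}),
\]
uniformly in a way that is integrable in $m$. Averaging over $\widehat{\mu}$ with $\widehat{\mu}\to\mu$ in $L^2$ gives $K^{-1}\Var{(L-\mu)^2}+o(K^{-1})$, provided $N\to\infty$. So $c=\Var{(L(\bX_\cu)-\mu)^2}=(\kappa_{\E{\cY\mid\bX_\cu}}-1)V^2$.

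The main obstacle is this last step: controlling the $m$-dependence and the $\cO(K^{-1}N^{-1})$ and $\cO(K^{-1}J^{-1})$ remainders. The bound $\Var{(L-m)^2}=\Var{(L-\mu)^2}+\cO(|m-\mu|+|m-\mu|^2)$ requires third and fourth moments of $L$, which hold by Jensen's inequality from $\E{\cY^4}<\infty$. The remainders are then absorbed into $o(K^{-1})$ since $N,J\to\infty$. Positivity of $c$ assumes $(L-\mu)^2$ is nondegenerate, and $d=2\Var{\cY}^2>0$.
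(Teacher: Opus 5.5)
Your bias computation and your outer term $\Var{\E{V^{\cu}_{\mbox{\tiny SJ}}\mid\widehat{\mu}}}=\Var{(\widehat{\mu}-\mu)^2}=2J^{-2}\Var{\cY}^2+\cO(J^{-3})$ are correct. They follow the paper's skeleton: condition on $\cD_{\mbox{\tiny pre}}$, apply the law of total variance, and use that scenarios are conditionally i.i.d. Your explicit $d=2\Var{\cY}^2$ is sharper than the paper, which only asserts some positive $d$.

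The gap is in the inner term. You close it only ``provided $N\to\infty$,'' absorbing an $\cO(K^{-1}N^{-1})$ remainder into $o(K^{-1})$ and concluding $c=(\kappa_{\E{\cY\mid\bX_\cu}}-1)V^2$. The proposition has to hold with $N$ fixed. The paper uses it precisely to justify running SJ with a fixed inner size ($N=10$ in the experiments) and $J\asymp K\asymp T$, which gives an $\cO(T^{-1})$ MSE. If $N$ had to diverge, then $K=(T-J)/N=o(T)$ and $cK^{-1}$ would not be $\cO(T^{-1})$. For fixed $N$, your linear-noise contribution $4K^{-1}N^{-1}\E{(L(\bX_\cu)-\mu)^2\Var{\cY\mid\bX_\cu}}$ is of exact order $K^{-1}$. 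So the absorption step fails, and your value of $c$ is wrong in that regime.

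A smaller point: merely ``bounded'' cross covariances would not give the $\cO(N^{-1})$ you state. The ones involving $(L_1-m)^2$ in fact vanish, because $\bar\epsilon$ and the quadratic noise term have zero conditional mean given $\bX_{\cu,1}$. The remaining one is $\cO(N^{-3/2})$ by Cauchy--Schwarz.

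The fix needs no expansion in $N$, and it also removes the $m$-dependence you flagged as the main obstacle. Write $V^{\cu}_{\mbox{\tiny SJ},1}(m)$ for the per-scenario term centered at $m$. Expanding each square around $\mu$ and using your identity $\sum_{l=1}^{I}L_{N,-l}(\bX_{\cu,1})=I\,L_N(\bX_{\cu,1})$ gives exactly
\[
V^{\cu}_{\mbox{\tiny SJ},1}(m)=V^{\cu}_{\mbox{\tiny SJ},1}(\mu)+2(\mu-m)\bigl(L_N(\bX_{\cu,1})-\mu\bigr)+(\mu-m)^2 .
\]
Since $\widehat{\mu}$ is independent of $\cD_{\mbox{\tiny est}}$ with mean $\mu$, the cross term $4(\mu-\widehat{\mu})\Cov{V^{\cu}_{\mbox{\tiny SJ},1}(\mu)}{L_N(\bX_{\cu,1})}$ averages to zero. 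Therefore
\[
\E{\Var{V^{\cu}_{\mbox{\tiny SJ},1}\mid\widehat{\mu}}}=\Var{V^{\cu}_{\mbox{\tiny SJ},1}(\mu)}+\frac{4\Var{L_N(\bX_{\cu})}\Var{\cY}}{J} .
\]
So for every fixed $N$ the claim holds with $c=c_N\coloneqq\Var{V^{\cu}_{\mbox{\tiny SJ},1}(\mu)}$. The remainder is $\cO(K^{-1}J^{-1})=o(K^{-1})$, which requires only $J\to\infty$.

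By the conditional-mean-zero orthogonality above, $c_N=(\kappa_{\E{\cY\mid\bX_\cu}}-1)V^2+4N^{-1}\E{(L(\bX_\cu)-\mu)^2\Var{\cY\mid\bX_\cu}}+o(N^{-1})$. This recovers your constant only as $N\to\infty$. The exact formula is also tighter than the paper's own argument, which bounds the conditional variance by $(c_4+c_5\widehat{\mu}^2)/K$ with $N$-uniform constants and then asserts the expansion.
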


Proposition \ref{prop:new_jack} shows that the bias and variance of the SJ estimator $V_{\mbox{\tiny SJ}}^{\cu}$ are independent of the inner-level sample size $N$. Instead, the variance decreases with the outer-level sample size $K$ and the preliminary dataset size $J$, while the bias decreases only with $J$.
This decoupling is advantageous under a total budget $T = J + KN$: increasing $J$ need not substantially restrict the choice of $K$ and $N$. By fixing $N$ and choosing $J$ to grow at the same rate as $K$, $V_{\mbox{\tiny SJ}}^{\cu}$ attains an $\cO(T^{-1})$ MSE convergence rate, improving upon both the NS estimator in \eqref{eq:nest} and the JK estimator in \eqref{eq:jack}. More broadly, this result suggests that, within nested simulation, decoupling the estimation of $\E{\cY}$ from that of $\E{\cY \mid \bX_{\cu}}$ can be an effective way to mitigate dependence and improve efficiency.

We close this section by noting that all theoretical results above are derived under CMC sampling at both the outer and inner levels of the nested simulation framework. While CMC serves as a natural baseline, alternative sampling strategies such as LHS may lead to different convergence behaviors. In the next section, we examine how replacing CMC with LHS affects the performance of the estimators of interest.

\section{Effects of Latin Hypercube Sampling on Nested Simulation Estimators of the Variance of a Conditional Expectation}
\label{sec:ns_lhs}

Latin hypercube sampling (LHS) is a classical variance-reduction technique that replaces CMC sampling with a stratified design, often improving efficiency for MC estimators of high-dimensional integrals. In the context of Sobol' index estimation via nested simulation, performance depends on both the outer-level sampling scheme and the accuracy of inner-level approximations of conditional quantities. Because LHS imposes structure on the sample of outer-level scenarios, its interaction with nested simulation estimators is not automatic: the induced dependence can change the bias and variance behavior relative to CMC.

In this section, we examine Sobol' index estimators under LHS. Subsection~\ref{subsec:lhs-pfcr} studies the impact of LHS on the PF and CR estimators. Subsection~\ref{subsec:standard_nest} analyzes the effect of LHS on the NS estimator, while Subsection~\ref{subsec:bias_correct_nest} investigates its influence on several bias-reduction nested simulation estimators, namely the JK, SJ, and OH estimators.

\subsection{Effects of Latin Hypercube Sampling on the PF and CR Estimators}
\label{subsec:lhs-pfcr}
In this subsection, we study the PF estimator in~\eqref{eq:pick_freeze} and the CR estimator in~\eqref{eq:corr2} when the underlying sampling scheme is changed from CMC to LHS. The resulting estimators are referred to as the L--PF and L--CR estimators, respectively. We begin with the L--PF estimator.

Recall from Section~\ref{subsec:pick_freeze} that the PF estimator is constructed using $\cY = f(\bX_{\cu}, \bX_{-\cu})$ and $\cY^{\prime} = f(\bX_{\cu}, \bX_{-\cu}^{\prime})$, where $\bX_{-\cu}^{\prime}$ is an independent copy of $\bX_{-\cu}$. Under LHS, for a given outer-level sample size $K$, we first generate the outer-level scenarios $\{\lhs{\bX}_{\cu,i}\}_{i=1}^K$. Then, conditional on each $\lhs{\bX}_{\cu,i}$, we generate two inner-level input subvectors, $\lhs{\bX}_{-\cu,i}$ and $\lhs{\bX}^{\prime}_{-\cu,i}$, via LHS. The corresponding model outputs are $\cY_i = f(\lhs{\bX}_{\cu,i}, \lhs{\bX}_{-\cu,i})$ and $\cY_i^{\prime} = f(\lhs{\bX}_{\cu,i}, \lhs{\bX}^{\prime}_{-\cu,i})$. In practice, the aforementioned input subvectors are constructed from Latin hypercube design matrices. Specifically, we generate two $K\times p$ matrices, $\bA^{(1)}$ and $\bA^{(2)}$, using LHS, and define
\begin{equation}\label{eq:lhs_pf_x}
	\lhs{\bX}_{\cu,i} \coloneqq \bA_{i,\cu}^{(1)}, \qquad
	\lhs{\bX}_{-\cu,i} \coloneqq \bA_{i,-\cu}^{(1)}, \qquad
	\lhs{\bX}^{\prime}_{-\cu,i} \coloneqq \bA_{i,-\cu}^{(2)}, \qquad i \in [K] \ ,
\end{equation}
where $\bA_{i,\cu}^{(1)}$, $\bA_{i,-\cu}^{(1)}$, and $\bA_{i,-\cu}^{(2)}$ are the subvectors of the $i$th row of $\bA^{(1)}$ and $\bA^{(2)}$ corresponding to the column index sets  $\cu$ and $-\cu$, as appropriate.

The PF estimator under LHS (\textbf{L--PF}) retains the same form as~\eqref{eq:pick_freeze}, where the input combinations are constructed via LHS according to~\eqref{eq:lhs_pf_x}:
\begin{equation}\label{eq:pick_freeze_lhs}
	V_{\mbox{\tiny L--PF}}^{\cu}=\frac{1}{K} \sum_{i=1}^{K} f(\lhs{\bX}_{\cu,i}, \lhs{\bX}_{-\cu,i}) f(\lhs{\bX}_{\cu,i}, \lhs{\bX}_{-\cu,i}^{\prime})
	- \left(\frac{1}{2K} \sum_{i=1}^{K}\left[f(\lhs{\bX}_{\cu,i}, \lhs{\bX}_{-\cu,i})+f(\lhs{\bX}_{\cu,i}, \lhs{\bX}_{-\cu,i}^{\prime})\right]\right)^2 .
\end{equation}

The following result shows that, relative to the PF estimator under CMC, LHS does not improve the convergence rate of either the bias or the variance of the L--PF estimator. The proof is provided in Appendix~\ref{app:proof_pf_lhs_bias_variance}.
\begin{proposition}\label{prop:pf_lhs_bias_variance}
	The bias of $V_{\mbox{\tiny L--PF}}^{\cu}$ satisfies $\E{V_{\mbox{\tiny L--PF}}^{\cu}} - V = \cO(K^{-1})$, and its variance satisfies $\Var{V_{\mbox{\tiny L--PF}}^{\cu}} = \cO(K^{-1})$.
\end{proposition}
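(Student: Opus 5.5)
We prove both bounds using only the marginal structure of LHS and a bound on covariances between distinct rows. Write $\cY_i=f(\lhs{\bX}_{\cu,i},\lhs{\bX}_{-\cu,i})$ and $\cY_i'=f(\lhs{\bX}_{\cu,i},\lhs{\bX}'_{-\cu,i})$. Each row of an LHS matrix is marginally distributed as $\bX$. Since $\bA^{(1)}$ and $\bA^{(2)}$ are generated independently, row $i$ of $\bA^{(2)}$ restricted to $-\cu$ is independent of row $i$ of $\bA^{(1)}$. Hence each pair $(\cY_i,\cY_i')$ has exactly the pick--freeze joint law of $(\cY,\cY')$. In particular $\E{\cY_i\cY_i'}=\E{\cY\cY'}$ and $\E{\cY_i}=\E{\cY_i'}=\E{\cY}=:\mu$. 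Consequently the first term of \eqref{eq:pick_freeze_lhs} is unbiased for $\E{\cY\cY'}$. For the second term, set $\bar M\coloneqq (2K)^{-1}\sum_i(\cY_i+\cY_i')$; then $\E{\bar M^2}=\mu^2+\Var{\bar M}$, so
\[
\E{V_{\mbox{\tiny L--PF}}^{\cu}}-V=-\Var{\bar M}.
\]
The bias claim therefore reduces to showing $\Var{\bar M}=\cO(K^{-1})$.

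To get this, expand $\Var{\bar M}=(4K^2)^{-1}\sum_{i,k}\Cov{\cY_i+\cY_i'}{\cY_k+\cY_k'}$. The $K$ diagonal terms are bounded by $4\Var{\cY}$, which contributes $\cO(K^{-1})$. For the off-diagonal terms we use the standard LHS fact (McKay et al.; Stein 1987; Owen 1992): for square-integrable $g$, distinct rows $i\neq k$ of one LHS design satisfy $\Cov{g(\text{row}_i)}{g(\text{row}_k)}=-K^{-1}\sum_{m}\Var{g_m}+o(K^{-1})$. Here $g_m$ are the main-effect components, and this remainder is the standard one for LHS. Equivalently, the variance of the LHS sample mean equals $K^{-1}(\sigma^2-\sum_m\Var{g_m})+o(K^{-1})\le K^{-1}\sigma^2+o(K^{-1})$. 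The quantity $\cY_i+\cY_i'$ is a function of row $i$ of $\bA^{(1)}$ and row $i$ of $\bA^{(2)}$. The concatenated $2p$-column matrix is not itself an LHS design, but each block is, and the blocks are independent. Conditioning on $\bA^{(2)}$ and applying the LHS variance bound to the $\bA^{(1)}$-mean, then handling the conditional mean the same way over $\bA^{(2)}$, gives $\Var{\bar M}\le C K^{-1}$ with $C$ depending on $\Var{\cY}$. Alternatively one can write $\sum_{i\neq k}$ covariances as $\cO(K)$ in total, which is exactly the statement that the LHS mean variance is $\cO(K^{-1})$. This proves the bias bound.

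For the variance, write $V_{\mbox{\tiny L--PF}}^{\cu}=A-\bar M^2$ with $A=K^{-1}\sum_i\cY_i\cY_i'$. We use $\Var{A-\bar M^2}\le 2\Var{A}+2\Var{\bar M^2}$. The term $\Var{A}$ is the variance of an LHS-type mean of the function $h=\cY\cY'$, which has finite second moment by $\E{\cY^4}<\infty$ and Cauchy--Schwarz. The same two-block conditioning argument as above then gives $\Var{A}=\cO(K^{-1})$. For $\bar M^2$ we write $\bar M^2-\mu^2=(\bar M-\mu)(\bar M+\mu)$, so that
\[
\Var{\bar M^2}\le \E{(\bar M^2-\mu^2)^2}\le 2\mu^2\cdot 4\,\E{(\bar M-\mu)^2}+2\,\E{(\bar M-\mu)^4}.
\]
The first piece is $\cO(K^{-1})$ by the previous step. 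For the fourth-moment piece we need $\E{(\bar M-\mu)^4}=\cO(K^{-2})$ (or even merely $\cO(K^{-1})$). This should follow from the fourth-moment assumption together with the negative-dependence structure of LHS rows. Alternatively, it follows from the martingale/permutation representation of LHS with a Rosenthal-type inequality, or by citing the LHS CLT and moment results of Owen (1992) and Loh (1996).

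The main obstacle is rigorously controlling the cross-row dependence: both the $\cO(K^{-1})$ variance for functions of two independently generated LHS blocks, and the fourth central moment of the LHS mean. We will handle the former by conditioning on one design, as described above. For the latter we will first try a direct expansion $\E{(\bar M-\mu)^4}=K^{-4}\sum_{i,j,k,l}\E{\prod(\cdot)}$. In this expansion, joint moments over distinct rows are $\cO(K^{-1})$ per extra distinct index under LHS, since the joint law of $r$ distinct rows differs from the product law by $\cO(K^{-1})$ in total variation on stratum indices. Hence the total is $\cO(K^{-2})$, which suffices for the variance bound.
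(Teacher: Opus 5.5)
Your skeleton matches the paper's proof: the identity $\E{V_{\mbox{\tiny L--PF}}}-V=-\Var{\bar M}$, the split $\Var{A-\bar M^2}\le 2\Var{A}+2\Var{\bar M^2}$, and $\cO(K^{-1})$ variance for LHS means of $\cY\cY'$ and $(\cY+\cY')/2$.

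\textbf{The stacked design is an LHS design.} The ``main obstacle'' you identify does not exist. Putting the columns of two independent $K\times p$ LHS matrices side by side \emph{is} a $K\times 2p$ LHS matrix. An LHS design is just an independent uniform random permutation of the $K$ strata in each column, plus independent jitter within strata. Independence of $\bA^{(1)}$ and $\bA^{(2)}$ gives exactly that for the stacked columns. Since $(\cY_i,\cY_i')$ is a fixed function of row $i$ of the stacked design, Stein's Theorem~1 (or Owen's bound $\sigma^2/(K-1)$ on the variance of an LHS mean) applies directly to $h_1$ and $h_2$. This is what the paper does.

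Your conditioning detour is also not valid as written. Given $\bA^{(2)}$, the summand for row $i$ is $\psi_i$ applied to row $i$ of $\bA^{(1)}$, where $\psi_i$ changes with $i$ through row $i$ of $\bA^{(2)}$. So the single-integrand LHS variance formula does not cover the conditional-variance half of that argument. You would need an extra argument there, and the stacking observation makes it unnecessary.

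\textbf{The squared-mean term.} Here you genuinely depart from the paper, and your route is the more careful one. The paper uses a delta-method expansion with remainder $\cO(\Var{\bar Z}^2)$, justified only by finite second moments. For $g(t)=t^2$ that remainder is exactly $4\mu\E{(\bar Z-\mu)^3}+\Var{(\bar Z-\mu)^2}$. Bounding it requires third- and fourth-moment control of the LHS mean, which the paper never supplies. Your exact bound needs only $\E{(\bar M-\mu)^4}=\cO(K^{-1})$.

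That bound does hold, but not quite for the reason you give.
\begin{itemize}
\item For four distinct rows, the joint density is a constant $1+\cO(K^{-1})$ times the indicator that no two rows share a stratum in any column.
\item The centered product integrates to zero under the product law. So the term equals minus that constant times the integral of the product over the collision set.
\item A collision of two given rows in one given column contributes $\cO(K^{-1})$, by Cauchy--Schwarz within strata: $\sum_s\bigl(\int_{s}|g|\bigr)^2\le K^{-1}\int g^2$.
\end{itemize}
This gives $\cO(K^{-1})$ per all-distinct term, hence $\cO(K^{-1})$ in total, not the $\cO(K^{-2})$ you assert. Total-variation closeness of stratum indices alone would not even control unbounded integrands. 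Reaching $\cO(K^{-2})$ would require using the centering to cancel single collisions.

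Terms with repeated indices are $\cO(1)$ each, by the bounded density ratio and $\E{\cY^4}<\infty$, and contribute $\cO(K^{-1})$ overall. So $\cO(K^{-1})$ is what this argument delivers, and it is all your variance bound needs.
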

Proposition~\ref{prop:pf_lhs_bias_variance} implies that, given a total computational budget $T$, LHS preserves the $\cO(T^{-1})$ MSE convergence rate of the PF estimator under CMC.

We next analyze the \textbf{L--CR} estimator, i.e., the CR estimator constructed under LHS. As in the L--PF case, we first generate two independent $K\times p$ Latin hypercube design matrices, $\bA^{(1)}$ and $\bA^{(2)}$. We further generate a third $K\times p$ Latin hypercube design matrix $\bA^{(3)}$, independent of both $\bA^{(1)}$ and $\bA^{(2)}$, and define
\begin{equation}\label{eq:lhs_cr_x}
	\lhs{\bX}_{\cu,i}^{\prime\prime} \coloneqq \bA_{i,\cu}^{(3)}, \qquad
	\lhs{\bX}_{\cu,i}^{\prime} \coloneqq \bA_{i,\cu}^{(2)} \ .
\end{equation}

The L--CR estimator retains the same  form as~\eqref{eq:corr2}, where the input combinations are constructed via LHS according to \eqref{eq:lhs_pf_x} and \eqref{eq:lhs_cr_x}:
\begin{equation}\label{eq:corr2_lhs}
	V_{\mbox{\tiny L--CR}}^{\cu} = \frac{1}{K} \sum_{i=1}^K
	\left( f(\lhs{\bX}_{\cu,i}, \lhs{\bX}_{-\cu,i} ) - f(\lhs{\bX}_{\cu,i}^{\prime\prime}, \lhs{\bX}_{-\cu,i} ) \right) \cdot
	\left( f(\lhs{\bX}_{\cu,i}, \lhs{\bX}_{-\cu,i}^{\prime} ) - f( \lhs{\bX}_{\cu,i}^{\prime}, \lhs{\bX}_{-\cu,i}^{\prime} )\right) .
\end{equation}

The following result shows that, relative to the CR estimator under CMC, LHS does not improve the variance convergence rate of the L--CR estimator. The proof is provided in Appendix~\ref{app:proof_cr_lhs_bias_variance}.
\begin{proposition}\label{prop:cr_lhs_bias_variance}
	The estimator $V_{\mbox{\tiny L--CR}}^{\cu}$ is unbiased, and its variance satisfies $\Var{V_{\mbox{\tiny L--CR}}^{\cu}} = \cO(K^{-1})$.
\end{proposition}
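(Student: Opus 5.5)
Unbiasedness reduces to the marginal behaviour of a single summand, and the variance bound to a Cauchy--Schwarz bound on the $K^2$ covariance terms; the second step is where the real work is.

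\textbf{Unbiasedness.} Write the $i$th summand of $V_{\mbox{\tiny L--CR}}^{\cu}$ as $Z_i\coloneqq (A_i-B_i)(C_i-D_i)$, with
\[
A_i=f(\lhs{\bX}_{\cu,i},\lhs{\bX}_{-\cu,i}),\quad B_i=f(\lhs{\bX}^{\prime\prime}_{\cu,i},\lhs{\bX}_{-\cu,i}),\quad C_i=f(\lhs{\bX}_{\cu,i},\lhs{\bX}^{\prime}_{-\cu,i}),\quad D_i=f(\lhs{\bX}^{\prime}_{\cu,i},\lhs{\bX}^{\prime}_{-\cu,i}).
\]
Each row of a Latin hypercube design matrix is marginally distributed as $\bX$, with the usual independent-margins setting implicit in LHS. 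The three design matrices $\bA^{(1)},\bA^{(2)},\bA^{(3)}$ are mutually independent. Within a single row, the subvectors indexed by $\cu$ and $-\cu$ are independent, since the columns of an LHS design are generated by independent permutations and independent jitters. Consequently, for fixed $i$ the four vectors $\lhs{\bX}_{\cu,i}$, $\lhs{\bX}^{\prime}_{\cu,i}$, $\lhs{\bX}^{\prime\prime}_{\cu,i}$, $\lhs{\bX}_{-\cu,i}$, $\lhs{\bX}^{\prime}_{-\cu,i}$ are mutually independent with the correct marginals. The joint law of $(A_i,B_i,C_i,D_i)$ therefore coincides with that of one CMC summand of \eqref{eq:corr2}. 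I then invoke the unbiasedness of the CR summand (Theorem~5.1 of \cite{owen2013better}), or verify it directly: expanding $\E{Z_i}$ gives $\E{A_iC_i}-\E{A_iD_i}-\E{B_iC_i}+\E{B_iD_i}$. The first term equals $V+\E{\cY}^2$ by the pick--freeze identity, and the other three pairs share no input, so each equals $\E{\cY}^2$. Summing and averaging over $i$ yields $\E{V_{\mbox{\tiny L--CR}}^{\cu}}=V$.

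\textbf{Variance.} I expand
\[
\Var{V_{\mbox{\tiny L--CR}}^{\cu}}=K^{-2}\sum_{i}\Var{Z_i}+K^{-2}\sum_{i\ne i'}\Cov{Z_i}{Z_{i'}} .
\]
The diagonal part is $K^{-1}\Var{Z_1}$, which is finite because Cauchy--Schwarz and $\E{\cY^4}<\infty$ give $\E{Z_i^2}<\infty$. For the off-diagonal part I cannot simply use boundedness, since that would give $\cO(1)$. Instead I use the standard LHS covariance structure: for $i\ne i'$, the pair of rows of an LHS design has a joint density equal to the product density times a factor $\prod_k\big(1+\cO(K^{-1})\big)$ correction. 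More precisely, the cells in each coordinate are sampled without replacement, and the within-pair covariance of any square-integrable function $g$ of the rows satisfies
\[
\Cov{g(\text{row } i)}{g(\text{row } i')}=-\frac{1}{K-1}\,(\text{main-effect variance terms})+o(K^{-1}),
\]
as in the classical analysis of \cite{mckay1979comparison} and Owen's LHS variance formula.

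Applying this coordinatewise to $Z_i$, which is a function of the $i$th rows of the three independent designs, I will aim to show $\Cov{Z_i}{Z_{i'}}=\cO(K^{-1})$ uniformly in $i\ne i'$. The tool is a second Cauchy--Schwarz-type bound of the form $|\Cov{Z_i}{Z_{i'}}|\le \frac{C}{K-1}\E{Z_1^2}$. This follows from writing the joint density of two LHS rows relative to the product density and bounding its deviation in $L^2$, using the $\E{\cY^4}$ assumption. Summing over the $K(K-1)$ off-diagonal pairs and dividing by $K^2$ then gives $\cO(K^{-1})$, so $\Var{V_{\mbox{\tiny L--CR}}^{\cu}}=\cO(K^{-1})$. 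Moreover, $Z_i$ has nonvanishing non-additive components in general, so the rate is not improved beyond this.

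The main obstacle is making this off-diagonal covariance bound rigorous under only a fourth-moment assumption on $f$. If the density-ratio argument is awkward, I would instead use the exact LHS identity $\sum_{i'}\Cov{Z_i}{Z_{i'}}=\Var{\sum_{i'}Z_{i'}}/K$ type decomposition via the ANOVA expansion of $Z$ in the LHS-induced coordinates. In that expansion, the main effects are exactly cancelled (contributing $\le 0$) and the interaction terms contribute at most $K^{-1}\Var{Z_1}$, following the argument of \cite{tissot2012estimating} for pick--freeze estimators under LHS.
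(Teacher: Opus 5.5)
Your unbiasedness argument is correct and coincides with the paper's. The row-wise marginals of the three independent designs reproduce the CMC joint law of $(A_i,B_i,C_i,D_i)$, and three of the four cross terms factor into $\E{\cY}^2$. The gap is in the mechanism you propose for the off-diagonal covariances. Your claim that two distinct LHS rows have joint density equal to the product density times $\prod_k(1+\cO(K^{-1}))$ is false pointwise. In each coordinate the pair density is $K/(K-1)$ when the two points lie in different strata and $0$ when they lie in the same stratum, which is a set of product measure $1/K$. Over the $d=2p+|\cu|$ relevant coordinates, the density ratio $h$ therefore satisfies $\|h-1\|_{L^2}^2=(K/(K-1))^{d}-1\asymp d/K$. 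Cauchy--Schwarz against this $L^2$ norm only gives $|\Cov{Z_i}{Z_{i'}}|\le \E{Z_1^2}\,\cO(K^{-1/2})$. Summing over the $K(K-1)$ pairs and dividing by $K^2$ then yields $\Var{V_{\mbox{\tiny L--CR}}^{\cu}}=\cO(K^{-1/2})$, not $\cO(K^{-1})$. So the step you flag as the main obstacle does not go through as sketched.

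What you need is the operator norm of the kernel on $L^2$, not its $L^2$ norm. Write $E_k$ for integration in coordinate $k$, and $Q_k$ for stratum-averaging in coordinate $k$ minus $E_k$. The two-row kernel then acts as $\prod_k\big(E_k-\frac{1}{K-1}Q_k\big)$, where the $E_k,Q_k$ are commuting orthogonal projections with $E_kQ_k=0$. Expanding gives $\Cov{Z_i}{Z_{i'}}=\sum_{S\neq\emptyset}\big(-\frac{1}{K-1}\big)^{|S|}\|\Pi_S Z\|^2$ with mutually orthogonal projections $\Pi_S$. Hence $|\Cov{Z_i}{Z_{i'}}|\le \Var{Z_1}/(K-1)$, which is the rigorous form of your ANOVA fallback. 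The paper takes a shorter route. Because the three designs are independent, the rows of $[\bA^{(1)},\bA^{(2)},\bA^{(3)}_{\cdot,\cu}]$ form a single Latin hypercube design. Each cross-product average is then an LHS mean of a square-integrable function, and the paper applies Theorem~1 of \cite{stein1987large} to each of the four averages to get $\cO(K^{-1})$. The asymptotic covariance formula you already display is essentially this theorem. Citing it, together with exchangeability of the rows (which makes the bound automatically uniform in $i\neq i'$), closes the proof without any density estimate.
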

Proposition~\ref{prop:cr_lhs_bias_variance} implies that, given a total computational budget $T$, the L--CR estimator retains the $\cO(T^{-1})$ MSE convergence rate under LHS.

Overall, the above analysis shows that, for both the PF and CR estimators, replacing CMC with LHS preserves the $\cO(T^{-1})$ MSE convergence rate, but does not lead to any further improvement. We next turn to nested simulation under LHS, where LHS can play a more significant role in improving the MSE convergence rate.

\subsection{Effects of Latin Hypercube Sampling on the Nested Simulation Estimator}
\label{subsec:standard_nest}
In this subsection, we analyze the standard nested simulation (NS) estimator under LHS, referred to as the \textbf{L--NS} estimator. Given an outer-level sample size $K$ and an inner-level sample size $N$, we construct the estimator in a manner analogous to the L--PF estimator, with LHS applied at both the outer and inner sampling levels. Specifically, we first generate the outer-level scenarios $\{\lhs{\bX}_{\cu,i}\}_{i=1}^K$ using LHS. Conditional on each $\lhs{\bX}_{\cu,i}$, we then generate $N$ inner-level realizations for $\bX_{-\cu}$ using LHS. In practice, we generate $N$ independent $K \times p$ Latin hypercube design matrices, denoted by $\bA^{(j)}$ for $j \in [N]$, and define
\[
\lhs{\bX}_{\cu,i} \coloneqq \bA_{i,\cu}^{(1)}, \qquad
\lhs{\bX}_{-\cu,j}^{(i)} \coloneqq \bA_{i,-\cu}^{(j)},
\qquad i \in [K],\ j \in [N] \ .
\]
For each input combination $(\lhs{\bX}_{\cu,i},\lhs{\bX}_{-\cu,j}^{(i)})$, we define the corresponding model output as
$
\lhs{\cY}_{ij} \coloneqq f(\lhs{\bX}_{\cu,i},\lhs{\bX}_{-\cu,j}^{(i)}).
$
The NS estimator based on LHS (L--NS) is then
\begin{equation}\label{eq:lhs_ns_est}
	V_{\mbox{\tiny L--NS}}^{\cu}
	= \frac{1}{K}\sum_{i=1}^{K}\left(\lhs{L}_N(\lhs{\bX}_{\cu,i}) - \frac{1}{K}\sum_{i=1}^{K}\lhs{L}_N(\lhs{\bX}_{\cu,i})\right)^2 ,
\end{equation}
where $\lhs{L}_N(\lhs{\bX}_{\cu,i}) \coloneqq {N}^{-1}\sum_{j=1}^{N}\lhs{\cY}_{ij}$ for each $i\in[K]$.

To facilitate the analysis of the impact of LHS, we define the key quantity
\begin{equation}\label{eq:remain_effect}
	R_\cu \coloneqq \frac{S_{T}^{-\cu} - \sum_{j \in -\cu}\bigl(S^{j} + S^{\cu,j}\bigr)}{S_{T}^{-\cu}} \ ,
\end{equation}
where
\[
S_{T}^{-\cu} \coloneqq \frac{\E{\Var{\cY \mid \bX_{\cu}}}}{\Var{\cY}}, \quad
S^{\cu,j} \coloneqq \frac{\Var{\E{\cY \mid \bX_{\cu \cup \{j\}}}} - \Var{\E{\cY \mid \bX_\cu}} - \Var{\E{\cY \mid X_j}}}{\Var{\cY}} \ ,
\]
and recall that $S^{\cu} \coloneqq \Var{\E{\cY \mid \bX_\cu}} / \Var{\cY}$. Here, $S_{T}^{-\cu}$ represents the total effect of $\bX_{-\cu}$, i.e., the sum of the first-order and higher-order interaction effects among the inputs in the complement set $-\cu$ \citep{saltelli2010variance}. To avoid confusion, we note that $S^{\cu,j}$ and $S^{\cu \cup \{j\}}$ represent different quantities: the former captures interaction effects between $\bX_\cu$ and $X_j$, whereas the latter is the first-order effect of $\bX_{\cu \cup \{j\}}$, defined as $\Var{\E{\cY \mid \bX_{\cu \cup \{j\}}}}/\Var{\cY}$.
We refer to the numerator in~\eqref{eq:remain_effect},
$
S_{T}^{-\cu} - \sum_{j \in -\cu}\bigl(S^{j} + S^{\cu,j}\bigr),
$
as the \textit{remaining interaction effect} of $S_{T}^{-\cu}$, since it quantifies the portion of interaction effects involving variables in $\bX_{-\cu}$ that is not captured by $\sum_{j\in-\cu} \bigl(S^{j} + S^{\cu,j}\bigr)$. It is immediate that the remaining interaction effect lies in $[0,S_{T}^{-\cu}]$, and hence $R_\cu\in[0,1]$. Moreover, when $S_{T}^{-\cu} \approx \sum_{j\in-\cu}(S^{j}+S^{\cu,j})$, i.e., the remaining interaction effect is negligible, $R_\cu$ is close to $0$.

The following result establishes the asymptotic bias of the L--NS estimator. The proof is provided in Appendix~\ref{app:proof_nest_lhs_bias}.
\begin{proposition}\label{prop:lhs_nest_bias}
	The bias of $V_{\mbox{\tiny L--NS}}^{\cu}$ satisfies
	$\E{V_{\mbox{\tiny L--NS}}^{\cu}} - V
		= N^{-1}\left(\Var{\cY} - V\right) R_\cu + r_N + \cO(N^{-2}) + \cO(K^{-1})$,
	where $r_N = o(N^{-1})$ and $r_N>0$ for finite $N$. Furthermore, if $|\cu|=1$, then
$
	\E{V_{\mbox{\tiny L--NS}}^{\cu}} - V
	= N^{-1}\left(\Var{\cY} - V\right) R_\cu + r_N + \cO(N^{-2}) + \cO(K^{-3}) + \cO(T^{-1})$.
\end{proposition}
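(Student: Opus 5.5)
Throughout, write $\lhs L_N(\lhs\bX_{\cu,i})=L(\lhs\bX_{\cu,i})+\bar\epsilon_i$ with $\bar\epsilon_i\coloneqq N^{-1}\sum_{j=1}^N\epsilon_{ij}$ and $\epsilon_{ij}\coloneqq\lhs\cY_{ij}-L(\lhs\bX_{\cu,i})$. The plan starts from the algebraic identity
\[
V_{\mbox{\tiny L--NS}}^{\cu}=\frac1K\sum_{i=1}^K \lhs L_N(\lhs\bX_{\cu,i})^2-\Bigl(\frac1K\sum_{i=1}^K \lhs L_N(\lhs\bX_{\cu,i})\Bigr)^2 .
\]
We take expectations of the two terms separately and expand each into a signal part (functions of $L$ only), a noise part (quadratic in the $\epsilon$'s) and cross terms.

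\textbf{Signal part.} Each row of an LHS design is marginally uniform, so $\lhs\bX_{\cu,i}$ has the law of $\bX_\cu$. Hence the first term contributes $\E{L(\bX_\cu)^2}$ plus the expected squared inner average. For the squared grand mean, we invoke the standard LHS variance formula (Stein's ANOVA expansion): $\Var{K^{-1}\sum_i L(\lhs\bX_{\cu,i})}=K^{-1}\Var{L}-K^{-1}\sum_{k\in\cu}\Var{\E{L\mid X_k}}+o(K^{-1})$, which is $\cO(K^{-1})$ in general. When $|\cu|=1$, $L$ is a function of a single stratified coordinate. Its additive part is then the whole function, and the stratified-mean variance is $\cO(K^{-3})$ for smooth or square-integrable-derivative $L$. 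This is the origin of the $\cO(K^{-3})$ term.

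\textbf{Noise part.} Here $\E{\bar\epsilon_i^2}=N^{-2}\sum_{j,j'}\E{\epsilon_{ij}\epsilon_{ij'}}$. The diagonal terms give $N^{-1}\E{\Var{\cY\mid\bX_\cu}}=N^{-1}(\Var{\cY}-V)$. For $j\neq j'$, the inputs $\bA^{(j)}_{i,-\cu}$ and $\bA^{(j')}_{i,-\cu}$ come from independent designs, but they are generated jointly with row $i$ of $\bA^{(1)}$, where $\lhs\bX_{\cu,i}$ lives. The key difficulty is to quantify how dependence on the row index couples the inner residuals through the rows shared with other scenarios. The residual is $\epsilon=\sum_{w\not\subset\cu}f_w$ in the ANOVA decomposition, with $f_w$ the ANOVA component indexed by $w$.

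Conditioning on design ranks, the inner design for column $j$ is an LHS across scenarios. Summing over $i$ and averaging therefore produces cancellation for the components $f_{\{j\}}$ and $f_{\cu\cup\{j\}}$, i.e.\ the main effects of $j\in-\cu$ and the interactions of $X_j$ with $\bX_\cu$. These components are ``stratified away'' by the one-dimensional marginal stratification, and their contribution to the leading $N^{-1}$ variance term moves into a higher-order remainder. Only the remaining interaction components survive at full weight. Their variance is $\Var{\cY}\bigl(S_T^{-\cu}-\sum_{j\in-\cu}(S^j+S^{\cu,j})\bigr)=(\Var{\cY}-V)R_\cu$, which gives the leading term $N^{-1}(\Var{\cY}-V)R_\cu$. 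The stratified parts leave a residual. It is positive, because stratification cannot fully remove the variance at finite $N$, and of smaller order than $N^{-1}$; this residual defines $r_N$. The $N^{-2}$ cross-products over pairs $j\neq j'$ within the same scenario generate the $\cO(N^{-2})$ term.

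\textbf{Cross terms and the $|\cu|=1$ refinement.} Terms such as $\E{L(\lhs\bX_{\cu,i})\bar\epsilon_{i'}}$ vanish for $i=i'$ by conditional centering. For $i\neq i'$ they are $\cO(K^{-1})$ covariances induced by LHS, giving the $\cO(K^{-1})$ term in general. For $|\cu|=1$ we sharpen this bound. The grand-mean noise contributes $\Var{K^{-1}\sum_i\bar\epsilon_i}$. With $KN=T$ independent-design cells, this is bounded by $\cO(T^{-1})$, replacing the generic $\cO(K^{-1})$. The signal correlation becomes $\cO(K^{-3})$ as above.

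The main obstacle is the noise part: carrying out the ANOVA bookkeeping of the LHS-induced dependence among $\epsilon_{ij}$ across rows, so as to isolate exactly the $(\Var{\cY}-V)R_\cu$ coefficient and to certify the sign and order of $r_N$. We would attack it by conditioning on the permutation matrices of each design and applying Stein's variance formula component by component.
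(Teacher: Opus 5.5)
Your outer-level bookkeeping matches the paper's. That covers the identity $\E{V_{\mbox{\tiny L--NS}}^{\cu}}=\Var{\lhs{L}_N(\lhs{\bX}_\cu)}-\Var{K^{-1}\sum_i\lhs{L}_N(\lhs{\bX}_{\cu,i})}$, Stein for the $\cO(K^{-1})$ term, one-dimensional stratification for $\cO(K^{-3})$, and $\cO(T^{-1})$ from the grand-mean noise. Your identification of the surviving ANOVA components, with total variance $(\Var{\cY}-V)R_\cu$, is also correct.

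\textbf{The gap is the mechanism that removes the other components from the $N^{-1}$ term.} That term is $K^{-1}\sum_i\E{\bar\epsilon_i^2}$. Each $\E{\bar\epsilon_i^2}$ is a single-scenario second moment, fixed by the joint law of $\lhs{\bX}_{\cu,i}$ and that scenario's own $N$ inner points. How the rows of each $\bA^{(j)}$ are stratified across the $K$ scenarios cannot affect it. Cross-scenario dependence enters only through $\Var{K^{-1}\sum_i\lhs{L}_N(\lhs{\bX}_{\cu,i})}$, which is already $\cO(K^{-1})$. So ``summing over $i$ and averaging'' cancels neither $f_{\{j\}}$ nor $f_{\cu\cup\{j\}}$ at order $N^{-1}$.

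In fact, carry out your computation with $\lhs{\bX}^{(i)}_{-\cu,j}=\bA^{(j)}_{i,-\cu}$. The vectors $\lhs{\bX}_{\cu,i},\bA^{(1)}_{i,-\cu},\dots,\bA^{(N)}_{i,-\cu}$ are mutually independent, because LHS permutes each column independently and the $\bA^{(j)}$ are independent. Hence $\E{\epsilon_{ij}\epsilon_{ij'}}=0$ for $j\neq j'$, and you recover the CMC coefficient $N^{-1}(\Var{\cY}-V)$ with no factor $R_\cu$.

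Your bookkeeping also puts the reduction in the wrong place. When the $N$ inner points of one scenario are jointly stratified, $N^{-2}\sum_{j\neq j'}\E{\epsilon_{ij}\epsilon_{ij'}}$ is of order $N^{-1}$, not $\cO(N^{-2})$. It is exactly the term that subtracts the main-effect variances.

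This is the paper's route:
\begin{itemize}
\item It writes $\Var{\lhs{L}_N(\lhs{\bX}_\cu)}=V+\E{\Var{\lhs{L}_N(\lhs{\bX}_\cu)\mid\lhs{\bX}_\cu}}$.
\item Conditional on $\lhs{\bX}_\cu=x_\cu$, it treats the inner sample as an $N$-point LHS in the $-\cu$ coordinates.
\item It applies Stein's Corollary~1 to $f(x_\cu,\cdot)$, obtaining $N^{-1}\int e^2(x_\cu,\bx_{-\cu})\,d\bx_{-\cu}+r_N+\cO(N^{-2})$. Here $e$ removes the mean and the main effects $\E{\cY\mid X_j,\bX_\cu=x_\cu}-\E{\cY\mid\bX_\cu=x_\cu}$, $j\in-\cu$, which carry the $\bX_\cu$--$X_j$ interactions. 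The sign and order of $r_N$ are inherited from Stein.
\item Averaging over $x_\cu$ and applying the law of total variance yields the coefficient $(\Var{\cY}-V)R_\cu$.
\end{itemize}

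To repair your proof, replace the cross-scenario argument with this within-scenario one. State the within-scenario LHS explicitly as a sampling assumption, since the row construction from the $K\times p$ matrices does not provide it by itself.
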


Recall that $R_\cu\in[0,1]$. Consequently, under the allocation $K=\cO(T^{2/3})$ and $N=\cO(T^{1/3})$, the leading $N^{-1}$ bias term of $V_{\mbox{\tiny L--NS}}^{\cu}$ is no larger than that of the standard NS estimator $V_{\mbox{\tiny NS}}^{\cu}$ under CMC, whose bias is $(\Var{\cY}-V)/N$ as $K,N\rightarrow\infty$. Moreover, when $R_\cu\approx 0$---that is, the remaining interaction effect is negligible---the bias of the L--NS estimator can be substantially smaller than that of the standard NS estimator under CMC. The following example illustrates this bias reduction.
\vskip1ex
\noindent\textbf{Illustrative example.} Consider the Ishigami function, a three-dimensional test case described in Section~\ref{sec:experiment}. For $\cu=\{1\}$, we have
\begin{align*}
	R_{\single{1}}
	&= \frac{S_T^{\{2,3\}} - \sum_{j \in \{2,3\}}(S^{j} + S^{1,j})}{S_T^{\{2,3\}}}
	= \frac{S_T^{\{2,3\}} - \sum_{j \in \{2,3\}}(1-S_T^{j}-S^{1})}{S_T^{\{2,3\}}} \\
	&= \frac{1 - S^{1} - \sum_{j \in \{2,3\}}(1-S_T^{j}-S^{1})}{1 - S^{1}}
	= \frac{S^{1} + S_T^{2} + S_T^{3}-1}{1 - S^{1}}.
\end{align*}
The second equality above follows from $S^{1} + S^{2} + S^{1,2} + S_T^{3}= 1$ and $S^{1} + S^{3} + S^{1,3} + S_T^{2}= 1$. Similarly,
\[
R_{\single{2}} = \frac{S^{2} + S_T^{1} + S_T^{3}-1}{1 - S^{2}}, \qquad
R_{\single{3}} = \frac{S^{3} + S_T^{1} + S_T^{2}-1}{1 - S^{3}}.
\]
For the Ishigami function, $X_2$ has no interaction with other inputs, so $S_T^{2}=S^{2}$. Consequently, $S^{1} + S_T^{2} + S_T^{3}=S^{1}+S^{2}+S_T^{3}=1$, where the last equality uses $S^{1} + S^{2} + S^{1,2} + S_T^{3}=1$ and $S^{1,2}=0$. This implies $R_{\single{1}}=0$. Similarly, since $S_T^{2}=S^{2}$ and $S^{2,3}=0$, we have $S^{3}+S_T^{1}+S_T^{2}=S^{3}+S_T^{1}+S^{2}+S^{2,3}=1$, which yields $R_{\single{3}}=0$. In contrast, $R_{\single{2}}\neq 0$ because $X_1$ and $X_3$ exhibit an interaction effect. Since $R_{\single{1}}=R_{\single{3}}=0$, the leading $N^{-1}$ bias term vanishes for $\cu=\{1\}$ and $\cu=\{3\}$, and the bias becomes $o(N^{-1})+\cO(K^{-3})+\cO(T^{-1})$. For $\cu=\{2\}$, the bias is improved only by the constant factor $R_{\single{2}}<1$.

The use of LHS also affects the variance of the L--NS estimator, as established in the following result. The proof is provided in Appendix~\ref{app:proof_nest_lhs_var}.
\begin{proposition}\label{prop:lhs_nest_var}
	The variance of $V_{\mbox{\tiny L--NS}}^{\cu}$ satisfies $
	\Var{V_{\mbox{\tiny L--NS}}^{\cu}} = \cO(K^{-1})$.
	Additionally, if $|\cu|=1$, then
$
	\Var{V_{\mbox{\tiny L--NS}}^{\cu}} = \cO(K^{-3}) + \cO(T^{-1}) .
$
\end{proposition}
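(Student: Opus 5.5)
We will prove Proposition~\ref{prop:lhs_nest_var} by writing the L--NS estimator in U-statistic form. Set $\lhs{L}_i \coloneqq \lhs{L}_N(\lhs{\bX}_{\cu,i})$. Then
\[
V_{\mbox{\tiny L--NS}}^{\cu}=\frac{1}{K}\sum_{i}\lhs{L}_i^2-\frac{1}{K^2}\sum_{i,i'}\lhs{L}_i\lhs{L}_{i'}
=\frac{K-1}{K^2}\sum_i \lhs{L}_i^2-\frac{1}{K^2}\sum_{i\neq i'}\lhs{L}_i\lhs{L}_{i'} .
\]
The variance then decomposes into variances and covariances of row-level quantities. Under LHS each row $i$ of every design $\bA^{(j)}$ is marginally uniform. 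Different rows are dependent only through the stratification, and that dependence is weak: by the standard results of McKay et al.\ and Stein, any covariance between functions of two distinct rows is $-\cO(K^{-1})$ times a projection term. The designs $\bA^{(j)}$ are independent across $j$.

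\textbf{General case.} We expand $\Var{K^{-1}\sum_i \lhs{L}_i^2}$ as $K^{-1}\Var{\lhs{L}_1^2}+\frac{K-1}{K}\Cov{\lhs{L}_1^2}{\lhs{L}_2^2}$. The assumption $\E{\cY^4}<\infty$ bounds the first term uniformly in $N$. The second term is bounded using the known LHS property that the covariance of a square-integrable function evaluated at two distinct rows is $\cO(K^{-1})$. The same argument applies to the cross sum $K^{-2}\sum_{i\neq i'}\lhs{L}_i\lhs{L}_{i'}=(\bar{L})^2-K^{-2}\sum_i\lhs{L}_i^2$, where $\bar{L}=K^{-1}\sum_i\lhs{L}_i$. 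LHS does not inflate the variance of a sample mean by more than a factor $K/(K-1)$ (Owen's bound), so $\Var{\bar L}=\cO(K^{-1})$. Combining these with the delta method and fourth-moment bounds gives $\Var{V_{\mbox{\tiny L--NS}}^{\cu}}=\cO(K^{-1})$.

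\textbf{Case $|\cu|=1$.} We decompose $\lhs{L}_i=L(\lhs X_{\cu,i})+\varepsilon_i$, with $\varepsilon_i=N^{-1}\sum_j(\lhs{\cY}_{ij}-L(\lhs X_{\cu,i}))$.

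\emph{Outer term.} The outer scenarios are a one-dimensional Latin hypercube sample, that is, a stratified sample with one point per stratum of width $1/K$. For a smooth (say $C^1$) conditional-mean function $L$, the stratified sample variance $K^{-1}\sum_i (L_i-\bar L)^2$ is a stratified estimator of the integral of the smooth integrand $(L-\E{L})^2$. Its centering error is controlled because $\bar L$ has variance $\cO(K^{-3})$: this is stratification with one point per stratum, where the variance of the mean is $K^{-1}\cdot\cO(K^{-2})$. The smooth part therefore contributes $\cO(K^{-3})$.

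\emph{Inner and cross terms.} These involve $\varepsilon_i$. Conditionally on the outer scenarios, their variance is $\cO((KN)^{-1})=\cO(T^{-1})$, for two reasons. First, the $N$ inner designs are independent. Second, for fixed $j$, rows are exchangeable with negative-order dependence, and the resulting cross-row covariances are nonpositive to leading order, so they are dominated by the i.i.d.\ bound. Terms such as $K^{-1}\sum_i\varepsilon_i^2$ have mean of order $N^{-1}$, and their fluctuation has variance $\cO(K^{-1}N^{-2})\le\cO(T^{-1})$. The cross term $2K^{-1}\sum_i (L_i-\bar L)\varepsilon_i$ has variance $\cO(T^{-1})$ by conditional independence across $j$.

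\textbf{Main obstacle.} The delicate step is the $\cO(K^{-3})$ claim for the outer term. It requires (i) a smoothness or bounded-variation condition on $L$ (and on $L^2$), which we will impose implicitly as the paper does, and (ii) a careful treatment of the joint dependence between the outer column $\bA^{(1)}_{\cdot,\cu}$ and the inner column $\bA^{(1)}_{\cdot,-\cu}$. These columns share the permutation structure only through independent permutations, so conditioning on the outer column leaves the inner rows exchangeable. We would first verify this conditional structure explicitly, then bound each covariance term by a Taylor expansion within strata.
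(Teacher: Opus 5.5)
Your proposal is correct, under the same implicit regularity the paper relies on, but it follows a different route.

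\textbf{The paper's route.} The paper conditions on the outer scenarios $\{\lhs{\bX}_{\cu,i}\}_{i\in[K]}$ and applies the law of total variance. For the variance of the conditional mean, it bounds the variance of $K^{-1}\sum_i \E{\lhs{L}_N^2(\lhs{\bX}_{\cu,i})\mid \lhs{\bX}_{\cu,i}}$, which it asserts is $\cO(K^{-1})$, or $\cO(K^{-3})$ when $|\cu|=1$. It also bounds the variance of $\cL^2$, with $\cL=K^{-1}\sum_i L(\lhs{\bX}_{\cu,i})$, through a kurtosis bound in terms of $\Var{\cL}$. The expected conditional variance is then computed with the i.i.d.\ formula for the variance of a sample variance and shown to be $\cO(T^{-1})$. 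This treats the $\lhs{L}_N(\lhs{\bX}_{\cu,i})$ as conditionally independent across $i$ given the outer column. That is not exact: for each $j$, the inner rows $\bA^{(j)}_{i,-\cu}$ remain Latin-hypercube dependent across $i$.

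\textbf{Your route.} You decompose the estimator itself: as $K^{-1}\sum_i\lhs{L}_i^2-\bar L^2$ in general, and via $\lhs{L}_i=L_i+\varepsilon_i$ into outer, cross and inner pieces when $|\cu|=1$. You handle the cross-row dependence directly through Owen's bound $\Var{K^{-1}\sum_i g(\mathrm{row}_i)}\le \Var{g}/(K-1)$. You also make explicit the smoothness of $L$ and $L^2$ that the paper's $\cO(K^{-3})$ claim silently needs. The cost is more terms to track; the gain is that no conditional-independence assumption is used.

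\textbf{Shared gap.} Both routes rely on fourth-moment control of the outer mean: your ``fourth-moment bounds'' for $\Var{\bar L^2}$, and the paper's implicitly bounded kurtosis $\kappa_{\cL}$. If $f$ is bounded, this closes at once, since $\E{(\bar L-\mu)^4}\le 4(\sup|f|)^2\,\Var{\bar L}$ with $\mu=\E{\cY}$.

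\textbf{Tightening your sketch.}
\begin{enumerate}
\item State explicitly that the column-wise concatenation of the $N$ independent designs, keeping only the columns used, is itself a $K\times(|\cu|+N|-\cu|)$ Latin hypercube. Then $\lhs{L}_i^2$, $\varepsilon_i^2$ and $L_i\varepsilon_i$ are fixed functions of the $i$th row, and Owen's bound, being dimension-free, applies with constants uniform in $N$.
\item For the cross term, do not argue conditionally with the row-dependent weights $L_i-\bar L$. Owen's bound covers a common row function, and its cross-row covariances are nonpositive only up to a remainder of order $\Var{g}/(K-1)^2$. Instead, with $\bar L$ the mean of the $L_i$, write the cross term as $2K^{-1}\sum_i L_i\varepsilon_i-2\bar L\,\bar\varepsilon$. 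Apply Owen's bound to $L\varepsilon$ and to $\varepsilon$, using $\E{\varepsilon_i^2\mid\lhs{\bX}_{\cu,i}}=N^{-1}\Var{\cY\mid\bX_\cu=\lhs{\bX}_{\cu,i}}$ and the boundedness of $L$.
\item Bounded variation of $L$ and $L^2$ gives only $\cO(K^{-2})$ for one-point-per-stratum sampling. The $\cO(K^{-3})$ rate needs the Lipschitz (e.g.\ $C^1$) form of your assumption.
\end{enumerate}
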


Proposition~\ref{prop:lhs_nest_var} shows that when $|\cu|=1$, the variance of $V_{\mbox{\tiny L--NS}}^{\cu}$ decays faster than that of the standard NS estimator $V_{\mbox{\tiny NS}}^{\cu}$ under CMC. Combining Propositions~\ref{prop:lhs_nest_bias} and~\ref{prop:lhs_nest_var}, it follows that, for $|\cu|=1$, the MSE of $V_{\mbox{\tiny L--NS}}^{\cu}$ improves upon that of $V_{\mbox{\tiny NS}}^{\cu}$ under the same budget allocation, namely the asymptotically optimal allocation for $V_{\mbox{\tiny NS}}^{\cu}$ with $K=\cO(T^{2/3})$ and $N=\cO(T^{1/3})$. In this case, the MSE rate of $V_{\mbox{\tiny L--NS}}^{\cu}$ becomes $o(T^{-2/3})$, which is faster than the $\cO(T^{-2/3})$ rate achieved by $V_{\mbox{\tiny NS}}^{\cu}$. When $|\cu|>1$, applying LHS does not improve the MSE convergence rate.

The improvement for $|\cu|=1$ can be attributed to the stratification property of LHS, which reduces the variability of the outer-level sampling \citep{owen2013mcbook}. When $\cu$ contains a single input variable, LHS enforces uniform coverage of its marginal distribution, thereby diminishing the outer-level sampling variance contribution to the overall MSE. In contrast, when $|\cu|>1$, stratification is only marginal and becomes less effective for controlling variability over higher-dimensional subspaces. Consequently, the variance-reduction effect weakens, and the MSE convergence rate of the L--NS estimator remains comparable to that of the standard NS estimator under CMC.

In summary, incorporating LHS into the nested simulation framework improves efficiency primarily through variance reduction. This enhancement is most pronounced for $|\cu|=1$, where $V_{\mbox{\tiny L--NS}}^{\cu}$ achieves a faster MSE convergence rate than its CMC counterpart. However, this advantage diminishes as the dimensionality of $\cu$ increases. These theoretical insights suggest that LHS is particularly beneficial for estimating first-order Sobol' indices, a conclusion that is further supported by the numerical experiments in Section~\ref{sec:experiment}.

\subsection{Effects of Latin Hypercube Sampling on Bias-Reduced Nested Simulation Estimators}
\label{subsec:bias_correct_nest}

Recall from Subsection~\ref{subsec:jackknife} that, under CMC, the NS estimator $V_{\mbox{\tiny NS}}^{\cu}$ can be improved via bias-reduction techniques to achieve unbiasedness or faster MSE convergence rates. In this subsection, we analyze the impact of LHS on the bias-reduced NS estimators introduced in Subsection~\ref{subsec:jackknife}, namely, the JK estimator $V_{\mbox{\tiny JK}}^{\cu}$, the SJ estimator $V_{\mbox{\tiny SJ}}^{\cu}$, and the OH estimator $V_{\mbox{\tiny OH}}^{\cu}$. Our goal is to determine whether the stratification effect of LHS can further improve the convergence properties of these estimators beyond what is obtained under CMC.

Denote the corresponding estimators under LHS by the \textbf{L--SJ} estimator $V_{\mbox{\tiny L--SJ}}^{\cu}$, the \textbf{L--OH} estimator $V_{\mbox{\tiny L--OH}}^{\cu}$, and the \textbf{L--JK} estimator $V_{\mbox{\tiny L--JK}}^{\cu}$. We begin by presenting the bias of the L--SJ and L--OH estimators, with proofs given in Appendices~\ref{app:proof_sj_lhs_bias} and~\ref{app:proof_oh_lhs_bias}, respectively.
\begin{proposition}\label{prop:bias_sj_lhs}
	The bias of $V_{\mbox{\tiny L--SJ}}^{\cu}$ satisfies
	$
	\E{V_{\mbox{\tiny L--SJ}}^{\cu}}- V =  \cO(J^{-1}) + r_N + \cO(N^{-2}) ,
	$
	where $r_N = o(N^{-1})$ and $r_N>0$ for finite $N$.
\end{proposition}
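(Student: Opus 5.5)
The plan is to exploit the fact that, exactly as under CMC, $V_{\mbox{\tiny L--SJ}}^{\cu}$ is an average of per-scenario terms centered at a $\widehat{\mu}$ computed from the independent preliminary dataset $\cD_{\mbox{\tiny pre}}$. The cross-scenario dependence induced by LHS therefore never enters the expectation, which explains why no $\cO(K^{-1})$ or $\cO(K^{-3})$ term appears, unlike in Proposition~\ref{prop:lhs_nest_bias}. By linearity it suffices to compute, for a generic $i$, the expectation of
\[
I\,(\lhs{L}_N(\lhs{\bX}_{\cu,i})-\widehat{\mu})^2-\frac{I-1}{I}\sum_{l=1}^I(\lhs{L}_{N,-l}(\lhs{\bX}_{\cu,i})-\widehat{\mu})^2 .
\]

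\textbf{Step 1 (remove the centering).} Write $\lhs{L}_N-\widehat{\mu}=(\lhs{L}_N-\mu)-(\widehat{\mu}-\mu)$ with $\mu=\E{\cY}$. Since $\widehat{\mu}$ is independent of the estimation data, and each LHS point is marginally distributed as $\bX$ so that $\E{\lhs{L}_N}=\E{\lhs{L}_{N,-l}}=\mu$, the cross terms have zero mean. The $(\widehat{\mu}-\mu)^2$ contributions enter with total weight $I-\frac{I-1}{I}\cdot I=1$. They therefore contribute $\E{(\widehat{\mu}-\mu)^2}=\Var{\widehat{\mu}}=\cO(J^{-1})$. This bound holds whether $\cD_{\mbox{\tiny pre}}$ is drawn by CMC, where the term equals $\Var{\cY}/J$, or by LHS, whose variance is asymptotically no larger up to $o(J^{-1})$; this gives the $\cO(J^{-1})$ term.

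\textbf{Step 2 (expand the uncentered piece).} Next, decompose
\[
\E{(\lhs{L}_M-\mu)^2}=V+\E{(\lhs{L}_M-L)^2}+2\E{(L-\mu)(\lhs{L}_M-L)} ,
\]
for $M=N$ and $M=N(I-1)/I$. Here $L=L(\lhs{\bX}_{\cu,i})$. The cross term vanishes because, conditional on $\lhs{\bX}_{\cu,i}$, the inner outputs are conditionally unbiased for $L$: the inner inputs come from rows of design matrices, each marginally distributed as $\bX_{-\cu}$ and independent of $\bX_{\cu}$ within the row. For the inner-error term $g(M):=\E{(\lhs{L}_M-L)^2}$, I would reuse the expansion established in the proof of Proposition~\ref{prop:lhs_nest_bias}, isolating only its per-scenario (diagonal) part. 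That analysis gives
\[
g(M)=c_1M^{-1}+r'_M+\cO(M^{-2}) ,
\]
where $c_1=(\Var{\cY}-V)R_\cu$ and $r'_M=o(M^{-1})$, $r'_M>0$, is the residual produced by the LHS stratification of the inner sample.

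\textbf{Step 3 (jackknife cancellation).} Combine the pieces as
\[
I\,g(N)-(I-1)\,g\!\left(\tfrac{N(I-1)}{I}\right).
\]
The constant $V$ survives with weight $1$. The $c_1/N$ term cancels exactly, since $I\cdot c_1/N-(I-1)\cdot c_1 I/(N(I-1))=0$. The $M^{-2}$ terms combine into $\cO(N^{-2})$. What remains is
\[
r_N:=I\,r'_N-(I-1)\,r'_{N(I-1)/I}=o(N^{-1}).
\]

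\textbf{Main obstacle.} The hard step is Step 2, together with showing that $r_N$ is strictly positive for finite $N$. The stratification in the inner LHS sample produces a correction that is not an exact power of $M^{-1}$, so the jackknife cannot kill it. I would first write $g(M)$ via the LHS variance formula (Owen's representation), as $M^{-1}$ times the non-additive part of the inner integrand plus the negative additive-part correction. Then I would verify the sign of the resulting $I$-weighted difference directly from that closed form, expecting the same argument that gave $r_N>0$ in Proposition~\ref{prop:lhs_nest_bias}.
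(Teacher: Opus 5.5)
Your Steps 1--3 reproduce the paper's proof. You use $\E{(\lhs{L}_M-\widehat{\mu})^2}=\Var{\lhs{L}_M}+\Var{\widehat{\mu}}$, then the expansion of $\Var{\lhs{L}_M}$ from Proposition~\ref{prop:lhs_nest_bias} and Lemma~\ref{lem:additive_approx} at $M=N$ and $M=N-N/I$, and then cancel the $R_\cu/N$ term with the jackknife weights.

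The one place you depart from the paper is the residual, and that is where the gap is. The paper writes the same symbol $r_N$ in both expansions, so $I r_N-(I-1)r_N=r_N$ simply inherits positivity. You correctly keep $r'_N$ and $r'_{N(I-1)/I}$ distinct. That yields $o(N^{-1})$ but not the sign, and ``the same argument'' cannot supply the sign, because the jackknife gives the two residuals weights of opposite sign. Writing $r'_M=h(M)/M$ with $h(M)\to 0$, your combination is
\[
I\,r'_N-(I-1)\,r'_{N(I-1)/I}=\frac{I}{N}\Bigl(h(N)-h\bigl(\tfrac{N(I-1)}{I}\bigr)\Bigr),
\]
which is $\le 0$ whenever $h$ is nonincreasing. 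For instance, $r'_M=cM^{-\alpha}$ with $c>0$ and $1<\alpha<2$ gives $cIN^{-\alpha}\bigl[1-(I/(I-1))^{\alpha-1}\bigr]<0$. So this route, yours or the paper's, supports only $\cO(J^{-1})+o(N^{-1})+\cO(N^{-2})$. The claim $r_N>0$ needs an argument that neither provides.

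Step 2 is also shakier than it looks, because the expansion at $M=N(I-1)/I$ (which is what Lemma~\ref{lem:additive_approx} asserts) depends on the inner design.

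\emph{Inner points i.i.d.} Suppose, as in the L--NS construction of Subsection~\ref{subsec:standard_nest} and as your description suggests, the inner points of scenario $i$ are row $i$ of $N$ independent design matrices. Then they are i.i.d.\ given $\lhs{\bX}_{\cu,i}$, so $g(M)=\E{\Var{\cY\mid\bX_\cu}}/M$ exactly. Together with your Step~1, the bias is exactly $\Var{\widehat{\mu}}$, with no residual at all.

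\emph{Inner sample an $N$-point LHS.} Suppose instead each scenario's inner sample is an $N$-point LHS. Deleting a section leaves points occupying a random subset of the size-$N$ strata in each coordinate, which is not an LHS of size $N(I-1)/I$. Its main effects then contribute about $\E{\Var{\cY\mid\bX_\cu}}(1-R_\cu)/(N(I-1))$ to $g$. The jackknife then leaves $-N^{-1}\E{\Var{\cY\mid\bX_\cu}}(1-R_\cu)$, a $\Theta(N^{-1})$ term.

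In neither case does Step 2 hold as stated at both sample sizes. You would need to fix the inner design and recompute $g(N)$ and $g(N(I-1)/I)$ before anything can be said about the sign, or even the order, of what survives the jackknife.
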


Proposition~\ref{prop:bias_sj_lhs} shows that, under LHS, the L--SJ estimator retains a bias of order $\cO(J^{-1}) + r_N + \cO(N^{-2})$. Under CMC, the bias-reduction terms $(L_{N,-l}(\bX_{\cu,i})-\widehat{\mu})^2$ given in \eqref{eq:fse} are constructed from conditionally independent outer-level scenarios, which enables the bias-cancellation argument. In contrast, LHS induces dependence among the outer-level scenarios, so the same cancellation no longer applies. Nevertheless, the bias in Proposition~\ref{prop:bias_sj_lhs} vanishes as $J, N\rightarrow\infty$.

\begin{proposition}\label{prop:bias_oh_lhs}
	The bias of $V_{\mbox{\tiny L--OH}}^{\cu}$ satisfies
	$
	\E{V_{\mbox{\tiny L--OH}}^{\cu}}- V =  \cO(K^{-1}) + r_N + \cO(N^{-2}) ,
	$
	where $r_N = o(N^{-1})$ and $r_N>0$ for finite $N$.
\end{proposition}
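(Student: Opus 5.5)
We will decompose the L--OH estimator as in \eqref{eq:one-half}, namely
\[
V_{\mbox{\tiny L--OH}}^{\cu} = V_{\mbox{\tiny L--NS}}^{\cu}\cdot\tfrac{K}{K-1}\cdot c_K - \frac{1}{K(N-1)}\sum_{i=1}^{K}\sum_{j=1}^{N}\bigl(\lhs{\cY}_{ij}-\lhs{L}_N(\lhs{\bX}_{\cu,i})\bigr)^2 ,
\]
where the normalization factor only changes the NS term by a multiplicative $1+\cO(K^{-1})$, hence contributes an additive $\cO(K^{-1})$ to the bias. The plan is to compute the expectations of the two terms separately. For the first term we invoke Proposition~\ref{prop:lhs_nest_bias} directly:
\[
\E{V_{\mbox{\tiny L--NS}}^{\cu}} = V + N^{-1}(\Var{\cY}-V)R_\cu + r_N + \cO(N^{-2}) + \cO(K^{-1}).
\]

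For the second (correction) term, we use the fact that each row of every Latin hypercube matrix is marginally uniform, i.e.\ distributed as the input law. Hence, for fixed $i$, the pair $(\lhs{\bX}_{\cu,i},\lhs{\bX}^{(i)}_{-\cu,j})$ is marginally distributed as $\bX$. However, across $j$ the vectors $\lhs{\bX}^{(i)}_{-\cu,j}$ come from independent matrices $\bA^{(j)}$, except for $j=1$, which shares $\bA^{(1)}$ with $\lhs{\bX}_{\cu,i}$. Row $i$ of $\bA^{(1)}$ is a single uniform draw, so conditional on $\lhs{\bX}_{\cu,i}$ the inner vectors are i.i.d.\ with the conditional law of $\bX_{-\cu}$ given $\bX_{\cu}$ (with independent inputs, $\bX_{-\cu}$ is independent of $\bX_\cu$ within a row). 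Therefore, for each $i$ the within-scenario sample variance is conditionally unbiased for $\Var{\cY\mid\bX_\cu}$, and the correction term has expectation exactly $\E{\Var{\cY\mid\bX_\cu}}=\Var{\cY}-V$. Dependence across $i$ induced by LHS does not matter here, because the term is a sum of per-$i$ quantities whose expectations only involve single rows.

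Combining the two pieces, the bias equals
\[
N^{-1}(\Var{\cY}-V)R_\cu - (\Var{\cY}-V) + r_N + \cO(N^{-2}) + \cO(K^{-1}),
\]
which is not small. This signals a mismatch, and it is the main obstacle: the correction in \eqref{eq:one-half} must be scaled so that it removes $N^{-1}(\Var{\cY}-V)$, i.e.\ its expectation is $N^{-1}\E{\Var{\cY\mid\bX_\cu}}$. We will therefore redo the computation with that factor: the correction removes exactly $N^{-1}(\Var{\cY}-V)$. The residual leading term is then $N^{-1}(\Var{\cY}-V)(R_\cu-1)$, which must be reconciled with the claimed form.

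This is the step to scrutinize. The resolution we will pursue is to revisit the proof of Proposition~\ref{prop:lhs_nest_bias} and split its bias into two parts: the exact within-scenario term $N^{-1}\E{\Var{\cY\mid\bX_\cu}}$, and the negative cross-scenario covariance contributed by LHS stratification of the $-\cu$ columns, which is $-N^{-1}(\Var{\cY}-V)(1-R_\cu)$ up to $r_N$. We then argue that the latter cross-covariance term, arising from $\Cov{\lhs{L}_N(\lhs{\bX}_{\cu,i})}{\lhs{L}_N(\lhs{\bX}_{\cu,k})}$ for $i\neq k$, is in fact of order $K^{-1}$ when weighted by the $1/K^2$ centering. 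In that case the $N^{-1}$ piece in Proposition~\ref{prop:lhs_nest_bias} is entirely within-scenario and is cancelled exactly, leaving $\cO(K^{-1})+r_N+\cO(N^{-2})$, with positivity of $r_N$ inherited from Proposition~\ref{prop:lhs_nest_bias}.

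If the cross term instead enters at order $N^{-1}$, the fallback is to reinterpret $r_N$ and the constants so as to absorb it, which we regard as the delicate point of the argument.
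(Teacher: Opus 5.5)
Your set-up is sound and matches the start of the paper's proof. The correction in \eqref{eq:one-half} does need the extra factor $1/N$; the paper's proof subtracts $\frac{1}{N(N-1)K}\sum_{i}\sum_{j}(\lhs{\cY}_{ij}-\lhs{L}_N(\lhs{\bX}_{\cu,i}))^2$. Its mean also involves single rows only.

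The genuine gap is the step you flag yourself. You never show that the $N^{-1}$ term of $\E{V_{\mbox{\tiny L--NS}}^{\cu}}-V$ is removed, and neither proposed fix can work.
\begin{itemize}
\item By Lemma~\ref{lem:V_bias}, $\E{V_{\mbox{\tiny L--NS}}^{\cu}}=\Var{\lhs{L}_N(\lhs{\bX}_\cu)}-\Var{K^{-1}\sum_{i=1}^K\lhs{L}_N(\lhs{\bX}_{\cu,i})}$. So every cross-scenario covariance sits in a term of order $K^{-1}$. There is no $\Theta(N^{-1})$ cross-scenario piece $-N^{-1}(\Var{\cY}-V)(1-R_\cu)$ to dispose of, and your ``resolution'' asserts that one term is simultaneously of that size and $\cO(K^{-1})$.
\item In the paper, $R_\cu$ is a within-scenario effect: it comes from Stein's expansion of $\Var{\lhs{L}_N(\lhs{\bX}_\cu)\mid\lhs{\bX}_\cu}$ in \eqref{eq:var_LN_stein}. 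Declaring the whole $N^{-1}$ term within-scenario and exactly cancelled therefore amounts to setting $R_\cu=1$. That contradicts Proposition~\ref{prop:lhs_nest_bias}, which you invoke for the first term.
\item With your conditionally i.i.d.\ inner draws there is no $r_N>0$ left to inherit.
\item The fallback is ruled out outright. Whenever $R_\cu<1$, the residual $N^{-1}(\Var{\cY}-V)(R_\cu-1)$ is negative and of exact order $N^{-1}$, so it cannot be absorbed into $r_N=o(N^{-1})$ with $r_N>0$.
\end{itemize}

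The paper bridges this step in the third equality of its proof. It takes the mean of the correction to be $N^{-1}\E{\Var{\cY\mid\bX_\cu}}R_\cu$ up to lower order, which cancels the leading L--NS term identically. That is the ingredient your proposal lacks, but your own computation shows it is not available when $R_\cu<1$.

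Let $s_i^2$ be the within-scenario sample variance. For any inner design whose points are uniform given $\lhs{\bX}_{\cu,i}$,
\[
\E{s_i^2\mid\lhs{\bX}_{\cu,i}}=\frac{N}{N-1}\Bigl(\Var{\cY\mid\bX_\cu=\lhs{\bX}_{\cu,i}}-\Var{\lhs{L}_N(\lhs{\bX}_{\cu,i})\mid\lhs{\bX}_{\cu,i}}\Bigr).
\]
With the $1/K$ normalization of \eqref{eq:lhs_ns_est}, this gives exactly
\[
\E{V_{\mbox{\tiny L--OH}}^{\cu}}-V=\frac{N\,\E{\Var{\lhs{L}_N(\lhs{\bX}_\cu)\mid\lhs{\bX}_\cu}}-\E{\Var{\cY\mid\bX_\cu}}}{N-1}-\Var{\frac{1}{K}\sum_{i=1}^K\lhs{L}_N(\lhs{\bX}_{\cu,i})} .
\]
The outcome depends on the inner sampling model.
\begin{itemize}
\item If each scenario's inner points form a Latin hypercube of size $N$ (the model behind \eqref{eq:var_LN_stein}), the first term equals $-N^{-1}(\Var{\cY}-V)(1-R_\cu)+r_N+\cO(N^{-2})$. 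This is precisely your residual. Under that model the stated form cannot hold when $R_\cu<1$ and $V<\Var{\cY}$.
\item Under the construction as literally described, row $i$'s inner points come from independent matrices, as you observed. Then the first term vanishes and the bias is $-\Var{K^{-1}\sum_i\lhs{L}_N(\lhs{\bX}_{\cu,i})}=\cO(K^{-1})$. That is a complete argument for the displayed order. However, it must bypass Proposition~\ref{prop:lhs_nest_bias}, whose $R_\cu$ factor it contradicts, and it yields no genuine positive $r_N$.
\end{itemize}
Either way, the step cannot be closed by reinterpreting $r_N$. You have to commit to one inner sampling model and compute the single-scenario variance directly.
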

Proposition~\ref{prop:bias_oh_lhs} shows that the L--OH estimator also exhibits a bias of order $\cO(K^{-1}) + r_N + \cO(N^{-2})$ under LHS. Although the OH estimator is unbiased under CMC by construction, dependence among stratified outer-level scenarios under LHS prevents exact cancellation.

For both the L--SJ and L--OH estimators, the respective biases vanish only in the limit as the inner-level sample size $N\to\infty$. Therefore, if one adopts the same allocation strategy commonly used for the SJ and OH estimators under CMC, namely, fixing $N$, the bias terms do not vanish as $T\to\infty$. In particular, because $r_N>0$ for finite $N$, the biases of $V_{\mbox{\tiny L--SJ}}^{\cu}$ and $V_{\mbox{\tiny L--OH}}^{\cu}$ remain $\Theta(1)$, dominate the MSE, and hence lead to an $\cO(1)$ MSE convergence rate.

We next present the bias and variance of the L--JK estimator, with the corresponding proofs provided in Appendices~\ref{app:proof_jk_lhs_bias} and~\ref{app:proof_jk_lhs_var}, respectively.

\begin{proposition}\label{prop:bias_jack_lhs}
	The bias of $V_{\mbox{\tiny L--JK}}^{\cu}$ satisfies
	$
	\E{V_{\mbox{\tiny L--JK}}^{\cu}}- V =  \cO(K^{-1}) + r_N + \cO(N^{-2}) ,
	$
	where $r_N = o(N^{-1})$ and $r_N>0$ for finite $N$.
\end{proposition}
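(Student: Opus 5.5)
The plan is to write the L--JK estimator as a linear combination of L--NS-type estimators and transfer the bias expansion of Proposition~\ref{prop:lhs_nest_bias} term by term. By \eqref{eq:jack}, we have
\[
\E{V_{\mbox{\tiny L--JK}}^{\cu}} = I\,\E{V_{\mbox{\tiny L--NS}}^{\cu}} - \frac{I-1}{I}\sum_{l=1}^{I}\E{V^{\cu}_{\mbox{\tiny L--NS},-l}}.
\]
Each leave-one-section-out estimator $V^{\cu}_{\mbox{\tiny L--NS},-l}$ is itself an L--NS estimator built from $N' = N(I-1)/I$ of the independent Latin hypercube design matrices $\bA^{(j)}$, with the same outer-level scenarios. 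The outer-level structure is therefore unchanged, and only the inner sample size differs. I would first re-run the derivation behind Proposition~\ref{prop:lhs_nest_bias} with a general inner sample size $M$. This should give
\[
\E{V_{\mbox{\tiny L--NS}}^{\cu}(M)} - V = M^{-1} C_K + g_K(M) + \cO(K^{-1}),
\]
where $C_K$ does not depend on $M$. One needs to check that it also does not depend on which subset of inner designs is used; this holds because the designs are exchangeable.

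The key step is the exact decomposition of $\E{V_{\mbox{\tiny L--NS}}^{\cu}}$. Write $\lhs{L}_M(\lhs{\bX}_{\cu,i}) = L(\lhs{\bX}_{\cu,i}) + \bar\varepsilon_i$, where $\bar\varepsilon_i$ is the average of the inner errors. Expanding the sample variance gives three pieces:
\begin{itemize}
\item an $M$-free term equal to the LHS sample variance of $L(\lhs{\bX}_{\cu,i})$, which is $V + \cO(K^{-1})$;
\item a cross term;
\item a term $\frac{1}{K}\sum_i \bar\varepsilon_i^2 - \bigl(\frac1K\sum_i\bar\varepsilon_i\bigr)^2$.
\end{itemize}
Because the $M$ inner designs are independent, $\E{\bar\varepsilon_i^2}$ splits into a diagonal part that is exactly proportional to $M^{-1}$ and equals $M^{-1}\E{\Var{\cY\mid\bX_\cu}}$. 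The between-row dependence introduced by LHS is the source of the $R_\cu$ factor and of the nonlinear remainder.

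The linear $M^{-1}$ contributions then cancel by the standard jackknife identity $I\cdot N^{-1} - \frac{I-1}{I}\cdot I\cdot \frac{I}{(I-1)N} = 0$. Likewise, the $M$-free terms combine with total weight $I-(I-1)=1$ and leave $V+\cO(K^{-1})$. What survives is
\[
r_N := I\, g_K(N) - (I-1)\, g_K\bigl(N(I-1)/I\bigr),
\]
together with the $\cO(N^{-2})$ pieces, which do not cancel because $I\cdot N^{-2}\neq (I-1)\,(I/((I-1)N))^{2}$ in general. Since $g_K(M) = o(M^{-1})$, we get $r_N = o(N^{-1})$.

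The main obstacle is proving $r_N>0$ for finite $N$. In Proposition~\ref{prop:lhs_nest_bias} this remainder comes from LHS-induced correlation between inner errors in different rows that share a design matrix. I would try to express it as a sum over designs of a nonnegative quadratic form, namely the variance reduction that LHS achieves on the row-averaged inner errors. I would then show that the jackknife combination acts on this quantity with positive net weight. This positivity holds, for example, if $g_K(M)$ has the form $M^{-1}h(M)$ with $h$ decreasing and positive, since then $I\,g(N) - (I-1)\,g(N') = N^{-1}I\,\bigl(h(N) - h(N')\bigr)$ has a definite sign; the sign convention would need to be matched to that of Proposition~\ref{prop:lhs_nest_bias}. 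If this structural route fails, I would instead fall back on the specific representation of $r_N$ from Appendix~\ref{app:proof_nest_lhs_bias} and verify the sign directly.
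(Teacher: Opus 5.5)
Your skeleton is exactly the paper's proof: write $\E{V^{\cu}_{\mbox{\tiny L--JK}}}$ as the jackknife combination of L--NS expectations at inner sizes $N$ and $N-N/I$, substitute the bias expansion, and cancel the $N^{-1}$ terms. The gap is in your key step, where you locate $R_\cu$ and the remainder.

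In the paper, both come from Stein's Corollary~1, applied conditionally on $\lhs{\bX}_\cu$ to the inner-level variance $\Var{\lhs{L}_N(\lhs{\bX}_\cu)\mid\lhs{\bX}_\cu}$. This is eq.~\eqref{eq:var_LN_stein}, together with Lemma~\ref{lem:additive_approx} at size $N-N/I$. The between-row dependence of the outer design enters only as the $\cO(K^{-1})$ term of Lemma~\ref{lem:V_bias}.

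You instead use independence of the inner designs to get $\E{\bar{\varepsilon}_i^2}=M^{-1}\E{\Var{\cY\mid\bX_\cu}}$, and then attribute $R_\cu$ and $g_K$ to between-row dependence. That same independence rules out any nonlinear remainder. Let $\varepsilon_{ij}\coloneqq\lhs{\cY}_{ij}-L(\lhs{\bX}_{\cu,i})$ and $\bar{\varepsilon}\coloneqq K^{-1}\sum_i\bar{\varepsilon}_i$. Conditional on the outer scenarios, the vectors $(\varepsilon_{1j},\dots,\varepsilon_{Kj})$ are i.i.d.\ and mean zero across designs $j$. So your cross term has zero mean, and $\E{K^{-1}\sum_i(\bar{\varepsilon}_i-\bar{\varepsilon})^2}=M^{-1}Q$, where $Q$ depends neither on $M$ nor on which designs are kept. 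Hence $\E{V^{\cu}_{\mbox{\tiny L--NS}}(M)}=V-\Var{K^{-1}\sum_i L(\lhs{\bX}_{\cu,i})}+M^{-1}Q$ exactly. Then $g_K\equiv 0$, your $r_N$ vanishes identically, and the L--JK bias is exactly $-\Var{K^{-1}\sum_i L(\lhs{\bX}_{\cu,i})}\le 0$.

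Your independence observation is in fact correct for the construction as written in Subsection~\ref{subsec:standard_nest}. The paper's proof obtains $R_\cu$ and $r_N>0$ only because, in invoking Stein, it implicitly treats the $N$ inner points attached to one scenario as an $N$-point Latin hypercube sample. Either way, your decomposition cannot produce the positive $o(N^{-1})$ term the statement asserts.

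The sufficient condition you propose for positivity also has the wrong sign. Suppose $g_K(M)=M^{-1}h(M)$ with $h>0$ decreasing, which is the natural shape of a positive $o(M^{-1})$ term. Then $I\,g_K(N)-(I-1)\,g_K\bigl(N(I-1)/I\bigr)=(I/N)\bigl(h(N)-h(N(I-1)/I)\bigr)<0$. For example, $g_K(M)=cM^{-2}$ gives $-cI/\bigl((I-1)N^2\bigr)$. So the jackknife weights generically flip the sign of a positive remainder that decays faster than $M^{-1}$.

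The paper does not confront this. It writes the same symbol $r_N$ in the expansions at $N$ and at $N-N/I$, so that $I\,r_N-(I-1)\,r_N=r_N$. Your more careful definition $I\,g_K(N)-(I-1)\,g_K\bigl(N(I-1)/I\bigr)$ shows that this notational identification, not a computation, is what carries the claim $r_N>0$. What the plug-in-and-cancel argument actually supports is the order statement $\cO(K^{-1})+o(N^{-1})+\cO(N^{-2})$.
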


Proposition~\ref{prop:bias_jack_lhs} shows that the JK estimator becomes biased under LHS. As with the L--SJ and L--OH estimators, this bias is induced by dependence among stratified outer-level scenarios. To characterize the MSE behavior of the L--JK estimator under the same budget allocation used for the JK estimator under CMC, we next establish its variance convergence rate.

\begin{proposition}\label{prop:var_jack_lhs}
	The variance of $V_{\mbox{\tiny L--JK}}^{\cu}$ satisfies
	$
	\Var{V_{\mbox{\tiny L--JK}}^{\cu}} = \cO(K^{-1}) + \cO(N^{-2}) .
	$
\end{proposition}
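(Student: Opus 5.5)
We would write the L--JK estimator as a fixed linear combination,
\[
V_{\mbox{\tiny L--JK}}^{\cu} = I\cdot V_{\mbox{\tiny L--NS}}^{\cu} - \frac{I-1}{I}\sum_{l=1}^{I} V_{\mbox{\tiny L--NS},-l}^{\cu},
\]
where $V_{\mbox{\tiny L--NS},-l}^{\cu}$ is the L--NS estimator built from the $N(I-1)/I$ inner replications outside section $l$. Since $I$ is fixed, $\operatorname{Var}(\sum_m c_m Z_m)\le (\sum_m |c_m|)\sum_m |c_m|\operatorname{Var}(Z_m)$, so it suffices to show that each component variance is $\cO(K^{-1})+\cO(N^{-2})$. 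This is cruder than the CMC analysis behind Proposition~\ref{prop:jack}, which tracked the covariances to identify the constants $a,b$. Here we need only a rate.

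Each component is itself an L--NS estimator, with inner sample size $N$ or $N(I-1)/I$. Proposition~\ref{prop:lhs_nest_var} gives $\Var{V_{\mbox{\tiny L--NS}}^{\cu}}=\cO(K^{-1})$, and likewise for each deleted version, since removing a section leaves $(I-1)N/I$ independent LHS design matrices for the inner level. That already yields $\cO(K^{-1})$.

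The $\cO(N^{-2})$ term appears only if the constants in Proposition~\ref{prop:lhs_nest_var} are not uniform in $N$. To guard against this, we would redo the variance computation by expanding each L--NS estimator as a degree-two polynomial in the $\lhs{L}_N(\lhs{\bX}_{\cu,i})$. We write $\lhs{L}_N(\lhs{\bX}_{\cu,i}) = L(\lhs{\bX}_{\cu,i}) + \varepsilon_i$, where $\varepsilon_i$ is the inner-level averaging error; it has conditional mean zero and conditional variance $\cO(N^{-1})$, using $\E{\cY^4}<\infty$. The variance of the centered quadratic then splits into three groups.
\begin{itemize}
\item Pure outer terms in $L(\lhs{\bX}_{\cu,i})$. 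These have variance $\cO(K^{-1})$ by the standard LHS variance bound for U/V-statistics of Owen type, which is dominated by CMC up to a constant.
\item Cross terms $L\varepsilon$. These are $\cO(K^{-1}N^{-1})$.
\item Terms in $\varepsilon_i^2$. Their conditional means are of order $N^{-1}$ but vary across scenarios. The fluctuation of $K^{-1}\sum_i \E{\varepsilon_i^2\mid\cdot}$ contributes $\cO(K^{-1}N^{-2})$. The residual deviation of the $I$-weighted combination of these bias terms, which no longer cancels exactly under LHS, contributes $\cO(N^{-2})$ after squaring.
\end{itemize}

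We expect the main obstacle to be controlling the covariances induced by LHS. The $\varepsilon_i$ are not independent across $i$, because the inner LHS matrices $\bA^{(j)}$ couple rows; likewise the outer scenarios are coupled. The approach would be to condition on the outer design and use the fact that, within one LHS matrix, the rows are negatively associated, or more simply exchangeable with pairwise covariance of order $\cO(K^{-1})$ times the single-row variance. The $K(K-1)$ off-diagonal covariance terms then contribute at the same order as the $K$ diagonal terms, preserving $\cO(K^{-1})$. The same argument handles the $\varepsilon_i^2$ fluctuations, yielding $\cO(K^{-1})$ plus the deterministic squared-bias-type remainder of order $N^{-2}$. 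Combining the three groups through the linear-combination bound gives $\Var{V_{\mbox{\tiny L--JK}}^{\cu}}=\cO(K^{-1})+\cO(N^{-2})$.
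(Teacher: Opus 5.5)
Your argument is correct, and it takes a different and more economical route than the paper's.

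\textbf{The paper's route.} The paper mirrors its CMC proof of Proposition~\ref{prop:jack}:
\begin{itemize}
\item it writes $V_{\mbox{\tiny L--JK}}^{\cu}=V_{\mbox{\tiny L--NS}}^{\cu}+\frac{I-1}{I}\sum_{l}b^{(l)}$ with $b^{(l)}=V_{\mbox{\tiny L--NS}}^{\cu}-V_{\mbox{\tiny L--NS},-l}^{\cu}$;
\item it expands the variance into $\Var{V_{\mbox{\tiny L--NS}}^{\cu}}$, $\Var{b^{(1)}}$ and two covariances;
\item it bounds $\Var{b^{(1)}}$ through an upper bound on the raw second moment $\E{(b^{(1)})^2}$, and handles the covariances by Cauchy--Schwarz.
\end{itemize}
The $\cO(N^{-2})$ term arises only because that second-moment bound keeps the deterministic piece $[\Var{\lhs{L}_N(\lhs{\bX}_\cu)}-\Var{\lhs{L}_{N,-1}(\lhs{\bX}_\cu)}]^2=\cO(N^{-2})$. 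The subtracted $\Ep{b^{(1)}}{2}$ does not fully cancel it.

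\textbf{Your route.} Your weighted Cauchy--Schwarz bound for a fixed linear combination is shift-invariant, so it never sees that piece. Apply Proposition~\ref{prop:lhs_nest_var} to the full estimator and to each deleted estimator. Each deleted estimator is equal in law to an L--NS estimator with inner size $N(I-1)/I$: the columns of an LHS matrix are permuted independently, so dropping the section that contains $\bA^{(1)}$ changes nothing structurally. This gives $\Var{V_{\mbox{\tiny L--JK}}^{\cu}}=\cO(K^{-1})$. That is sharper than the stated bound, not cruder as you suggest.

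Both routes rest on the same single input: an $\cO(K^{-1})$ bound for the L--NS variance whose constant is uniform in $N$. The bounds in the proof of Proposition~\ref{prop:lhs_nest_var} do not deteriorate as $N$ grows, since they are controlled by moments of $\cY$ up to order four. So your worry about non-uniform constants is unfounded. The same two-line argument under CMC also shows that the $bN^{-2}$ term in Proposition~\ref{prop:jack} is an artifact of the same bounding step.

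\textbf{Your third paragraph.} It is unnecessary, and its last bullet contains a conceptual slip. A non-cancelling bias of the $I$-weighted combination is deterministic. It shifts $\E{V_{\mbox{\tiny L--JK}}^{\cu}}$, which is the content of Proposition~\ref{prop:bias_jack_lhs}, but contributes nothing to the variance. It would show up only if one bounded $\E{(X-c)^2}$ with the wrong centering $c$, which is exactly the artifact in the paper's proof. Drop that branch and keep the short argument.
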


Combining Propositions~\ref{prop:bias_jack_lhs} and~\ref{prop:var_jack_lhs}, we obtain that the MSE of $V_{\mbox{\tiny L--JK}}^{\cu}$ remains $\cO(T^{-2/3})$ under the allocation $K=\cO(T^{2/3})$ and $N=\cO(T^{1/3})$.

In summary, while LHS can reduce variance through stratification, it also induces dependence among outer-level scenarios, thereby altering the bias-cancellation properties of bias-reduced nested simulation estimators. Consequently, the L--JK, L--SJ, and L--OH estimators generally exhibit residual bias under LHS. To prevent this bias from dominating the MSE, the inner-level sample size $N$ must increase with the total simulation budget $T$. These results highlight the need for caution when combining LHS with bias-reduction constructions in nested simulation, especially when using budget allocation strategies developed under CMC that keep $N$ fixed.

We conclude this section with a few remarks. The benefit of LHS depends strongly on the estimator structure. For the PF and CR estimators, LHS does not change the asymptotic bias and variance orders and therefore does not improve the $\cO(T^{-1})$ MSE convergence rate. For the standard NS estimator, LHS can be advantageous, especially for first-order indices ($|\cu|=1$), where stratification reduces the contribution of outer-level sampling variance to the MSE and yields a faster convergence rate than the CMC-based NS estimator. In contrast, when combined with bias-reduction constructions (JK, SJ, and OH), LHS induces dependence across outer-level scenarios that prevents exact bias cancellation. Consequently, to retain the intended benefits of bias correction under LHS, the inner-level sample size $N$ must grow with the total budget $T$,  rather than adopting CMC-based allocation strategies with fixed $N$.

\section{Numerical Evaluations}\label{sec:experiment}

In this section, we conduct numerical experiments to compare the performance of the estimators of the Sobol' index numerator under CMC and LHS, as described in Subsections \ref{subsec:numerical_SRS} and \ref{subsec:numerical_LHS}, respectively. The methods considered include the pick-freeze (PF) estimator, the Correlation 2 (CR) estimator, the nested simulation (NS) estimator, the $1\frac{1}{2}$-level nested-simulation (OH) estimator, the unbiased jackknife (JK) estimator, and the split jackknife (SJ) estimator. Recall that the PF estimator is a widely used approach for estimating Sobol' indices, whereas the CR estimator is particularly effective when the Sobol' index is small. Although the OH estimator has demonstrated strong performance in financial risk management applications, it has not been previously examined in the context of Sobol' index estimation. This study therefore provides an opportunity to assess its applicability and comparative efficiency in GSA.

We note that the theoretical results in Sections~\ref{sec:nest} and \ref{sec:ns_lhs} concern estimators of the Sobol' index numerator, $V=\Var{\E{\cY\mid \bX_\cu}}$. In this section, however, we report the MSE of the corresponding Sobol' index estimator, since the normalized index is the primary quantity of interest in GSA. The denominator, $\Var{\cY}$, is estimated from the simulation outputs using the standard sample-variance estimator, whose MSE is of order $\cO(T^{-1})$. Therefore, the denominator estimation error does not alter the convergence-rate comparisons suggested by the numerator analysis. When the numerator estimator converges more slowly than $\cO(T^{-1})$, the numerator error dominates the ratio error; when the numerator estimator has MSE of order $\cO(T^{-1})$, the denominator contributes at the same order. Hence, the Sobol' index MSEs reported below should be interpreted as empirical evidence supporting the numerator-based theory, up to constants introduced by ratio normalization.

We aim to estimate the first-order Sobol' index $S^{\cu}$ for $\cu=\{i\}$ with each $i\in[p]$, using a total budget $T \in \{0.5 \times 10^4, 1 \times 10^4, 0.5 \times 10^5, 1 \times 10^5, 0.5 \times 10^6, 1 \times 10^6 \}$. A fraction $(p+1)^{-1}T$ is allocated to estimate the common denominator $\Var{\cY}$ for all Sobol' indices, and the remaining budget is allocated equally across the $p$ input dimensions to estimate $\Var{\E{\cY \mid X_i}}$ for $i\in[p]$.
We adopt the following budget allocation strategies for constructing different estimators under CMC and apply the same strategies under LHS. For the OH and NS estimators, which require a pilot experiment to estimate unknown quantities, we allocate $10\%$ of the total budget $T$ to each input dimension for this purpose. For the JK estimator, we set $K = \left\lceil T^{2/3} \right\rceil$ and $N = \left\lceil T/K \right\rceil$. For the SJ estimator, we allocate $10\%$ of the total budget to the preliminary dataset $\cD_{\mbox{\tiny pre}}$ and fix the inner-level sample size for estimating $V$ at $N = 10$. Our numerical results indicate that $\cD_{\mbox{\tiny pre}}$ need not be large and that the estimator's performance is robust to the choice of $N$.
For performance evaluation, we conduct independent macro-replications and use the MSE as the performance metric, defined as
$
\mbox{MSE} \coloneqq M^{-1}\sum_{m=1}^{M}\left(\widehat{S}^{i}_{m} - S^{i} \right)^2,
$
where $\widehat{S}^{i}_{m}$ denotes the estimator of $S^{i}$ obtained from the $m$th macro-replication, computed as the ratio of the chosen estimator of $\Var{\E{\cY \mid X_i}}$ to the estimate of $\Var{\cY}$. In all experiments, we take $M=1{,}000$.

We consider three numerical examples. The first two are benchmark test functions commonly used in the GSA literature, whereas the third is a practical case study drawn from a real-world application. The true Sobol' indices for all examples are reported in Appendix~\ref{app:add_info_numerical}.

\vskip1ex
\noindent\textbf{\emph{Ishigami function.}} The Ishigami function is a classical benchmark model for evaluating the performance of Sobol' index estimators \citep{ishigami1990importance}. It is defined as
$
\cY=\sin \left(X_{1}\right)+7 \sin^2 \left(X_{2}\right)+0.1 X_{3}^{4} \sin \left(X_{1}\right),
$
where the $X_i$'s are independent random variables uniformly distributed on $[-\pi,\pi]$.

\vskip1ex
\noindent\textbf{\emph{$g$-function.}} The $g$-function is widely used for assessing the performance of Sobol' index estimators \citep{owen2013better}. A $p$-dimensional $g$-function is defined as
$
\cY = \prod_{i=1}^{p} g_i(X_i),
$
where $g_i(X_i) = (|4X_i - 2| + a_i)/(1 + a_i)$ with $a_i \ge 0$, and the $X_i$'s are independent and uniformly distributed over $[0,1]$. We consider two cases: $p=3$ and $p=5$. In the three-dimensional (3D) case, $a_1=19$, $a_2=9$, and $a_3=4$; in the five-dimensional (5D) case, $a_i=i$ for $i\in[5]$.

\vskip1ex
\noindent\textbf{\emph{Hydrological model (hymod).}} This example simulates rainfall--runoff processes and is widely used in hydrological modeling and sensitivity analysis \citep{wagener2001framework}. Its structure consists of a nonlinear tank connected to three parallel linear tanks that represent surface flow, together with a separate slow-flow tank that models groundwater movement. The model depends on five key input parameters: {\tt Sm} ($X_1$, maximum watershed storage capacity), {\tt beta} ($X_2$, spatial variability of soil moisture capacity), {\tt alpha} ($X_3$, flow partitioning coefficient between fast and slow pathways), and two residence-time parameters, {\tt Rf} and {\tt Rs} ($X_4$ and $X_5$, representing the quick-flow and slow-flow tanks, respectively). Detailed descriptions and distributions of these parameters are provided in Appendix~\ref{app:add_info_numerical}. The model output in this study is the Nash--Sutcliffe efficiency (NSE), a widely used performance measure in hydrology that quantifies the agreement between simulated and observed streamflow. All hymod simulations are implemented using the MATLAB package developed by \cite{pianosi2015matlab}.

\subsection{Numerical Evaluation of Sobol' Index Estimators under CMC}\label{subsec:numerical_SRS}

 This subsection examines the performance of the Sobol' index estimators under CMC. Figure \ref{fig:ishigami_mse} reports the estimated MSE convergence rates for all estimators applied to the Ishigami function example. Two reference lines with slopes $-1$ and $-2/3$ are included for comparison. The results show that the MSEs of the NS and JK estimators converge at a rate of $\cO(T^{-2/3})$, whereas the remaining estimators achieve the faster rate $\cO(T^{-1})$, consistent with the theoretical results.
Several general observations emerge. The NS estimator consistently exhibits inferior performance across all input variables. The SJ and OH estimators display very similar performance. The PF estimator performs best for large Sobol' indices (e.g., $S^{1}=0.3134$ and $S^{2}=0.4424$) but performs poorly when the index is small. Conversely, the CR estimator is effective for small indices but less accurate for large ones. In contrast, the JK, OH, and SJ estimators produce stable and accurate estimates across a wide range of index magnitudes, demonstrating robustness to index size.

\begin{figure}[!htbp]
     \centering
     \begin{subfigure}[b]{0.48\textwidth}
         \centering
         \includegraphics[width=\textwidth]{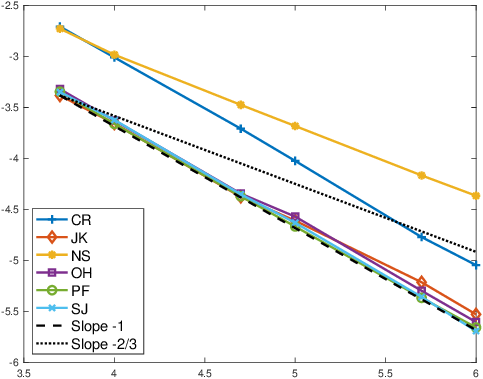}
         \caption{$X_1$}
     \end{subfigure}
     \hfill
     \begin{subfigure}[b]{0.48\textwidth}
         \centering
         \includegraphics[width=\textwidth]{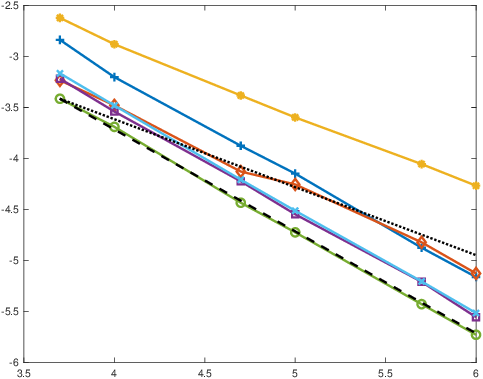}
         \caption{$X_2$}
     \end{subfigure}
     \hfill
     \begin{subfigure}[b]{0.48\textwidth}
         \centering
         \includegraphics[width=\textwidth]{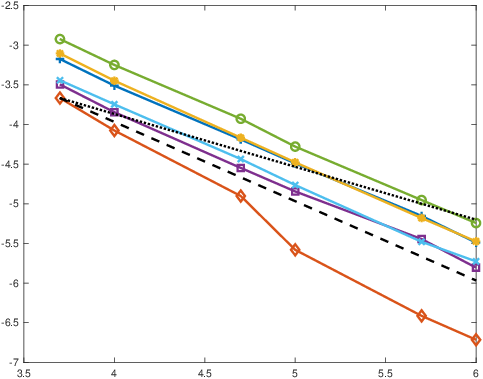}
         \caption{$X_3$}
     \end{subfigure}
\caption{Log--log MSE under CMC (y-axis)  versus total budget $T$ (x-axis)  for the Ishigami function example. Reference slopes $-1$ and $-2/3$ indicate rates $\cO(T^{-1})$ and $\cO(T^{-2/3})$.}
\label{fig:ishigami_mse}
\end{figure}

Figures \ref{fig:gfunc_3d_mse} and \ref{fig:gfunc_5d_mse} report the MSE convergence rates for the $g$-function example and show trends similar to those observed for the Ishigami function example. Specifically, for $X_3$ in the 3D case (and $X_1$ in the 5D case), both of which correspond to relatively large Sobol' indices, the PF estimator performs best, followed by the SJ and OH estimators, whereas the JK, CR, and NS estimators perform worse.
For $X_1$ and $X_2$ in the 3D case (and $X_2$---$X_5$ in the 5D case), which are associated with small Sobol' indices, the CR estimator achieves the best performance, and the SJ and OH estimators remain competitive. The JK estimator performs well for small budgets, but its relative efficiency deteriorates as the budget increases due to its slower MSE convergence rate. The PF and NS estimators consistently exhibit the poorest performance when the Sobol' index is small.

\begin{figure}[!htbp]
     \centering
     \begin{subfigure}[b]{0.48\textwidth}
         \centering
         \includegraphics[width=\textwidth]{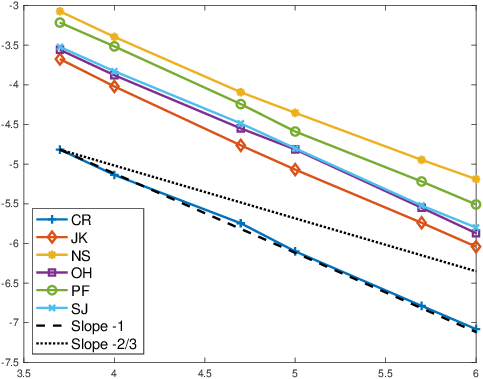}
         \caption{$X_1$}
     \end{subfigure}
     \hfill
     \begin{subfigure}[b]{0.48\textwidth}
         \centering
         \includegraphics[width=\textwidth]{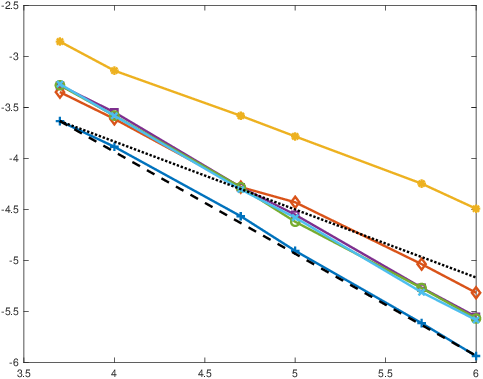}
         \caption{$X_2$}
     \end{subfigure}
     \hfill
     \begin{subfigure}[b]{0.48\textwidth}
         \centering
         \includegraphics[width=\textwidth]{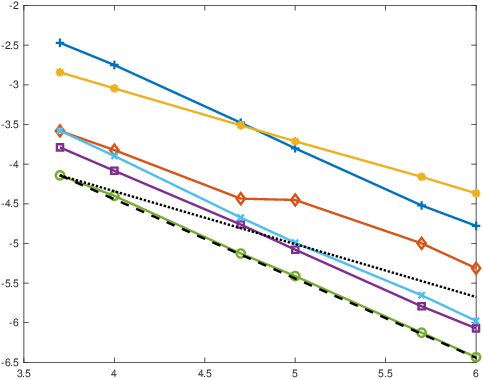}
         \caption{$X_3$}
     \end{subfigure}
\caption{Log--log MSE under CMC (y-axis)  versus total budget $T$ (x-axis)  for the $3$D  $g$-function example. Reference slopes $-1$ and $-2/3$ indicate rates $\cO(T^{-1})$ and $\cO(T^{-2/3})$.}
        \label{fig:gfunc_3d_mse}
\end{figure}

\begin{figure}[!htbp]
     \centering
     \begin{subfigure}[b]{0.48\textwidth}
         \centering
         \includegraphics[width=\textwidth]{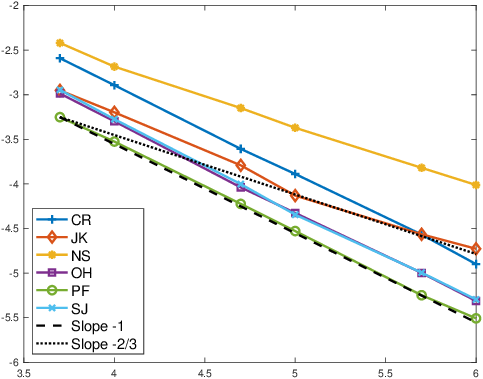}
         \caption{$X_1$}
     \end{subfigure}
     \hfill
     \begin{subfigure}[b]{0.48\textwidth}
         \centering
         \includegraphics[width=\textwidth]{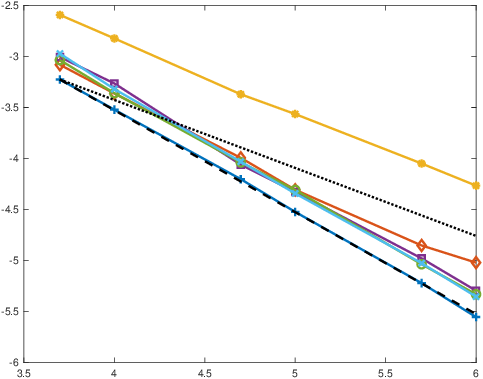}
         \caption{$X_2$}
     \end{subfigure}
     \hfill
     \begin{subfigure}[b]{0.48\textwidth}
         \centering
         \includegraphics[width=\textwidth]{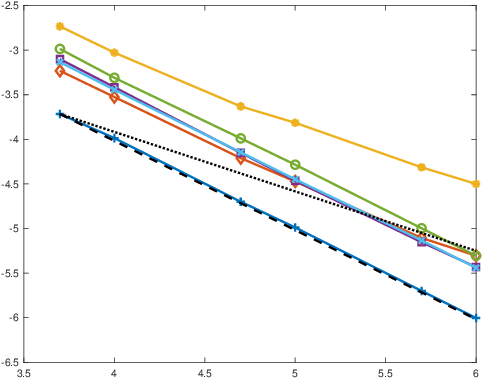}
         \caption{$X_3$}
     \end{subfigure}
     \hfill
     \begin{subfigure}[b]{0.48\textwidth}
         \centering
         \includegraphics[width=\textwidth]{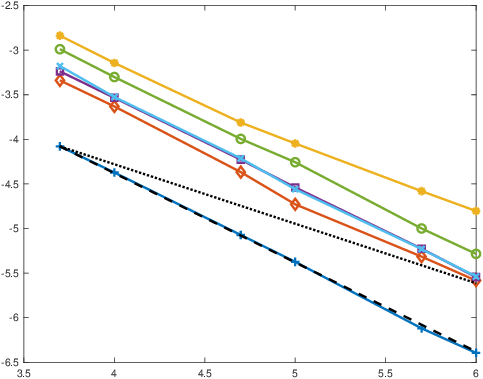}
         \caption{$X_4$}
     \end{subfigure}
     \hspace{0.15\textwidth}
     \begin{subfigure}[b]{0.48\textwidth}
         \centering
         \includegraphics[width=\textwidth]{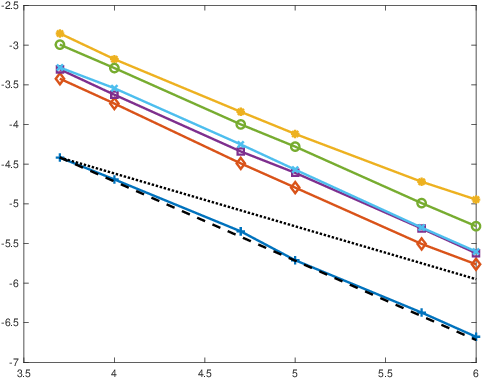}
         \caption{$X_5$}
     \end{subfigure}
\caption{Log--log MSE under CMC (y-axis)  versus total budget $T$ (x-axis)  for the $5$D  $g$-function example. Reference slopes $-1$ and $-2/3$ indicate rates $\cO(T^{-1})$ and $\cO(T^{-2/3})$.}
        \label{fig:gfunc_5d_mse}
\end{figure}

Figure \ref{fig:hymod_mse} illustrates the performance of the various estimators for the hymod example and reveals patterns consistent with those observed in the previous two test cases. For inputs $X_3$ and $X_5$, which correspond to relatively large Sobol' indices, most estimators perform well, with the notable exceptions of the CR and NS estimators. In contrast, for inputs $X_1$, $X_2$, and $X_4$, which are associated with small Sobol' indices, the CR estimator achieves the best performance, whereas the PF and NS estimators yield the least favorable results.

\begin{figure}[!htbp]
     \centering
     \begin{subfigure}[b]{0.48\textwidth}
         \centering
         \includegraphics[width=\textwidth]{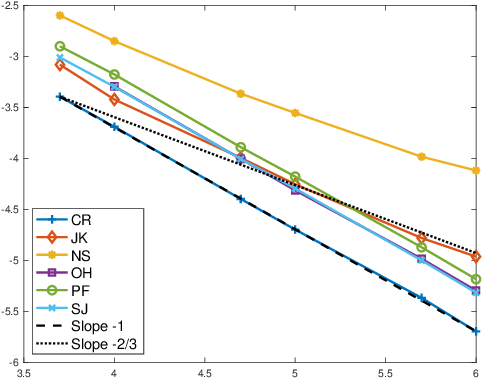}
         \caption{$X_1$}
     \end{subfigure}
     \hfill
     \begin{subfigure}[b]{0.48\textwidth}
         \centering
         \includegraphics[width=\textwidth]{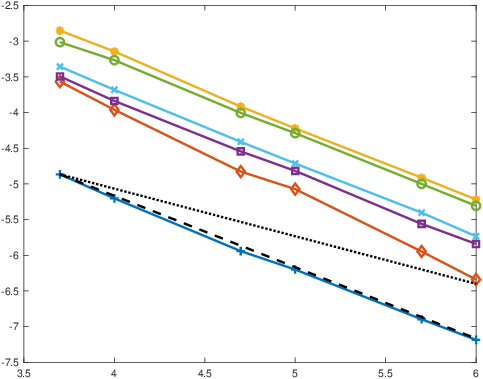}
         \caption{$X_2$}
     \end{subfigure}
     \hfill
     \begin{subfigure}[b]{0.48\textwidth}
         \centering
         \includegraphics[width=\textwidth]{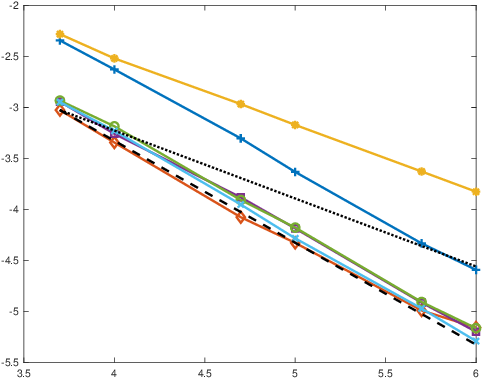}
         \caption{$X_3$}
     \end{subfigure}
	 \hfill
     \vspace{0.5em}
     \begin{subfigure}[b]{0.48\textwidth}
         \centering
         \includegraphics[width=\textwidth]{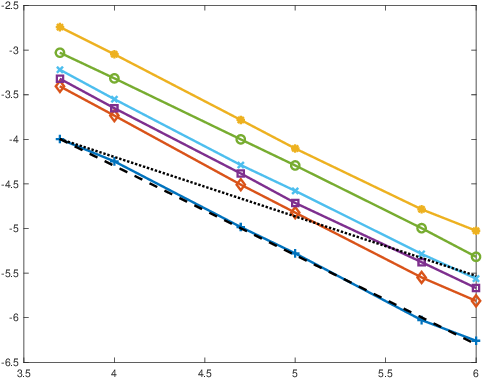}
         \caption{$X_4$}
     \end{subfigure}
     \hspace{0.15\textwidth}
     \begin{subfigure}[b]{0.48\textwidth}
         \centering
         \includegraphics[width=\textwidth]{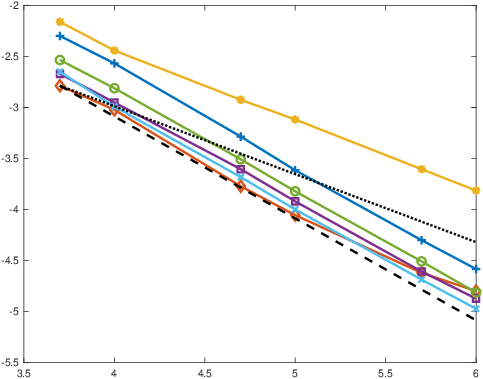}
         \caption{$X_5$}
     \end{subfigure}

\caption{Log--log MSE under CMC (y-axis)  versus total budget $T$ (x-axis)  for the hymod example. Reference slopes $-1$ and $-2/3$ indicate rates $\cO(T^{-1})$ and $\cO(T^{-2/3})$.}
     \label{fig:hymod_mse}
\end{figure}

We conclude this subsection with Table \ref{tab:performance}, which summarizes the findings and provides practical recommendations for selecting estimators of the Sobol' index numerator under CMC. When the Sobol' index is expected to be small, the CR estimator is the most suitable. Conversely, when the index is anticipated to be large---specifically, greater than $0.4$---the PF estimator is preferred. When no prior information about the index magnitude is available, the proposed jackknife estimators (JK in \eqref{eq:jack} and SJ in \eqref{eq:fse}), together with the OH estimator, provide robust performance across a broad range of index values. Although the SJ and OH estimators exhibit comparable overall accuracy, the SJ estimator is generally preferred because it does not require a pilot experiment to estimate higher-order moments, which is needed to construct the OH estimator.

\begin{table}[t]
\centering
\caption{Performance summary and recommendations for estimators of the Sobol' index numerator under CMC.}
\label{tab:performance}
\begin{tabular}{cccp{4cm}p{5.5cm}}
\hline
Estimator &  Biased for $V$ &  MSE  & Applicability & Recommendations \\ \hline

PF in \eqref{eq:pick_freeze}
& Yes
& $\mathcal{O}(T^{-1})$
& large indices ($>0.4$)
& Preferred when the Sobol' indices are large \\

CR in \eqref{eq:corr2}
& No
& $\mathcal{O}(T^{-1})$
& small indices ($<0.1$)
& Best choice for small indices \\
OH in \eqref{eq:one-half}
& No
& $\mathcal{O}(T^{-1})$
& all ranges
& Robust across all index values; requires a pilot experiment \\

NS in \eqref{eq:nest}
& Yes
& $\mathcal{O}(T^{-2/3})$
& all ranges
& Robust across all index values; slower convergence \\

JK in \eqref{eq:jack}
& No
& $\mathcal{O}(T^{-2/3})$
& all ranges
& Robust across all index values; a good choice under a limited budget \\

SJ in \eqref{eq:fse}
& Yes
& $\mathcal{O}(T^{-1})$
& all ranges
& Robust and efficient; preferred over OH (no pilot experiment required) \\ \hline

\end{tabular}
\end{table}

\subsection{Numerical Evaluation of Sobol' Index Estimators under LHS}
\label{subsec:numerical_LHS}

This subsection compares the performance of the Sobol' index estimators under LHS. We apply the same budget allocation strategies as those used for the corresponding estimators under CMC.

Figure \ref{fig:ishigami_mse_lhs} reports the MSE convergence rates for all estimators for the Ishigami function example under LHS. The MSE of the L--NS estimator ranges from $\cO(T^{-2/3})$ to $\cO(T^{-1})$, and even improves beyond $\cO(T^{-1})$ when estimating $S^{3}$. The L--JK estimator roughly follows the $\cO(T^{-2/3})$ rate. In contrast, the MSEs of the L--SJ and L--OH estimators remain at $\cO(1)$ as the budget increases, because they employ fixed inner-level sample sizes; this behavior is consistent with Propositions \ref{prop:bias_sj_lhs} and \ref{prop:bias_oh_lhs}.
Several further observations are worth noting. The L--NS estimator substantially outperforms the others when estimating $S^{1}$ and $S^{3}$, consistent with the theoretical discussion in the illustrative example following Proposition \ref{prop:lhs_nest_bias}. Its improvement for $S^{2}$ is less pronounced, and the MSE convergence rate remains $\cO(T^{-2/3})$. The L--PF and L--CR estimators retain their respective strengths for estimating large and small Sobol' indices, as observed under CMC, with modest improvements under LHS.
The L--JK estimator performs poorly because it is no longer unbiased under LHS. Finally, the L--OH and L--SJ estimators exhibit the least favorable performance across all Sobol' indices, as their biases persist due to the fixed inner-level sample sizes.

\begin{figure}[!htbp]
     \centering
     \begin{subfigure}[b]{0.46\textwidth}
         \centering
         \includegraphics[width=\textwidth]{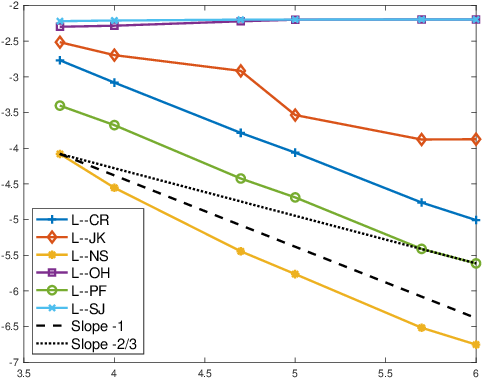}
         \caption{$X_1$}
     \end{subfigure}
     \hfill
     \begin{subfigure}[b]{0.46\textwidth}
         \centering
         \includegraphics[width=\textwidth]{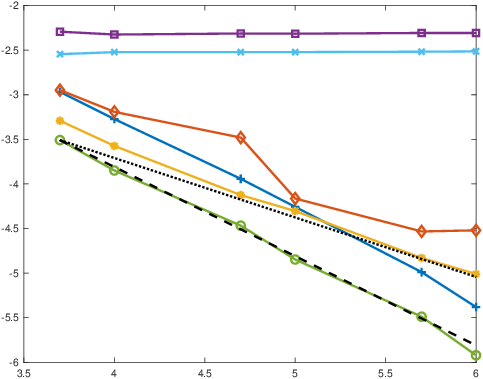}
         \caption{$X_2$}
     \end{subfigure}
     \hfill
     \begin{subfigure}[b]{0.46\textwidth}
         \centering
         \includegraphics[width=\textwidth]{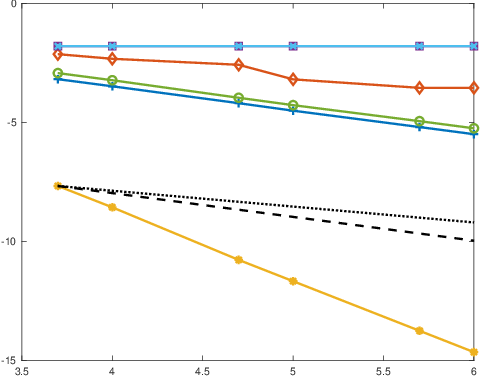}
         \caption{$X_3$}
     \end{subfigure}
\caption{Log--log MSE under LHS (y-axis) versus total budget $T$ (x-axis) for the Ishigami function example. Reference slopes $-1$ and $-2/3$ indicate rates $\cO(T^{-1})$ and $\cO(T^{-2/3})$.}
        \label{fig:ishigami_mse_lhs}
\end{figure}

Figures \ref{fig:gfunc_3d_mse_lhs} and \ref{fig:gfunc_5d_mse_lhs} show the MSE convergence rates for the $g$-function example under LHS and exhibit trends similar to those observed for the Ishigami function example. The L--NS estimator outperforms all other estimators across all input variables, whereas the L--JK, L--SJ, and L--OH estimators consistently exhibit inferior performance. The L--PF and L--CR estimators behave similarly to their CMC counterparts, with their relative performance depending on the magnitude of the Sobol' indices being estimated.

\begin{figure}[!htbp]
     \centering
     \begin{subfigure}[b]{0.46\textwidth}
         \centering
         \includegraphics[width=\textwidth]{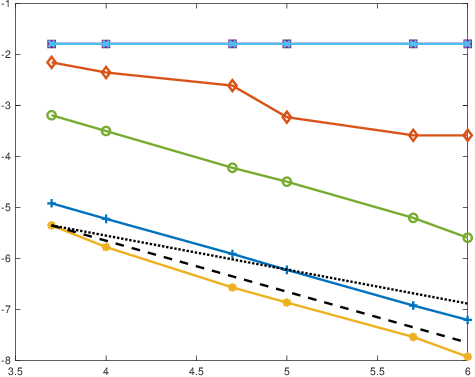}
         \caption{$X_1$}
     \end{subfigure}
     \hfill
     \begin{subfigure}[b]{0.46\textwidth}
         \centering
         \includegraphics[width=\textwidth]{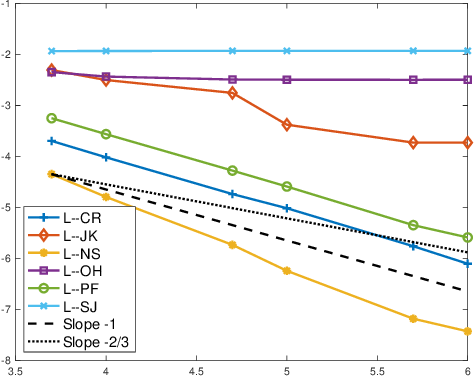}
         \caption{$X_2$}
     \end{subfigure}
     \hfill
     \begin{subfigure}[b]{0.46\textwidth}
         \centering
         \includegraphics[width=\textwidth]{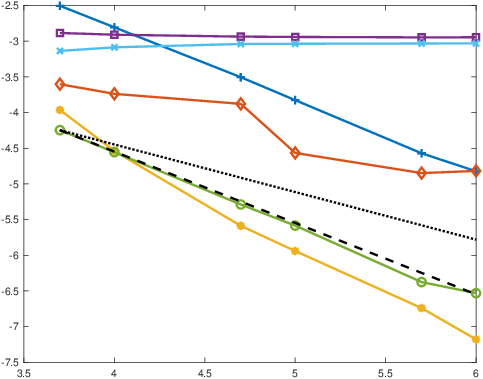}
         \caption{$X_3$}
     \end{subfigure}
\caption{Log--log MSE under LHS (y-axis) versus total budget $T$ (x-axis) for the 3D $g$-function example. Reference slopes $-1$ and $-2/3$ indicate rates $\cO(T^{-1})$ and $\cO(T^{-2/3})$.}
        \label{fig:gfunc_3d_mse_lhs}
\end{figure}

\begin{figure}[!htbp]
     \centering
     \begin{subfigure}[b]{0.46\textwidth}
         \centering
         \includegraphics[width=\textwidth]{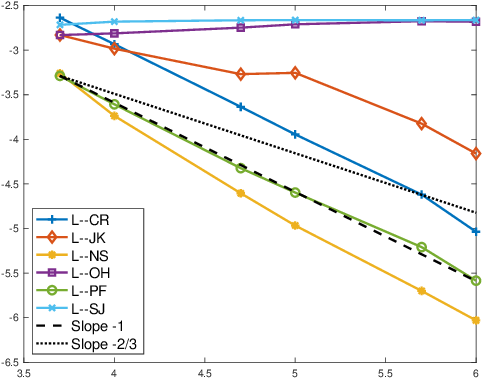}
         \caption{$X_1$}
     \end{subfigure}
     \hfill
     \begin{subfigure}[b]{0.46\textwidth}
         \centering
         \includegraphics[width=\textwidth]{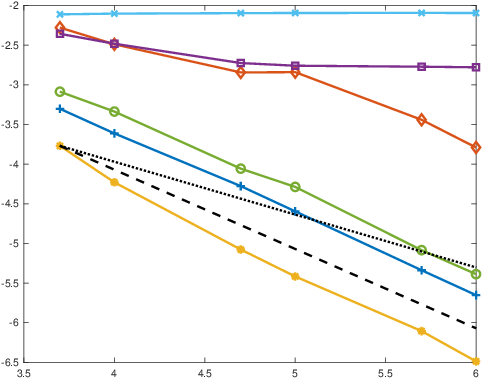}
         \caption{$X_2$}
     \end{subfigure}
     \hfill
     \begin{subfigure}[b]{0.46\textwidth}
         \centering
         \includegraphics[width=\textwidth]{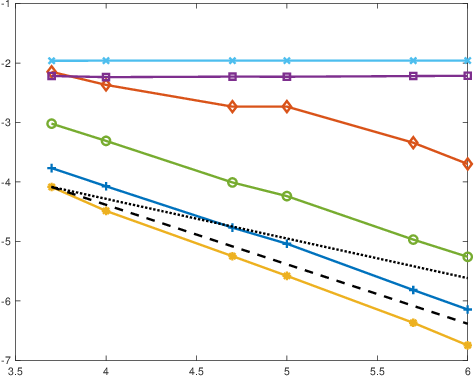}
         \caption{$X_3$}
     \end{subfigure}
     \hfill
     \begin{subfigure}[b]{0.46\textwidth}
         \centering
         \includegraphics[width=\textwidth]{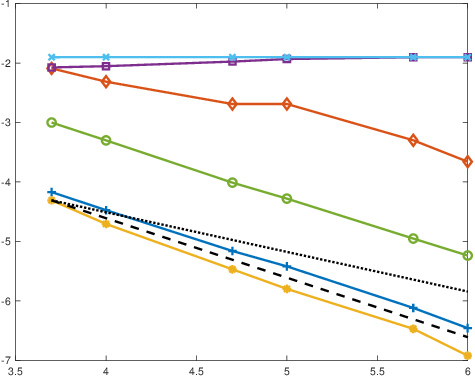}
         \caption{$X_4$}
     \end{subfigure}
     \hspace{0.15\textwidth}
     \begin{subfigure}[b]{0.46\textwidth}
         \centering
         \includegraphics[width=\textwidth]{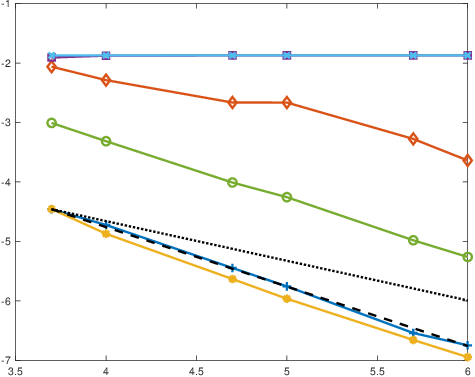}
         \caption{$X_5$}
     \end{subfigure}
\caption{Log--log MSE under LHS (y-axis) versus total budget $T$ (x-axis) for the 5D $g$-function example. Reference slopes $-1$ and $-2/3$ indicate rates $\cO(T^{-1})$ and $\cO(T^{-2/3})$.}
        \label{fig:gfunc_5d_mse_lhs}
\end{figure}

Figure \ref{fig:hymod_mse_lhs} illustrates the performance of various estimators for the hymod example under LHS and shows patterns largely consistent with those observed in the previous two examples, with minor differences. For inputs $X_3$ and $X_5$, which correspond to large Sobol' indices, the L--NS estimator performs best, followed by L--PF, whereas the remaining estimators are less competitive. Conversely, for inputs $X_1$, $X_2$, and $X_4$, which are associated with small Sobol' indices, the L--CR estimator performs best, followed by the L--NS estimator, while the other estimators yield the least favorable results. In this example, the L--NS estimator does not exhibit the dominant performance observed in the previous examples, largely because the non-negligible value of $R_{\cu}$ (defined in \eqref{eq:remain_effect}) for all inputs limits the improvement attainable from LHS.

\begin{figure}[!htbp]
     \centering
     \begin{subfigure}[b]{0.46\textwidth}
         \centering
         \includegraphics[width=\textwidth]{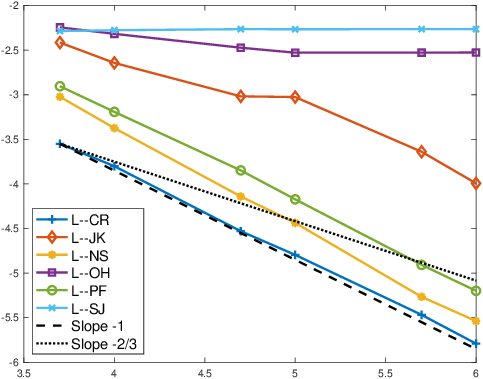}
         \caption{$X_1$}
     \end{subfigure}
     \hfill
     \begin{subfigure}[b]{0.46\textwidth}
         \centering
         \includegraphics[width=\textwidth]{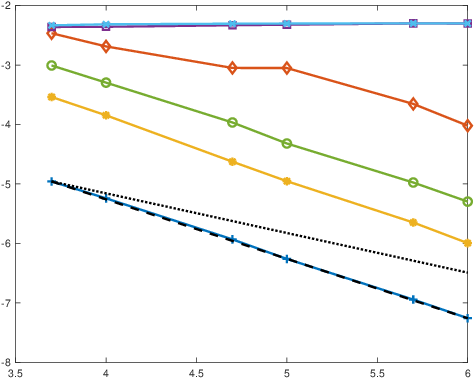}
         \caption{$X_2$}
     \end{subfigure}
     \hfill
     \begin{subfigure}[b]{0.46\textwidth}
         \centering
         \includegraphics[width=\textwidth]{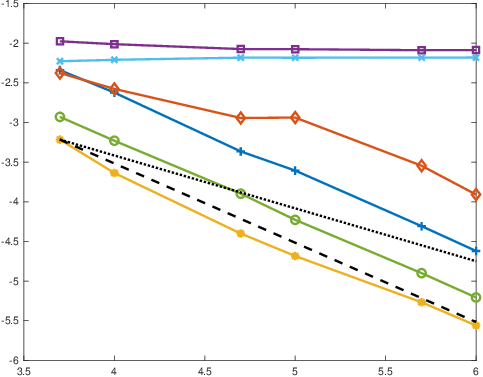}
         \caption{$X_3$}
     \end{subfigure}
	 \hfill
     \vspace{0.5em}
     \begin{subfigure}[b]{0.46\textwidth}
         \centering
         \includegraphics[width=\textwidth]{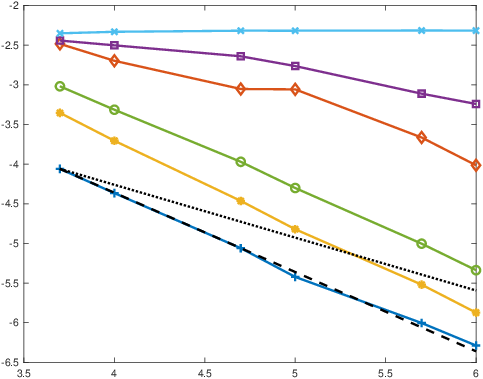}
         \caption{$X_4$}
     \end{subfigure}
     \hspace{0.15\textwidth}
     \begin{subfigure}[b]{0.46\textwidth}
         \centering
         \includegraphics[width=\textwidth]{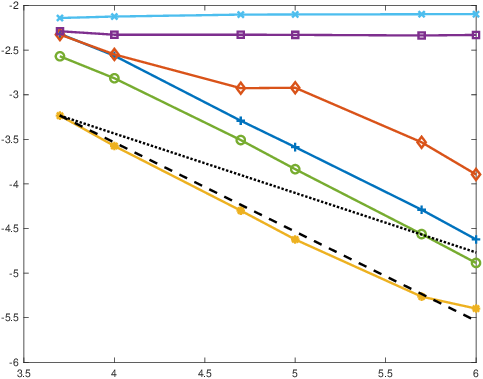}
         \caption{$X_5$}
     \end{subfigure}
\caption{Log--log MSE under LHS (y-axis) versus total budget $T$ (x-axis) for the hymod example. Reference slopes $-1$ and $-2/3$ indicate rates $\cO(T^{-1})$ and $\cO(T^{-2/3})$.}
     \label{fig:hymod_mse_lhs}
\end{figure}

Table~\ref{tab:lhs_performance} concludes this subsection by summarizing the performance of several estimators of the first-order Sobol' index numerator under LHS. When the Sobol' index is expected to be small, the L--NS and L--CR estimators are the most suitable choices. Conversely, when the index is anticipated to be large---specifically, greater than $0.4$---the L--NS estimator is generally preferred, with L--PF serving as a viable alternative. When the Sobol' index magnitude is unknown, we recommend the L--NS estimator, as it delivers robust performance and often outperforms L--PF and L--CR across a wide range of settings.

\begin{table}[t]
\centering
\caption{Performance summary and recommendations for estimators of the Sobol' index numerator under LHS.}
\label{tab:lhs_performance}
\begin{tabular}{ccccp{6.0cm}}
\hline
Estimator &  Biased for $V$  &  MSE  & Applicability & Recommendations \\ \hline

L--PF
& Yes
& $\mathcal{O}(T^{-1})$
& large indices ($>0.4$)
& Effective for large Sobol' indices \\
L--CR
& No
& $\mathcal{O}(T^{-1})$
& small indices ($<0.1$)
& Best choice for small indices \\

L--NS
& Yes
& $o(T^{-2/3})$
& all ranges
& Most robust and reliable across index magnitudes \\

L--JK
& Yes
& $\mathcal{O}(T^{-2/3})$
& not recommended
& Biased under LHS; exhibits inferior performance \\

L--SJ
& Yes
& $\mathcal{O}(1)$
& not recommended
& Bias dominates the MSE due to fixed inner-level sample size \\

L--OH
& Yes
& $\mathcal{O}(1)$
& not recommended
& Similar to L--SJ; bias prevents convergence \\ \hline

\end{tabular}
\end{table}

\section{Conclusions}
\label{sec:conclusion}

This paper studied Sobol' index estimation through a unified nested simulation framework under a fixed computational budget.
This perspective connects classical pick--freeze estimators with nested simulation estimators and shows how estimator structure, bias correction, budget allocation, and sampling design jointly determine performance. Under CMC with a budget of $T$ model evaluations, the standard nested simulation (NS) estimator attains the characteristic MSE rate $\mathcal{O}(T^{-2/3})$. The proposed unbiased jackknife (JK) estimator removes the leading bias but does not improve this rate, whereas the split jackknife (SJ) estimator attains the canonical rate $\mathcal{O}(T^{-1})$. This highlights the value of sample splitting when the centering term in the Sobol' index numerator must be estimated from simulation outputs.

We also characterized the impact of LHS on these estimators. For first-order Sobol' indices, LHS can improve the bias and variance of the NS estimator, yielding faster MSE convergence than under CMC. In contrast, LHS can be detrimental to bias-reduced NS estimators, including JK, SJ, and OH, because the dependence induced across outer-level scenarios can prevent their biases from vanishing unless the inner-level sample size $N$ grows with the total budget $T$. This condition is not needed under CMC. The numerical experiments corroborate these theoretical findings and illustrate their practical implications.

In practice, pick--freeze estimators can be effective when the index magnitude is known in advance: PF is preferable for large indices, whereas CR is preferable for small indices. When the index scale is unknown, however, these estimators can be less robust. Under CMC, JK and SJ provide stable performance across a wide range of Sobol' index values. Under LHS, their advantages diminish, and the standard NS estimator emerges as the most reliable choice among those studied.

Several directions remain for future research. Extending the LHS analysis and empirical comparisons to higher-order and total-effect indices would clarify the scope of the improvements observed for NS. Developing data-driven budget allocation rules under LHS could also help determine when $N$ should grow with $T$, particularly for bias-reduced estimators. Another direction is to study how other sampling designs, such as quasi-Monte Carlo (QMC) and randomized QMC, interact with nested simulation and bias reduction. Although these designs can improve integration accuracy in related sensitivity-analysis settings, their dependence structures may affect bias cancellation differently from both CMC and LHS. Finally, new nested simulation sampling designs that retain variance-reduction benefits while controlling dependence may yield further efficiency gains for large-scale sensitivity analysis.

{\footnotesize
\bibliographystyle{plainnat}
\bibliography{reference}

@article{sun2011efficient,
author = {Sun, Yunpeng and Apley, Daniel W. and Staum, Jeremy},
title = {Efficient Nested Simulation for Estimating the Variance of a Conditional Expectation},
journal = {Operations Research},
volume = {59},
number = {4},
pages = {998-1007},
year = {2011},
}

@article{gordy2010nested,
author = {Gordy, Michael B. and Juneja, Sandeep},
title = {Nested Simulation in Portfolio Risk Measurement},
journal = {Management Science},
volume = {56},
number = {10},
pages = {1833-1848},
year = {2010}
}

@article{liang2024fast,
author = {Liang, Guo and Zhang, Kun and Luo, Jun},
title = {A FAST Method for Nested Estimation},
journal = {INFORMS Journal on Computing},
volume = {36},
number = {6},
pages = {1481-1500},
year = {2024},
}

@article{sobol1990sensitivity,
  title={On Sensitivity Estimation for Nonlinear Mathematical Models},
  author={Sobol', Il'ya Meerovich},
  journal={Matematicheskoe Modelirovanie},
  volume={2},
  number={1},
  pages={112--118},
  year={1990},
  publisher={Russian Academy of Sciences, Branch of Mathematical Sciences}
}

@article{tarantola2007estimating,
  title={Estimating the Approximation Error When Fixing Unessential Factors in Global Sensitivity Analysis},
  author={Tarantola, Stefano and Gatelli, D and Kucherenko, SS and Mauntz, Wolfgang and others},
  journal={Reliability Engineering \& System Safety},
  volume={92},
  number={7},
  pages={957--960},
  year={2007},
  publisher={Elsevier}
}

@article{saltelli2010variance,
  title={Variance Based Sensitivity Analysis of Model Output. {D}esign and Estimator for the Total Sensitivity Index},
  author={Saltelli, Andrea and Annoni, Paola and Azzini, Ivano and Campolongo, Francesca and Ratto, Marco and Tarantola, Stefano},
  journal={Computer Physics Communications},
  volume={181},
  number={2},
  pages={259--270},
  year={2010},
  publisher={Elsevier}
}

@article{janon2014asymptotic,
  title={Asymptotic Normality and Efficiency of Two {S}obol' Index Estimators},
  author={Janon, Alexandre and Klein, Thierry and Lagnoux, Agnes and Nodet, Ma{\"e}lle and Prieur, Cl{\'e}mentine},
  journal={ESAIM: Probability and Statistics},
  volume={18},
  pages={342--364},
  year={2014},
  publisher={EDP Sciences}
}

@article{owen2013better,
author = {Owen, Art B.},
title = {Better estimation of small {S}obol' sensitivity indices},
year = {2013},
volume = {23},
number = {2},
journal = {ACM Transactions on Modeling and Computer Simulation},
pages={1--17},
}

@INPROCEEDINGS{ishigami1990importance,  
    author={Ishigami, Tsutoomu and Homma, Toshimitsu},  
    booktitle={the Proceedings of the First International Symposium on Uncertainty Modeling and Analysis},   
    title={An Importance Quantification Technique in Uncertainty Analysis for Computer Models},   
    year={1990},  
    volume={},  
    number={},  
    pages={398--403} 
}

@article{gamboa2016statistical,
  title={Statistical inference for {S}obol' pick-freeze {M}onte {C}arlo method},
  author={Gamboa, Fabrice and Janon, Alexandre and Klein, Thierry and Lagnoux, Agn{\`e}s and Prieur, Cl{\'e}mentine},
  journal={Statistics},
  volume={50},
  number={4},
  pages={881--902},
  year={2016},
  publisher={Taylor \& Francis}
}

@article{wagener2001framework,
  title={A framework for development and application of hydrological models},
  author={Wagener, Thorsten and Boyle, Douglas P and Lees, Matthew J and Wheater, Howard S and Gupta, Hoshin V and Sorooshian, Soroosh},
  journal={Hydrology and Earth System Sciences},
  volume={5},
  number={1},
  pages={13--26},
  year={2001},
  publisher={Copernicus GmbH}
}

@article{stein1987large,
 author = {Michael Stein},
 journal = {Technometrics},
 number = {2},
 pages = {143--151},
 publisher = {[Taylor & Francis, Ltd., American Statistical Association, American Society for Quality]},
 title = {Large Sample Properties of Simulations Using {L}atin hypercube sampling},
 volume = {29},
 year = {1987}
}

@article{goda2017computing,
title = {Computing the variance of a conditional expectation via non-nested {M}onte {C}arlo},
journal = {Operations Research Letters},
volume = {45},
number = {1},
pages = {63-67},
year = {2017},
author = {Takashi Goda}
}

@article{zhang2022bootstrap,
author = {Zhang, Kun and Liu, Guangwu and Wang, Shiyu},
title = {Technical Note—{B}ootstrap-based Budget Allocation for Nested Simulation},
journal = {Operations Research},
volume = {70},
number = {2},
pages = {1128-1142},
year = {2022},
}

@article{pianosi2015matlab,
title = {A {MATLAB} toolbox for Global Sensitivity Analysis},
journal = {Environmental Modelling \& Software},
volume = {70},
pages = {80-85},
year = {2015},
author = {Francesca Pianosi and Fanny Sarrazin and Thorsten Wagener}
}

@book{owen2013mcbook,
   author = {Art B. Owen},
   year = 2018,
   title = {Monte Carlo theory, methods and examples},
   publisher = {\url{https://artowen.su.domains/mc/}}
}

@article{damblin2021adaptive,
title = {Adaptive use of replicated {L}atin Hypercube Designs for computing {S}obol' sensitivity indices},
journal = {Reliability Engineering \& System Safety},
volume = {212},
pages = {107507},
year = {2021},
author = {Guillaume Damblin and Alberto Ghione}
}

@article{tissot2012estimating,
  title={Estimating {S}obol' indices combining {M}onte {C}arlo estimators and {L}atin hypercube sampling},
  author={Tissot, Jean-Yves and Prieur, Cl\'ementine},
  journal={Preprint available at hal},
  volume={743964},
  year={2012}
}

@article{gilquin2016recursive,
  title={Iterative estimation of {S}obol' indices based on replicated designs},
  author={Gilquin, Laurent and Prieur, Cl{\'e}mentine and Arnaud, Elise and Monod, Herv{\'e}},
  journal={Computational and Applied Mathematics},
  volume={40},
  number={1},
  pages={18},
  year={2021},
  publisher={Springer}
}

@article{gilquin2019making,
title = {Making the best use of permutations to compute sensitivity indices with replicated orthogonal arrays},
journal = {Reliability Engineering \& System Safety},
volume = {187},
pages = {28-39},
year = {2019},
author = {Laurent Gilquin and Elise Arnaud and Clémentine Prieur and Alexandre Janon}
}

@article{ehre2020framework,
title = {A framework for global reliability sensitivity analysis in the presence of multi-uncertainty},
journal = {Reliability Engineering \& System Safety},
volume = {195},
pages = {106726},
year = {2020},
author = {Max Ehre and Iason Papaioannou and Daniel Straub}
}

@article{puy2022comprehensive,
  title={A comprehensive comparison of total-order estimators for global sensitivity analysis},
  author={Puy, Arnald and Becker, William and Piano, Samuele Lo and Saltelli, Andrea},
  journal={International Journal for Uncertainty Quantification},
  volume={12},
  number={2},
  year={2022},
  pages = {1--18},
  publisher={Begel House Inc.}
}

@article{kouye2022exploiting,
  title={Exploiting deterministic algorithms to perform global sensitivity analysis of continuous-time {M}arkov chain compartmental models with application to epidemiology},
  author={Kouye, Henri Mermoz and Mazo, Gildas and Prieur, Cl{\'e}mentine and Vergu, Elisabeta},
  journal={arXiv preprint arXiv:2202.07277},
  year={2022}
}

@ARTICLE{liu2020identifying,
  author={Liu, Zhanlin and Banerjee, Ashis G. and Choe, Youngjun},
  journal={IEEE Transactions on Automation Science and Engineering}, 
  title={Identifying the Influential Inputs for Network Output Variance Using Sparse Polynomial Chaos Expansion}, 
  year={2021},
  volume={18},
  number={3},
  pages={1026-1036}
 }

@article{jaxa2021variance,
  title={Variance-based global sensitivity analysis and beyond in life cycle assessment: an application to geothermal heating networks},
  author={Jaxa-Rozen, Marc and Pratiwi, Astu Sam and Trutnevyte, Evelina},
  journal={The International Journal of Life Cycle Assessment},
  volume={26},
  number={5},
  pages={1008--1026},
  year={2021},
  publisher={Springer}
}

@article{giles2019multilevel,
  title={Multilevel nested simulation for efficient risk estimation},
  author={Giles, Michael B. and Haji-Ali, Abdul-Lateef},
  journal={SIAM/ASA Journal on Uncertainty Quantification},
  volume={7},
  number={2},
  pages={497--525},
  year={2019},
  publisher={SIAM}
}

@article{mckay1979comparison,
	author  = {McKay, Michael D. and Beckman, Richard J. and Conover, William J.},
	title   = {A Comparison of Three Methods for Selecting Values of Input Variables in the Analysis of Output from a Computer Code},
	journal = {Technometrics},
	volume  = {21},
	number  = {2},
	pages   = {239--245},
	year    = {1979}
}
}

\newpage

{\footnotesize
\appendix
\renewcommand{\theequation}{A.\arabic{equation}}
\setcounter{equation}{0}
\clearpage
\setcounter{page}{1}
\renewcommand{\thepage}{\arabic{page}}
\section{Proofs in Section \ref{sec:nest}}
\subsection{Proof of Proposition \ref{prop:bias_variance} in Subsection \ref{subsec:standard_ns}}\label{app:bias_variance_proof}
\begin{proof}
	For the bias of the NS estimator, we have
	\begin{equation}\label{eq:ns_bias_rewrite}
		 \E{V_{\mbox{\tiny NS}}^{\cu}} - V = \Var{L_N(\bX_\cu)} - V = V + \frac{\Var{\cY} - V}{N} - V=\frac{\Var{\cY} - V}{N} \ .
	\end{equation}
	Define $\mu \coloneqq \E{\cY}$. The variance of the NS estimator follows as
	\begin{equation}\label{eq:ns_variance_analysis}
		\Var{V_{\mbox{\tiny NS}}^{\cu}} = \frac{\E{L_N(\bX_\cu) - \mu}^4}{K} - \frac{(K-3)\Varp{L_{N}(\bX_\cu)}{2}}{K(K-1)} = \frac{E_4 - V^2}{K} + o\!\left(K^{-1}\right),
	\end{equation}
	where $E_4 \coloneqq \E{(\cY_{11}-\mu)(\cY_{12}-\mu)(\cY_{13}-\mu)(\cY_{14}-\mu)}$, and $\cY_{11}$, $\cY_{12}$, $\cY_{13}$ and $\cY_{14}$ represent independent simulation outputs generated conditional on the outer-level scenario $\bX_{\cu,1}$. Furthermore, we have
	\begin{align}
		E_4 =& \ \Esquare{\E{(\cY_{11} - \mu)(\cY_{12} - \mu)(\cY_{13} - \mu)(\cY_{14} - \mu) \mid \bX_\cu}} \nonumber \\
	=& \ \Esquare{\Ep{\cY - \mu \mid \bX_\cu}{4}} \\
	=&  \ \Epsquare{(\E{\cY \mid \bX_\cu} - \mu)^2}{2} + \operatorname{Var}\left[(\E{\cY \mid \bX_\cu} - \mu)^2\right]  \nonumber \\
	=& \ V^2 + \operatorname{Var}\left[(\E{\cY \mid \bX_\cu} - \mu)^2\right] \nonumber \\
	=&  \ V^2 + (\kappa_{\E{\cY \mid \bX_\cu}} - 1)V^2 = \kappa_{\E{\cY \mid \bX_\cu}} \cdot V^2, \label{eq:E4_decomposition}
	\end{align}
where the first equality on the right-hand side (RHS) follows from the definition of kurtosis. The proof is completed by combining \eqref{eq:ns_variance_analysis} and \eqref{eq:E4_decomposition}.
\end{proof}

\subsection{Proof of Proposition \ref{prop:jack} in Subsection \ref{subsec:unbias_jack}}\label{app:jack_proof}
\begin{proof}
 	We first analyze the bias of the JK estimator:
\begin{align}
	\E{V_{\mbox{\tiny JK}}^{\cu}} - V &= I \cdot \E{V_{\mbox{\tiny NS}}^{\cu}} - (I-1) \cdot \E{V_{\mbox{\tiny NS}, -1}^{\cu}} - V \nonumber \\
	&= I \cdot \Var{L_N(\bX_\cu)} - (I-1) \cdot\Var{L_{N,-1}(\bX_\cu)} - V  \label{eq:apply_ns_bias} \\
	&= I \cdot \left(V + \frac{\E{\Var{\cY \mid \bX_\cu}}}{N}\right) - (I-1) \cdot \left(V + \frac{\E{\Var{\cY \mid \bX_\cu}}}{N-N/I} \right)  - V \nonumber \\
	&= 0  \nonumber \ ,
\end{align}
where $L_{N,-1}(\bX_\cu)$ denotes an estimator of $L(\bX_\cu)$ in the same form as $L_N(\bX_\cu)$, constructed by omitting the outputs in the first section.

Define $b^{(l)} \coloneqq V_{\mbox{\tiny NS}}^{\cu} - V_{\mbox{\tiny NS},-l}^{\cu}$ for each $l \in [I]$. For the variance of the JK estimator, we have
\begin{align}
	\Var{V_{\mbox{\tiny JK}}^{\cu}} =& \ \Var{I V_{\mbox{\tiny NS}}^{\cu} - \frac{I-1}{I}\sum_{l=1}^{I}V_{\mbox{\tiny NS},-l}^{\cu}} = \Var{V_{\mbox{\tiny NS}}^{\cu} + \frac{I-1}{I}\sum_{l=1}^{I}b^{(l)}} \nonumber \\
	=& \ \Var{V_{\mbox{\tiny NS}}^{\cu}} + \frac{(I-1)^2}{I}\Var{b^{(1)}} \label{eq:jkvar_decompose_var} \\
	 & \ + 2(I-1) \Cov{V_{\mbox{\tiny NS}}^{\cu}}{b^{(1)}} + \frac{(I-1)^3}{I}\Cov{b^{(1)}}{b^{(2)}} \ . \label{eq:jkvar_decompose_cov}
\end{align}

For the term $\Var{V_{\mbox{\tiny NS}}^{\cu}}$ in \eqref{eq:jkvar_decompose_var}, we have
\begin{align}
	\Var{V_{\mbox{\tiny NS}}^{\cu}} =& \frac{E_4 + \mathcal{O}(N^{-1})}{K} - \frac{K-3}{K(K-1)} \left( \frac{\Varp{\cY}{2}}{N^2} + \frac{(N-1)^2}{N^2}\Varp{\E{\cY \mid \bX_\cu}}{2} \right. \nonumber \\
	 &\left. + \frac{2(N-1)}{N^2}\Var{\cY}\Var{\E{\cY \mid \bX_\cu}} \right) \nonumber \\
	 =& \frac{E_4}{K} + o(K^{-1}) \ ,\label{eq:jack_proof_e1}
\end{align}
where we recall that $E_4 = \E{(\cY_{11}-\mu)(\cY_{12}-\mu)(\cY_{13}-\mu)(\cY_{14}-\mu)}$ and $\mu = \E{\cY}$.

The term $\Var{b^{(l)}}$ in \eqref{eq:jkvar_decompose_var} can be rewritten as follows:
\begin{equation}\label{eq:variance_bl_rewritten}
	\Var{b^{(l)}} = \Var{V_{\mbox{\tiny NS}}^{\cu} - V_{\mbox{\tiny NS},-1}^{\cu}} = \E{\left(V_{\mbox{\tiny NS}}^{\cu} - V_{\mbox{\tiny NS},-1}^{\cu}\right)^2} - \mathbb{E}^2\left( V_{\mbox{\tiny NS}}^{\cu} - V_{\mbox{\tiny NS},-1}^{\cu} \right) \ .
\end{equation}
Since
\begin{equation}\label{eq:jk_ns_diff_rate}
	\E{V_{\mbox{\tiny NS}}^{\cu} - V_{\mbox{\tiny NS},-1}^{\cu}} = \Var{L_N(\bX_\cu)} - \Var{L_{N,-1}(\bX_\cu)} = - \frac{1}{N(I-1)} \E{\Var{\cY \mid \bX_\cu}} \ ,
\end{equation}
it follows that $\Ep{V_{\mbox{\tiny NS}}^{\cu} - V_{\mbox{\tiny NS},-1}^{\cu}}{2} = N^{-2} \cdot(I-1)^{-2}  \cdot \Ep{\Var{\cY \mid \bX_\cu}}{2}$.

For $\E{\left(V_{\mbox{\tiny NS}}^{\cu} - V_{\mbox{\tiny NS},-1}^{\cu} \right)^2}$, we have
\begin{align}
	\E{\left(V_{\mbox{\tiny NS}}^{\cu} - V_{\mbox{\tiny NS},-1}^{\cu} \right)^2} \leq \ & 2 \E{V_{\mbox{\tiny NS}}^{\cu} - \Var{L_N(\bX_\cu)} }^2  + 2 \E{V_{\mbox{\tiny NS},-1}^{\cu} - \Var{L_{N,-1}(\bX_\cu)}}^2 \nonumber \\
	&+ 2\left[\Var{L_N(\bX_\cu)} - \Var{L_{N,-1}(\bX_\cu)} \right]^2. \label{eq:esquare}
\end{align}
We next analyze the three terms on the RHS of \eqref{eq:esquare} in sequence. First, $\E{ V_{\mbox{\tiny NS}}^{\cu} - \Var{L_N(\bX_\cu)} }^2 = \Var{V_{\mbox{\tiny NS}}^{\cu}} = \cO(K^{-1})$. Similarly, $\E{ V_{\mbox{\tiny NS},-1}^{\cu} - \Var{L_{N,-1}(\bX_\cu)} }^2 =  \cO(K^{-1})$. Moreover, by \eqref{eq:ns_bias_rewrite}, $\big(\Var{L_N(\bX_\cu)} -\Var{L_{N,-1}(\bX_\cu)} \big)^2 = \cO(N^{-2}) $. Therefore, it follows from \eqref{eq:esquare} that
\begin{equation}\label{eq:jack_ns_different_second_moment_rate}
	\E{\left(V_{\mbox{\tiny NS}}^{\cu} - V_{\mbox{\tiny NS},-1}^{\cu} \right)^2} = \cO(K^{-1}) + \cO(N^{-2}).
\end{equation}
Furthermore, combining \eqref{eq:variance_bl_rewritten}, \eqref{eq:jk_ns_diff_rate}, and \eqref{eq:jack_ns_different_second_moment_rate} yields
\begin{equation}\label{eq:jack_proof_e2}
	\Var{b^{(l)}} = \frac{a^{\prime}}{K} + \frac{b^{\prime}}{N^2} + o(K^{-1}) + o(N^{-2}), \quad \forall l \in [I]
\end{equation}
for some positive constants $a^{\prime}$ and $b^{\prime}$.

For $\Cov{V_{\mbox{\tiny NS}}^{\cu}}{b^{(1)}}$ in \eqref{eq:jkvar_decompose_cov}, it follows from the Cauchy–Schwarz inequality that
\begin{equation}\label{eq:jack_proof_e3}
	\Cov{V_{\mbox{\tiny NS}}^{\cu}}{b^{(1)}} \leq \sqrt{\Var{V_{\mbox{\tiny NS}}^{\cu}}\Var{b^{(1)}}} = \cO(\max \{K^{-1}, K^{-1/2}N^{-1}\}).
\end{equation}
Similarly,
\begin{equation}\label{eq:jack_proof_e4}
	\Cov{b^{(1)}}{b^{(2)}} \leq \sqrt{\Var{b^{(1)}}\Var{b^{(2)}}} = \cO(\max \{K^{-1}, N^{-2}\}).
\end{equation}
Combining \eqref{eq:jack_proof_e1} through \eqref{eq:jack_proof_e4} yields
$
	\Var{V_{\mbox{\tiny JK}}^{\cu}} = a{K}^{-1} + bN^{-2} + o(K^{-1}) +o(N^{-2})
$
for some positive constants $a$ and $b$.
 \end{proof}

\subsection{Proof of Proposition \ref{prop:new_jack} in Subsection \ref{subsec:split_jack}}\label{app:new_jack_proof}
\begin{proof}
Notice that, with the preliminary dataset $\cD_{\mbox{\tiny pre}}$ used for estimating $\mu$, the bias and variance of $V_{\mbox{\tiny SJ}}^{\cu}$ can be written as follows:
\begin{align*}
 \E{V_{\mbox{\tiny SJ}}^{\cu}} - V = \E{\E{V_{\mbox{\tiny SJ}}^{\cu} \;\middle|\; \cD_{\mbox{\tiny pre}}}} - V \ ,  \quad
 \Var{V_{\mbox{\tiny SJ}}^{\cu}} = \Var{\E{V_{\mbox{\tiny SJ}}^{\cu} \;\middle|\; \cD_{\mbox{\tiny pre}}}} + \E{\Var{V_{\mbox{\tiny SJ}}^{\cu} \;\middle|\; \cD_{\mbox{\tiny pre}}}} \ .
\end{align*}
We first analyze $\E{V_{\mbox{\tiny SJ}}^{\cu} \;\middle|\; \cD_{\mbox{\tiny pre}}}$. Recall that $\widehat{\mu} $ denotes the sample mean obtained from $\cD_{\mbox{\tiny pre}}$. We have
\begin{equation}\label{eq:sj_conditional_mean}
	\E{V_{\mbox{\tiny SJ}}^{\cu} \;\middle|\; \cD_{\mbox{\tiny pre}}} = I \cdot \E{(L_N(\bX_\cu) - \widehat{\mu})^2 \;\middle|\; \cD_{\mbox{\tiny pre}}} - (I-1) \cdot \E{(L_{N,-1}(\bX_\cu) - \widehat{\mu})^2 \;\middle|\; \cD_{\mbox{\tiny pre}}} \ .
\end{equation}
On the one hand,
\begin{align}
	\E{(L_N(\bX_\cu) - \widehat{\mu})^2 \mid \cD_{\mbox{\tiny pre}}} =& \E{(L_N(\bX_\cu) - \mu + \mu - \widehat{\mu})^2 \;\middle|\; \cD_{\mbox{\tiny pre}}} \nonumber \\
	=&  \E{(L_N(\bX_\cu) - \mu)^2 \;\middle|\; \cD_{\mbox{\tiny pre}}} +  \E{(\mu - \widehat{\mu})^2 \;\middle|\; \cD_{\mbox{\tiny pre}}} \nonumber \\
	=&  \E{(L_N(\bX_\cu) - \mu)^2} + (\mu - \widehat{\mu})^2 \nonumber \\
	=& \Var{L_N(\bX_\cu)}  + (\mu - \widehat{\mu})^2 \ , \label{eq:sj_second_moment_decompose}
\end{align}
where the second and the last equalities on the RHS of \eqref{eq:sj_second_moment_decompose} follow from $\E{L_N(\bX_\cu)}=\mu$.

On the other hand,  $\E{(L_{N,-1}(\bX_\cu) - \widehat{\mu})^2 \;\middle|\; \cD_{\mbox{\tiny pre}}} = \Var{L_{N,-1}(\bX_\cu)}  + (\mu - \widehat{\mu})^2$.
It follows from \eqref{eq:sj_conditional_mean} and \eqref{eq:sj_second_moment_decompose} that $\E{V_{\mbox{\tiny SJ}}^{\cu} \;\middle|\; \cD_{\mbox{\tiny pre}}} = V + (\mu - \widehat{\mu})^2$, and hence
\begin{equation}\label{eq:sj_bias_final}
	\E{\E{V_{\mbox{\tiny SJ}}^{\cu} \mid \cD_{\mbox{\tiny pre}}} - V}  = \E{(\mu - \widehat{\mu})^2} = \frac{\Var{\cY}}{J} \ .
\end{equation}

For $\Var{V_{\mbox{\tiny SJ}}^{\cu}}$, it is easy to see that
\begin{equation}\label{eq:varsj_total}
	\Var{V_{\mbox{\tiny SJ}}^{\cu} \;\middle|\; \cD_{\mbox{\tiny pre}}} = \frac{1}{K} \Var{I \cdot (L_N(\bX_\cu) - \widehat{\mu})^2 - \frac{I-1}{I} \sum_{l=1}^{I} (L_{N,-l}(\bX_\cu)-\widehat{\mu})^2  \;\middle|\; \cD_{\mbox{\tiny pre}}}.
\end{equation}

Following steps analogous to those in the proof of Proposition \ref{prop:jack}, we have
\begin{align}
	& \Var{I \cdot (L_N(\bX_\cu) - \widehat{\mu})^2 - \frac{I-1}{I} \sum_{l=1}^{I} (L_{N, -l}(\bX_\cu)-\widehat{\mu})^2  \;\middle|\; \cD_{\mbox{\tiny pre}}} \nonumber \\
	=& \Var{(L_N(\bX_\cu) - \widehat{\mu})^2\;\middle|\; \cD_{\mbox{\tiny pre}}} + \frac{(I-1)^2}{I} \Var{b^{(1)} \;\middle|\; \cD_{\mbox{\tiny pre}}} \label{eq:sjvar_decompose_var}\\
	&+ 2(I-1) \Cov{(L_N(\bX_\cu) - \widehat{\mu})^2}{b^{(1)} \;\middle|\; \cD_{\mbox{\tiny pre}}} + \frac{(I-1)^3}{I} \Cov{b^{(1)}}{b^{(2)} \;\middle|\; \cD_{\mbox{\tiny pre}}}. \label{eq:sjvar_decompose_cov}
\end{align}
For the first term on the RHS of \eqref{eq:sjvar_decompose_var}, we have
\begin{align}
	\Var{(L_N(\bX_\cu) - \widehat{\mu})^2 \;\middle|\; \cD_{\mbox{\tiny pre}}}=& \ \Var{L_N^2(\bX_\cu) - 2\widehat{\mu}L_N(\bX_\cu)\;\middle|\; \cD_{\mbox{\tiny pre}}} \nonumber \\
	 \leq & \ 2\Var{L_N^2(\bX_\cu)} + 8 \widehat{\mu}^2 \Var{L_N(\bX_\cu)} \nonumber\\
	=& \ c_{1} + 8 \widehat{\mu}^2 \cdot \left(V + \frac{\E{\Var{\cY \;\middle|\; \bX_\cu}}}{N} \right) \ , \label{eq:sjvar_1}
\end{align}
where $c_{1}$ is some positive constant.
For the term $\Var{b^{(1)} \mid \cD_{\mbox{\tiny pre}}}$ on the RHS of \eqref{eq:sjvar_decompose_var}, we first write it as
\begin{align}\label{eq:new_jack_varb}
	\Var{b^{(1)} \;\middle|\; \cD_{\mbox{\tiny pre}}} =& \ \E{ \left(  (L_N(\bX_\cu) - \widehat{\mu})^2  - (L_{N,-1}(\bX_\cu)-\widehat{\mu})^2 \right)^2 \;\middle|\; \cD_{\mbox{\tiny pre}}} \ - \Ep{ (L_N(\bX_\cu) - \widehat{\mu})^2 - (L_{N, -1}(\bX_\cu)-\widehat{\mu})^2 \;\middle|\; \cD_{\mbox{\tiny pre}}}{2} \nonumber \\
	 \coloneqq  & \  \mbox{item }(i) - \mbox{item }(ii) \ .
\end{align}
Since
\begin{align}
	\mathbb{E} \bigg( (L_N(\bX_\cu) - \widehat{\mu})^2 - (L_{N, -1}(\bX_\cu)-\widehat{\mu})^2 \bigm| \cD_{\mbox{\tiny pre}} \bigg) =& \Var{L_N(\bX_\cu)} + (\mu - \widehat{\mu})^2 - \Var{L_{N, -1}(\bX_\cu)}  - (\mu - \widehat{\mu})^2 \nonumber \\
	=& - \frac{1}{N(I-1)}\E{\Var{\cY \bigm| \bX_\cu}}, \label{eq:sjvar_2}
\end{align}
it follows that $ \mbox{item } (ii) = N^{-2} \cdot (I-1)^{-2}\Ep{\Var{\cY \mid \bX_\cu}}{2}$. For item $(i)$, we have
\begin{align}
	& \E{ \left(  (L_N(\bX_\cu) - \widehat{\mu})^2 - (L_{N, -1}(\bX_\cu)-\widehat{\mu})^2 \right)^2 \;\middle|\; \cD_{\mbox{\tiny pre}}}  \nonumber\\
	\leq& \ 3 \E{L_N^4(\bX_\cu)} + 3 \E{(L_{N, -1}(\bX_\cu))^4} + 6\widehat{\mu}^2 \E{(L_N(\bX_\cu) - L_{N, -1}(\bX_\cu))^2} \nonumber \\
	\leq & \ c_{2} + c_{3} \cdot \widehat{\mu}^2 \label{eq:sjvar_3}  \ ,
\end{align}
where $c_2$ and $c_3$ are some positive constants; here, the first inequality follows from the Cauchy–Schwarz inequality, and the second one uses the boundedness of the fourth moment of the model output.

Combining \eqref{eq:varsj_total}, \eqref{eq:sjvar_1}, \eqref{eq:sjvar_2}, and \eqref{eq:sjvar_3} yields
\begin{equation}\label{eq:sj_variance_final}
	\Var{V_{\mbox{\tiny SJ}}^{\cu} \;\middle|\; \cD_{\mbox{\tiny pre}}} \leq \frac{c_4 + c_5 \widehat{\mu}^2}{K} \ ,
\end{equation}
where $c_4$ and $c_5$ are some positive constants. Combining \eqref{eq:sj_bias_final} and \eqref{eq:sj_variance_final} yields
\begin{align*}
	\Var{V_{\mbox{\tiny SJ}}^{\cu}}  = \E{\Var{V_{\mbox{\tiny SJ}}^{\cu} \;\middle|\; \cD_{\mbox{\tiny pre}}}} + \Var{\E{V_{\mbox{\tiny SJ}}^{\cu} \;\middle|\; \cD_{\mbox{\tiny pre}}}}
 = \frac{c}{K} +o(K^{-1}) + \frac{d}{J^2} + o(J^{-2}) \ ,
\end{align*}
where $c = \E{c_4 + c_5 \widehat{\mu}^2}$, and $d$ is some positive constant.
\end{proof}

\section{Proofs in Section \ref{sec:ns_lhs}}
We first present several auxiliary lemmas in Subsection \ref{app:aux_lemma} that will be useful for the subsequent proofs. The proofs for Section \ref{sec:ns_lhs} are then provided in the following subsections.
To facilitate analysis, we assume that
\(\lhs{\bX}_{\cu,i} \sim U[0,1)^{|\cu|}\) and \(\lhs{\bX}_{-\cu,i} \sim U[0,1)^{|-\cu|}\). For input vectors with non-uniformly distributed components, LHS can generate observations from the target distributions by applying the inverse cumulative distribution function transformation. As shown in the proof of Theorem 1 in \cite{stein1987large}, this adjustment does not affect the theoretical results.

\subsection{Auxiliary Lemmas}\label{app:aux_lemma}
Recall from \eqref{eq:lhs_ns_est} that $\lhs{L}_N(\lhs{\bX}_{\cu,i}) \coloneqq N^{-1} \sum_{j=1}^{N} \lhs{\cY}_{ij}$ for each $i \in [K]$, where $\lhs{\cY}_{ij} = f(\lhs{\bX}_{\cU, i}, \lhs{\bX}_{-\cU, j})$.

\begin{lemma}[Functional ANOVA \citep{owen2013mcbook}]\label{lem:functional_anova_template}
	Let $\phi(\bZ)$ be a square-integrable function of independent inputs $\bZ=(Z_1, Z_2,\ldots,Z_d)$ with dimensionality $d$. Then $\phi$ admits the functional-ANOVA decomposition
	\[
	\phi(\bZ) = \phi_0 + \sum_{j=1}^d \phi_j(Z_j) + \sum_{1 \leq j < k \leq d} \phi_{jk}(Z_j,Z_k) + \cdots \ ,
	\]
	where
$\phi_0 = \E{\phi(\bZ)}$,  $\phi_j(Z_j) = \E{\phi(\bZ) \mid Z_j} - \phi_0$,
	and
	$
	\phi_{jk}(Z_j,Z_k) = \E{\phi(\bZ) \mid Z_j, Z_k} - \phi_j(Z_j) - \phi_k(Z_k) - \phi_0.
	$
	If $\delta(\bZ) \coloneqq \phi(\bZ) - \phi_0 - \sum_{j=1}^d \phi_j(Z_j)$, then $\E{\delta(\bZ)}=0$ and
	$
	\Var{\delta(\bZ)} = \Var{\phi(\bZ)} - \sum_{j=1}^d \Var{\phi_j(Z_j)}.
	$
\end{lemma}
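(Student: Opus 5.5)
The statement to prove is Lemma~\ref{lem:functional_anova_template}, a standard fact. I would derive it from the orthogonality of the functional-ANOVA components rather than recompute it. Following \cite{owen2013mcbook}, I would first recall the general construction. For every subset $v\subseteq[d]$, define recursively
\[
\phi_v(\bZ_v)=\E{\phi(\bZ)\mid \bZ_v}-\sum_{w\subsetneq v}\phi_w(\bZ_w).
\]
For $v=\emptyset$, $\{j\}$ and $\{j,k\}$ this reproduces the displayed formulas for $\phi_0$, $\phi_j$ and $\phi_{jk}$. Taking $v=[d]$ and using $\E{\phi(\bZ)\mid\bZ}=\phi(\bZ)$ gives $\phi=\sum_{v\subseteq[d]}\phi_v$, which is the stated decomposition. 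Square integrability of $\phi$ ensures every conditional expectation, and hence every $\phi_v$, lies in $L^2$.

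The key step is to prove two properties by induction on $|v|$, using independence of the $Z_j$:
\begin{enumerate}
\item[(a)] for $v\neq\emptyset$ and any $j\in v$, we have $\E{\phi_v(\bZ_v)\mid \bZ_{v\setminus\{j\}}}=0$;
\item[(b)] for $v\neq w$, we have $\E{\phi_v\phi_w}=0$.
\end{enumerate}
For (a), independence gives $\E{\E{\phi\mid\bZ_v}\mid\bZ_{v\setminus\{j\}}}=\E{\phi\mid\bZ_{v\setminus\{j\}}}$. By the induction hypothesis, every $\phi_w$ with $w\subsetneq v$ and $j\in w$ conditions to zero. The remaining terms, those with $w\subseteq v\setminus\{j\}$, sum to $\E{\phi\mid\bZ_{v\setminus\{j\}}}$, so everything cancels. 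For (b), pick $j\in v\setminus w$ (swapping $v$ and $w$ if needed). Conditioning on $\bZ_{-j}$ and using independence of $Z_j$ from the other inputs kills the product by (a). This is the only step with real content, and I expect it to be the main, though mild, obstacle: the conditioning identities rely on independence of the inputs and must be stated carefully.

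With these properties the lemma follows. Every $\phi_v$ with $v\neq\emptyset$ has mean zero by (a) and the tower property, so $\E{\delta(\bZ)}=\E{\phi}-\phi_0-\sum_j\E{\phi_j}=0$. Next,
\[
\delta=\sum_{|v|\ge 2}\phi_v,\qquad \phi-\phi_0=\sum_j\phi_j+\delta,
\]
and by (b) the sums $\sum_j\phi_j$ and $\delta$ are uncorrelated, while the $\phi_j$ are mutually uncorrelated. Hence
\[
\Var{\phi}=\sum_{j}\Var{\phi_j}+\Var{\delta},
\]
which rearranges to the claimed identity.
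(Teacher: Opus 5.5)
Your proof is correct. It is the standard argument from \cite{owen2013mcbook}: the recursive definition of the components, the annihilation property (a), and orthogonality (b). The paper itself gives no proof and simply cites that source. One small precision: in (a), the identity $\mathbb{E}\bigl(\mathbb{E}(\phi\mid\bZ_v)\mid\bZ_{v\setminus\{j\}}\bigr)=\mathbb{E}(\phi\mid\bZ_{v\setminus\{j\}})$ is just the tower property. Independence is actually needed elsewhere, to pass from the induction hypothesis $\mathbb{E}(\phi_w\mid\bZ_{w\setminus\{j\}})=0$ to $\mathbb{E}(\phi_w\mid\bZ_{v\setminus\{j\}})=0$ for $j\in w\subsetneq v$.
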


\begin{lemma}\label{lem:L_unbias}
	The estimator $\lhs{L}_N(\lhs{\bX}_{\cu,i})$ is unbiased for $L(\lhs{\bX}_{\cu,i})$ for each $i \in [K]$.
\end{lemma}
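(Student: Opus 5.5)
The plan is to condition on the outer-level scenario and use linearity, so the proof reduces to showing that each inner-level output is conditionally unbiased. Since $\lhs{L}_N(\lhs{\bX}_{\cu,i}) = N^{-1}\sum_{j=1}^N \lhs{\cY}_{ij}$, it suffices to show
\[
\E{\lhs{\cY}_{ij} \mid \lhs{\bX}_{\cu,i}} = L(\lhs{\bX}_{\cu,i}) \quad \text{for every } j\in[N].
\]
Here $L(\bx)\coloneqq \E{f(\bx,\bX_{-\cu})}$ with $\bX_{-\cu}\sim U[0,1)^{|-\cu|}$. This is the conditional expectation $\E{\cY\mid \bX_\cu=\bx}$, because the inputs are independent.

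The key step is to identify the conditional law of $\lhs{\bX}^{(i)}_{-\cu,j}=\bA^{(j)}_{i,-\cu}$ given $\lhs{\bX}_{\cu,i}=\bA^{(1)}_{i,\cu}$. The argument splits into two cases.

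\emph{Case $j\ge 2$.} The matrix $\bA^{(j)}$ is generated independently of $\bA^{(1)}$. A standard property of a Latin hypercube design is that each row is marginally uniform on $[0,1)^p$. Hence $\bA^{(j)}_{i,-\cu}\sim U[0,1)^{|-\cu|}$, independent of $\lhs{\bX}_{\cu,i}$.

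\emph{Case $j=1$.} Both subvectors come from the same row of $\bA^{(1)}$. Within a Latin hypercube design, each column is built from an independent random permutation together with independent uniform jitters. So the coordinates of a single row are mutually independent, each uniform on $[0,1)$. Therefore $\bA^{(1)}_{i,-\cu}$ is independent of $\bA^{(1)}_{i,\cu}$ and uniform on $[0,1)^{|-\cu|}$.

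I expect this second case to be the only real subtlety. Under LHS, rows are dependent across $i$, so the proof must use only within-row, cross-column independence, which holds because the column permutations are drawn independently. I would state this explicitly from the LHS construction $\bA_{ik}=(\pi_k(i)-U_{ik})/K$.

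With the conditional law identified in both cases, the rest is direct. For every $j$,
\[
\E{\lhs{\cY}_{ij}\mid \lhs{\bX}_{\cu,i}} = \int f(\lhs{\bX}_{\cu,i},\bx_{-\cu})\,d\bx_{-\cu} = L(\lhs{\bX}_{\cu,i}).
\]
This step uses Fubini's theorem, which is justified by the finite fourth moment of $\cY$. Averaging over $j$ gives $\E{\lhs{L}_N(\lhs{\bX}_{\cu,i})\mid \lhs{\bX}_{\cu,i}} = L(\lhs{\bX}_{\cu,i})$, which is the claim. The non-uniform case follows by the inverse-CDF transformation noted before Lemma~\ref{lem:functional_anova_template}.
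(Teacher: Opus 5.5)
Your proposal is correct and takes essentially the paper's route: condition on the outer scenario, use linearity, and use the fact that each Latin hypercube row is uniform on $[0,1)^p$ (the paper cites Theorem 10.1 of Owen's book for this). Your explicit treatment of the shared-row case $j=1$ through within-row, cross-column independence fills in a step the paper leaves implicit; it already follows from joint uniformity of a full row, since the uniform law on the cube is a product measure.
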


\begin{proof}
	Using Theorem 10.1 of \cite{owen2013mcbook}, we have $\lhs{\bX}_{\cu,i} \sim U[0, 1)^{|\cU|}$ for each $i \in [K]$. It follows that $\E{\lhs{L}_N(\lhs{\bX}_{\cu,i})} = \E{N^{-1}\sum_{j=1}^{N}\lhs{\cY}_{ij} \mid \lhs{\bX}_{\cu,i}} = \E{\lhs{\cY}_{i1} \mid \lhs{\bX}_{\cu,i}} = L(\lhs{\bX}_{\cu,i}), \ \forall i \in [K]$.
\end{proof}

\begin{lemma}\label{lem:V_bias}
	The expectation of $V_{\tiny \mbox{L--NS}}^{\cu}$ satisfies
$
	\E{V_{\tiny \mbox{L--NS}}^{\cu}} = \Var{\lhs{L}_N(\lhs{\bX}_\cu)} + \cO(K^{-1}).
$
	Moreover, if $|\cu| = 1$, the expectation refines to
	$
	\E{V_{\tiny \mbox{L--NS}}^{\cu}} = \Var{\lhs{L}_N(\lhs{\bX}_\cu)} + \cO(K^{-1}N^{-1}) + \cO(K^{-3}).
	$
\end{lemma}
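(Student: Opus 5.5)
The natural route is to expand the sample variance in \eqref{eq:lhs_ns_est} into within-scenario and cross-scenario terms, then handle the cross terms with the covariance structure of LHS. Write $\lhs{L}_i \coloneqq \lhs{L}_N(\lhs{\bX}_{\cu,i})$ and $\mu = \E{\cY}$. Note that \eqref{eq:lhs_ns_est} uses the $1/K$ normalization. Centering at $\mu$ gives
\[
V_{\tiny \mbox{L--NS}}^{\cu} = \frac{1}{K}\sum_{i=1}^K (\lhs{L}_i-\mu)^2 - \Big(\frac{1}{K}\sum_{i=1}^K(\lhs{L}_i-\mu)\Big)^2 .
\]
Under LHS each row is marginally uniform (Theorem 10.1 of \cite{owen2013mcbook}). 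So each pair $(\lhs{\bX}_{\cu,i},\{\lhs{\bX}^{(i)}_{-\cu,j}\}_j)$ has the same marginal law as under CMC. By Lemma \ref{lem:L_unbias}, $\E{\lhs{L}_i}=\mu$, and therefore $\E{(\lhs{L}_i-\mu)^2} = \Var{\lhs{L}_N(\lhs{\bX}_\cu)}$.

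Taking expectations, we get
\[
\E{V_{\tiny \mbox{L--NS}}^{\cu}} = \Big(1-\frac1K\Big)\Var{\lhs{L}_N(\lhs{\bX}_\cu)} - \frac{K-1}{K}\,\Cov{\lhs{L}_1}{\lhs{L}_2},
\]
using exchangeability of rows. The term $K^{-1}\Var{\lhs{L}_N}$ is $\cO(K^{-1})$ by the bounded fourth moment. It remains to show that the cross-row covariance $C_K \coloneqq \Cov{\lhs{L}_1}{\lhs{L}_2}$ is $\cO(K^{-1})$ in general. In the case $|\cu|=1$, we need $C_K$ to be $\cO(K^{-1}N^{-1})+\cO(K^{-3})$ and $K^{-1}\Var{\lhs{L}_N}$ to reduce suitably. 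The paper's refined form suggests keeping the exact identity $\Var{\lhs{L}_N}=\Var{\lhs{L}_N}$ and absorbing the $1/K$ factor differently: I would combine $-K^{-1}\Var{\lhs{L}_N} - C_K$ as $-\Var{\bar L}$, where $\bar L = K^{-1}\sum_i \lhs{L}_i$, and then bound $\Var{\bar L}$ directly.

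The key step is therefore bounding $\Var{\bar L}$, which is the variance of an LHS average. Write
\[
\bar L = \frac{1}{N}\sum_{j=1}^N \frac{1}{K}\sum_{i=1}^K f(\bA^{(1)}_{i,\cu},\bA^{(j)}_{i,-\cu}).
\]
For $j=1$, this is a plain LHS mean of $f$ over design $\bA^{(1)}$. For $j\ge2$, the pairs $(\bA^{(1)}_{i,\cu},\bA^{(j)}_{i,-\cu})$ again form a Latin hypercube, since the columns of independent LHS matrices are independent stratified permutations. By Stein's result \citep{stein1987large} and the ANOVA Lemma \ref{lem:functional_anova_template}, each inner average has variance $K^{-1}\Var{\delta}+o(K^{-1})$, so $\Var{\bar L}=\cO(K^{-1})$ by Cauchy--Schwarz. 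This settles the general claim.

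For $|\cu|=1$ I would decompose $f = \mu + f_\cu(X_\cu) + g(X_\cu,X_{-\cu})$, where $f_\cu(X_\cu) = L(X_\cu)-\mu$ and $\E{g\mid X_\cu}=0$.

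The first piece, $K^{-1}\sum_i f_\cu(\bA^{(1)}_{i,\cu})$, is a one-dimensional LHS (stratified) mean. Its variance is $\cO(K^{-3})$ if $f_\cu$ is smooth (Lipschitz/differentiable), which I would assume as Owen does for stratified rates.

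For the second piece, the residuals $g$ for different $j$ use independent designs $\bA^{(j)}$ in the $-\cu$ coordinates. Conditional on the $\cu$-column, the averages over $j$ are uncorrelated. This gives variance $N^{-1}\cdot\cO(K^{-1})$, and the cross terms with the first piece vanish since $\E{g\mid X_\cu}=0$.

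Together these give $\Var{\bar L}=\cO(K^{-1}N^{-1})+\cO(K^{-3})$. The main obstacle is this conditional-uncorrelatedness step. Within a single design $\bA^{(j)}$, rows are dependent through the LHS permutations of the $-\cu$ columns, so I must verify that $\Cov{g(\bA^{(1)}_{i,\cu},\bA^{(j)}_{i,-\cu})}{g(\bA^{(1)}_{k,\cu},\bA^{(j)}_{k,-\cu})}$ is $\cO(K^{-1})$. I would attempt this by a further ANOVA split of $g$ into main effects in $-\cu$, which are stratified, and interactions, whose LHS covariance is $\cO(K^{-1})$ per pair.
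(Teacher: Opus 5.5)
Your proof is correct and has the same skeleton as the paper's. Both reduce to $\mathbb{E}(V_{\mbox{\tiny L--NS}}^{\cu})=\operatorname{Var}(\lhs{L}_N(\lhs{\bX}_\cu))-\operatorname{Var}(\bar L)$. Your split $f=\mu+f_\cu+g$ is exactly the paper's law-of-total-variance split conditional on the outer scenarios: your one-dimensional stratified term is its item $(ii)$, and your $g$-term is its item $(i)$.

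The real difference lies in the $g$-term, and there your route is the sounder one. The paper obtains item $(i)$ from conditional independence of the $\lhs{L}_N(\lhs{\bX}_{\cu,i})$ across $i$. That is not literally true under its design, because the inner points of different scenarios are rows of the same matrices $\bA^{(j)}$ and are coupled by the permutations; the $\mathcal{O}(T^{-1})$ order it concludes is nonetheless correct. You instead use conditional independence across the $N$ designs, which does hold.

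The step you flag as the main obstacle then closes without a further ANOVA split. Set $G_j\coloneqq K^{-1}\sum_{i=1}^K g(\bA^{(1)}_{i,\cu},\bA^{(j)}_{i,-\cu})$. Then $\mathbb{E}\bigl(G_j\mid\{\lhs{\bX}_{\cu,i}\}_{i\in[K]}\bigr)=0$, so all cross terms vanish. Moreover, $\mathbb{E}(G_j^2)=\operatorname{Var}(G_j)$ is the variance of an LHS mean of the fixed square-integrable function $g$ over the Latin hypercube $(\bA^{(1)}_{\cu},\bA^{(j)}_{-\cu})$. It is therefore $\mathcal{O}(K^{-1})$ by the same Stein argument you already use in the general case, or by the bound $\operatorname{Var}(g)/(K-1)$. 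Hence the $g$-term is $\mathcal{O}((NK)^{-1})=\mathcal{O}(T^{-1})$, as required.

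Your caveat that the $\mathcal{O}(K^{-3})$ rate for $K^{-1}\sum_i f_\cu(\bA^{(1)}_{i,\cu})$ needs some smoothness of $L$ (e.g.\ Lipschitz) is also warranted. The paper uses this rate without stating the assumption, and for merely square-integrable $L$ one only gets $o(K^{-1})$.
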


\begin{proof}
	Since $V_{\tiny \mbox{L--NS}}^{\cu} = K^{-1}\sum_{i=1}^{K} \left(\lhs{L}_N(\lhs{\bX}_{\cu,i}) - K^{-1}\sum_{j=1}^{K}\lhs{L}_N(\lhs{\bX}_{\cu,j}) \right)^2$, we have
	\begin{align}
		\E{V_{\tiny \mbox{L--NS}}^{\cu}} &= \E{\lhs{L}_N(\lhs{\bX}_{\cu})}^2 - \E{ \left( K^{-1}\sum_{j=1}^{K}\lhs{L}_N(\lhs{\bX}_{\cu,j}) \right)^2} \nonumber \\
		&= \Var{\lhs{L}_N(\lhs{\bX}_{\cu})} + \Ep{\lhs{L}_N(\lhs{\bX}_{\cu})}{2} -  \Var{K^{-1}\sum_{j=1}^{K}\lhs{L}_N(\lhs{\bX}_{\cu,j})} -  \Ep{K^{-1}\sum_{j=1}^{K}\lhs{L}_N(\lhs{\bX}_{\cu,j})}{2} \nonumber \\
		&= \Var{\lhs{L}_N(\lhs{\bX}_{\cu})}  - \Var{K^{-1}\sum_{j=1}^{K}\lhs{L}_N(\lhs{\bX}_{\cu,j})}  \label{eq:expectation_nest_decomposition} \ .
	\end{align}
Regarding the second term on the RHS of \eqref{eq:expectation_nest_decomposition}, we have
	\begin{align}
	  \Var{K^{-1}\sum_{j=1}^{K}\lhs{L}_N(\lhs{\bX}_{\cu,i})}
		 = \ & \E{\Var{K^{-1}\sum_{j=1}^{K}\lhs{L}_N(\lhs{\bX}_{\cu,j})\middle| \{ \lhs{\bX}_{\cu,j} \}_{j \in [K]}  }}  + \Var{K^{-1}\sum_{j=1}^{K} \E{\lhs{L}_N(\lhs{\bX}_{\cu,j}) \middle|  \{ \lhs{\bX}_{\cu,j} \}_{j \in [K]} }} \nonumber  \\
		 \coloneqq \ & \mbox{item }(i) + \mbox{item }(ii) \label{eq:total_variance_LN} \ .
	\end{align}
	For item $(i)$ in \eqref{eq:total_variance_LN}, we have
$
	\E{\Var{K^{-1}\sum_{j=1}^{K}\lhs{L}_N(\lhs{\bX}_{\cu,j})\middle| \{ \lhs{\bX}_{\cu,j} \}_{j \in [K]}  }} = K^{-1}\E{\Var{\lhs{L}_N(\lhs{\bX}_{\cu}) \middle|  \lhs{\bX}_{\cu}}}  = \cO(T^{-1})$,
	where the first equality follows from the conditional independence of $\lhs{L}_N(\lhs{\bX}_{\cu,j})$ across $j \in [K]$ given $\lhs{\bX}_{\cu,j}$, and the last equality uses $NK = T$.

	For item $(ii)$ in \eqref{eq:total_variance_LN}, it follows from Corollary 1 of \cite{stein1987large} that
		\begin{align*}
			\Var{K^{-1}\sum_{j=1}^{K} \E{\lhs{L}_N(\lhs{\bX}_{\cu,j}) \middle|  \{ \lhs{\bX}_{\cu,j} \}_{j \in [K]} }} = K^{-1} \int r^2(\lhs{\bX}_\cu)d\lhs{\bX}_{\cu} + o(K^{-1}) \ ,
		\end{align*}
	where, by applying Lemma \ref{lem:functional_anova_template} to $\phi(\lhs{\bX}_\cu)=\E{\lhs{\cY}\mid \lhs{\bX}_\cu}$, $r(\lhs{\bX}_{\cu})$ is the remainder after removing the mean and all main effects:
$
	r(\lhs{\bX}_{\cu}) = \E{\lhs{\cY} \bigm| \lhs{\bX}_\cu} - \sum_{i \in \cu}\Es{\lhs{\cY} \bigm| \lhs{\bX}_{i}}{\lhs{\bX}_{i}} - \E{\lhs{\cY}}
$.
	Specifically, when $\cu$ is a singleton, $r(\lhs{\bX}_{\cu}) = 0$. In this case, LHS reduces to stratified sampling in one dimension, which achieves a variance convergence rate of $\cO(K^{-3})$ \citep{owen2013mcbook}. Therefore, we have
$
	\Var{K^{-1}\sum_{i=1}^{K} \lhs{L}_N(\lhs{\bX}_{\cu,i})} = \cO(T^{-1}) + \cO(K^{-3})$    when  $|\cu| = 1$.
\end{proof}

\begin{lemma}\label{lem:L_i_unbias}
	For fixed $i \in [K]$, the estimator $\lhs{L}_{N,-l}(\lhs{\bX}_{\cu,i})$ is unbiased for $L(\lhs{\bX}_{\cu,i})$ for each $l \in [I]$, where $I$ denotes the number of sections used in the L-JK and L-SJ estimators.
\end{lemma}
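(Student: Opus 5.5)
The plan mirrors the proof of Lemma~\ref{lem:L_unbias}: reduce the claim to the statement that every inner-level output in a retained section has conditional mean $L(\lhs{\bX}_{\cu,i})$.

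Fix $i\in[K]$ and $l\in[I]$. Recall that $\lhs{L}_{N,-l}(\lhs{\bX}_{\cu,i})$ is the average of the $N-N/I$ outputs $\{\lhs{\cY}_{ij}\}_{j\in[N]\setminus\cI_l}$. We write
\[
\lhs{L}_{N,-l}(\lhs{\bX}_{\cu,i}) = \frac{1}{N-N/I}\sum_{j\in[N]\setminus\cI_l}\lhs{\cY}_{ij},
\]
and take conditional expectation given $\lhs{\bX}_{\cu,i}$. Linearity reduces the problem to showing $\E{\lhs{\cY}_{ij}\mid \lhs{\bX}_{\cu,i}}=L(\lhs{\bX}_{\cu,i})$ for each retained $j$.

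The key step is the distributional fact used in Lemma~\ref{lem:L_unbias}. For $j\ge 2$, the inner input $\lhs{\bX}^{(i)}_{-\cu,j}=\bA^{(j)}_{i,-\cu}$ comes from a Latin hypercube matrix independent of $\bA^{(1)}$, which generates $\lhs{\bX}_{\cu,i}$. By Theorem 10.1 of \cite{owen2013mcbook}, each row of an LHS matrix is marginally $U[0,1)^{p}$. Hence $\bA^{(j)}_{i,-\cu}\sim U[0,1)^{|-\cu|}$ independently of $\lhs{\bX}_{\cu,i}$.

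For $j=1$, both subvectors are taken from the same row of $\bA^{(1)}$. Because LHS permutes each coordinate column independently, the row $\bA^{(1)}_{i}$ has independent uniform coordinates. Thus $\bA^{(1)}_{i,-\cu}$ is again independent of $\bA^{(1)}_{i,\cu}$ and uniform. In either case, $(\lhs{\bX}_{\cu,i},\lhs{\bX}^{(i)}_{-\cu,j})$ has the law of $(\bX_\cu,\bX_{-\cu})$. It follows that $\E{f(\lhs{\bX}_{\cu,i},\lhs{\bX}^{(i)}_{-\cu,j})\mid\lhs{\bX}_{\cu,i}}=L(\lhs{\bX}_{\cu,i})$. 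Averaging over the retained indices then gives the claim.

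The only delicate point is this row-wise independence for $j=1$. It is a standard property of the LHS construction: within a row, the coordinates are formed from independent permutations and independent jitters. I would state it explicitly rather than leave it implicit, since Lemma~\ref{lem:L_unbias} passed over it. The argument does not depend on which section is omitted, so it holds for every $l\in[I]$. No moment conditions are needed beyond integrability of $\cY$, which follows from $\E{\cY^4}<\infty$.
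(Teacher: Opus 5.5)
Your proof is correct and follows the same route as the paper: by linearity over the retained indices, it suffices that each retained output has conditional mean $L(\lhs{\bX}_{\cu,i})$, which follows from the LHS row law in Theorem~10.1 of \cite{owen2013mcbook}. You also handle explicitly two points the paper's one-line proof leaves implicit: that $\bA^{(j)}$, $j\ge 2$, is independent of $\bA^{(1)}$, and the same-row case $j=1$, which already follows because each row has the product law $U[0,1)^p$.
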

\begin{proof}
By Theorem 10.1 of \cite{owen2013mcbook}, $\bX_{-\cu,i}$ is uniformly distributed on $[0,1)^{|-\cu|}$ for $i \in [N]$. For an arbitrary nonempty subset $\cS \subseteq [N]$ with $|\cS|=N-(N/I)$, $\bX_{-\cu, i} \sim U[0,1)^{|-\cu|}$ also holds for each $i \in \cS$. Hence,  $\E{\lhs{L}_{N,-l}(\lhs{\bX}_{\cu,i})} = |\cS|^{-1} \sum_{j \in \cS} \E{\lhs{\cY}_{ij} \mid \lhs{\bX}_{\cu,i}} = \E{\lhs{\cY}_{i1} \mid \lhs{\bX}_{\cu,i}} = L(\lhs{\bX}_{\cu,i})$ for all $i \in [K]$.
\end{proof}

\begin{lemma}\label{lem:additive_approx}
	Given $\lhs{\bX}_{\cu,i} = \bx_{\cu,i}$, for the estimator $\lhs{L}_{N,-l}(\bx_{\cu,i})$ with $l \in [I]$, we have
	\[\Var{\lhs{L}_{N,-l}(\bx_{\cu,i})} = \frac{1}{N-(N/I)} \int e^2(\bx_{\cu,i}, \lhs{\bX}_{-\cu})d \lhs{\bX}_{-\cu} + r_N + \cO(N^{-2})\ ,\]
	where, upon applying Lemma \ref{lem:functional_anova_template} to the function $\phi(\bX_{-\cu}) \coloneqq f(\bx_{\cu,i}, \bX_{-\cu})$,
	$
	e(\bx_{\cu,i}, \lhs{\bX}_{-\cu}) = f(\bx_{\cu,i}, \lhs{\bX}_{-\cu}) - \sum_{j \in -\cu} f_j(\bx_{\cu,i}, \lhs{\bX}_{j}) - \E{f(\bx_{\cu,i}, \lhs{\bX}_{-\cu})}
$
	is the remainder after removing the mean and all main effects. Moreover, $r_N = o(N^{-1})$, and $r_N > 0$ for finite $N$.
\end{lemma}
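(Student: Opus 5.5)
Conditional on $\lhs{\bX}_{\cu,i}=\bx_{\cu,i}$, the estimator $\lhs{L}_{N,-l}(\bx_{\cu,i})$ is the average of $f(\bx_{\cu,i},\cdot)$ over the $M\coloneqq N-(N/I)$ inner-level points $\{\lhs{\bX}^{(i)}_{-\cu,j}\}_{j\notin\cI_l}$. The plan is to treat these points as a Latin hypercube sample of size $M$ in the coordinates $-\cu$ and then apply Stein's (1987) variance expansion for LHS means to $\phi(\bX_{-\cu})\coloneqq f(\bx_{\cu,i},\bX_{-\cu})$. The LHS structure would come from the $i$th rows of the design matrices $\bA^{(j)}$ across $j$, stratified within each $-\cu$ coordinate. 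If the construction only yields each point uniform on $[0,1)^{|-\cu|}$ (Lemma~\ref{lem:L_i_unbias}), I would instead reinterpret the inner-level draws for a fixed scenario as an LHS design of size $M$ in $-\cu$, which is what the paper's construction intends. By Lemma~\ref{lem:L_i_unbias} the estimator is unbiased for $L(\bx_{\cu,i})$, so it suffices to compute its variance.

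The first step applies Lemma~\ref{lem:functional_anova_template} to $\phi$, writing
\[
\phi(\bX_{-\cu}) = \phi_0 + \sum_{j\in-\cu}\phi_j(X_j) + e(\bx_{\cu,i},\bX_{-\cu}),
\]
where $\phi_j(X_j) = f_j(\bx_{\cu,i},X_j)$. The sample mean of $\phi$ then splits into an additive part $\sum_{j}M^{-1}\sum_{k}\phi_j(X_{jk})$ and a remainder part $M^{-1}\sum_k e(\bx_{\cu,i},\bX_{-\cu,k})$.

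Next, I would bound the additive part. Under LHS, each coordinate $X_j$ is one-dimensionally stratified with one point per cell of width $1/M$. Hence $\Var{M^{-1}\sum_k\phi_j(X_{jk})} = M^{-2}\sum_k\Var{\phi_j\mid\text{cell }k}$. For square-integrable $\phi_j$ this is $o(M^{-1})$; for smooth (Lipschitz) $\phi_j$ it is $\cO(M^{-3})\subseteq\cO(N^{-2})$. The cross-covariances between different additive components, and between additive components and $e$, need to be handled by the same stratification argument.

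Then I would treat the remainder part using Stein's Theorem~1 / Corollary~1 together with Owen's exact LHS variance formula. These give
\[
\Var{M^{-1}\textstyle\sum_k e} = M^{-1}\int e^2\,d\bX_{-\cu} + \text{(correction)}.
\]
The correction collects the $-M^{-1}\int e^2$-type terms arising from the negative dependence between rows; these are of order $\cO(M^{-2})$. All terms that are $o(M^{-1})$ but not shown to be $\cO(M^{-2})$, including the additive-part variance when $\phi_j$ is merely $L^2$, are gathered into $r_N$. Since $M=(1-1/I)N$, the orders in $M$ translate into orders in $N$.

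The delicate step is asserting $r_N>0$ for finite $N$. The plan is to identify $r_N$ with the additive-part contribution $M^{-2}\sum_j\sum_k\Var{\phi_j(X_{jk})\mid\text{cell }k}$. This is a sum of conditional within-stratum variances and is therefore strictly positive whenever some main effect is nonconstant within a stratum, which is the generic case. One must make sure the $\cO(N^{-2})$ bookkeeping does not absorb it. Controlling the cross terms, which vanish by orthogonality of ANOVA components only up to $\cO(M^{-2})$ under LHS, is the main obstacle. I would handle them via Owen's representation of the LHS variance as $M^{-1}\int e^2 - M^{-1}\sum_j\int\phi_j^2$ plus stratification terms.
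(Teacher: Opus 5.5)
Your route is the paper's own: its proof is only the remark that one argues as in Stein's Corollary~1 with sample size $M=N-N/I$. But the step carrying all the weight, in both, is treating the $M$ retained points as a Latin hypercube sample of size $M$ (``one point per cell of width $1/M$''). That step fails for the design the lemma is used with.

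The lemma is paired with the conditional expansion $\Var{\lhs{L}_N \mid \lhs{\bX}_\cu}=N^{-1}\int e^2+r_N+\cO(N^{-2})$ from the proof of Proposition~\ref{prop:lhs_nest_bias}. That expansion treats all $N$ inner points of a scenario as an LHS of size $N$, and the sections $\cI_l$ are fixed index blocks. Deleting $N/I$ rows of an LHS of size $N$ leaves, in each coordinate $j\in-\cu$, points in a uniformly random $M$-subset of the $N$ cells of width $1/N$. The cell means of $\phi_j$ are therefore sampled without replacement rather than exhaustively, and
\[
\Var{\frac{1}{M}\sum_{k}\phi_j(X_{jk})}=\frac{1}{(I-1)(N-1)}\Var{\phi_j(X_j)}+o(N^{-1}) .
\]
For fixed $I$ this is of exact order $N^{-1}$ and cannot be absorbed into $r_N$; it is negligible only when $N/I=\cO(1)$.

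Your fallback of reinterpreting the retained draws as an LHS of size $M$ does not repair this. It contradicts the size-$N$ design used for $\lhs{L}_N$. Moreover, for $I\ge 3$ no design makes every leave-one-section-out set an LHS of size $M$: comparing two deleted sections shows that, in each coordinate, all sections must occupy the same set of cells of width $1/M$, and then any $I-1$ of them hit each cell $0$ or $I-1\ge 2$ times. Under the construction literally described in the paper (independent matrices $\bA^{(j)}$), the inner points of a scenario are i.i.d.\ and no main effect is removed at all.

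What does make the lemma's formula true is a replicated design in which each section is its own independent LHS of size $n=N/I$. Your stratification argument then applies within a section (cells of width $I/N$), giving a section-mean variance $v=n^{-1}\int e^2+o(n^{-1})$. Averaging the $I-1$ independent retained sections gives $v/(I-1)=(N-N/I)^{-1}\int e^2+o(N^{-1})$, while $\lhs{L}_N$ keeps the leading term $N^{-1}\int e^2$.

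Two smaller points.
\begin{itemize}
\item The cross terms you call the main obstacle vanish exactly under LHS. Different main effects are independent because the columns are permuted independently. The sample averages of $\phi_j$ and of $e$ are uncorrelated: for two distinct rows the joint density is a product over coordinates of factors that integrate to one in each argument, so after integrating out the other coordinates the $e$-factor reduces to $\int e\,d\bX_{-\cu\setminus\{j\}}=0$. For the same row this is ANOVA orthogonality.
\item $r_N>0$ can only hold generically. The within-cell main-effect term is zero when the main effects are constant on cells. When $e$ is merely square-integrable, the interaction correction is only $o(M^{-1})$ with no definite sign. Identifying $r_N$ with the main-effect term therefore requires that correction to be $\cO(N^{-2})$, e.g.\ for smooth $f$.
\end{itemize}
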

\begin{proof}
	The proof proceeds analogously to that of Corollary 1 in \cite{stein1987large}.
\end{proof}

\subsection{Proofs in Subsection \ref{subsec:lhs-pfcr}}
\subsubsection{Proof of Proposition \ref{prop:pf_lhs_bias_variance}} \label{app:proof_pf_lhs_bias_variance}

\begin{proof}
We first analyze the bias of $V_{\mbox{\tiny L--PF}}^{\cu}$. Define
$
\overline{Z} \coloneqq {K}^{-1} \sum_{i=1}^K Z_i$, with   $Z_i \coloneqq \left(f(\lhs{\bX}_{\cu,i}, \lhs{\bX}_{-\cu,i})+f(\lhs{\bX}_{\cu,i} , \lhs{\bX}_{-\cu,i}^{\prime})\right)$/2,  for   $i \in [K]$. It follows that
$
V_{\mbox{\tiny L--PF}}^{\cu} = {K}^{-1} \sum_{i=1}^K  f(\lhs{\bX}_{\cu,i}, \lhs{\bX}_{-\cu,i}) f(\lhs{\bX}_{\cu,i}, \lhs{\bX}_{-\cu,i}^{\prime}) - \overline{Z}^{\,2}$.
On the one hand,
\begin{align*}
	\E{\frac{1}{K} \sum_{i=1}^K f(\lhs{\bX}_{\cu,i}, \lhs{\bX}_{-\cu,i}) f(\lhs{\bX}_{\cu,i}, \lhs{\bX}_{-\cu,i}^{\prime})}
& = \E{f(\lhs{\bX}_{\cu,i}, \lhs{\bX}_{-\cu,i}) f(\lhs{\bX}_{\cu,i}, \lhs{\bX}_{-\cu,i}^{\prime})} \ .
\end{align*}
On the other hand,
 $\E{\overline{Z}} = \E{f(\lhs{\bX}_{\cu,i}, \lhs{\bX}_{-\cu,i})}$. It follows that
$
\E{\overline{Z}^{2}} = \Var{\overline{Z}} + \left(\E{\overline{Z}}\right)^2 = \Var{\overline{Z}} + \Ep{f(\lhs{\bX}_{\cu,i}, \lhs{\bX}_{-\cu,i})}{2}
$.
Hence,
$
\E{V_{\mbox{\tiny L--PF}}^{\cu}} = V - \Var{\overline{Z}}.
$
Since $\Var{\overline{Z}} = \cO(K^{-1})$ by Theorem 1 in \cite{stein1987large}, we have
$\E{V_{\mbox{\tiny L--PF}}^{\cu}} - V = \cO(K^{-1}).$

We next analyze the variance of $V_{\mbox{\tiny L--PF}}^{\cu}$. We write $V_{\mbox{\tiny L--PF}}^{\cu} = A_K - B_K$, where $A_K \coloneqq K^{-1} \sum_{i=1}^K f(\lhs{\bX}_{\cu,i}, \lhs{\bX}_{-\cu,i}) f(\lhs{\bX}_{\cu,i}, \lhs{\bX}_{-\cu,i}^{\prime})$, and $B_K \coloneqq \overline{Z}^{2}$. It follows that
\begin{equation}\label{eq:var_L_PF_ineq}
	\Var{V_{\mbox{\tiny L--PF}}^{\cu}} = \Var{A_K - B_K} \leq 2 \Var{A_K} + 2 \Var{B_K} \ .
\end{equation}
Define
$
h_1(\lhs{\bX}_{\cu}, \lhs{\bX}_{-\cu}, \lhs{\bX}_{-\cu}^{\prime}) := f(\lhs{\bX}_{\cu,i}, \lhs{\bX}_{-\cu,i}) f(\lhs{\bX}_{\cu,i}, \lhs{\bX}_{-\cu,i}^{\prime}).
$
It follows from Theorem 1 in \cite{stein1987large} that
\begin{equation}\label{eq:var_AK}
	\Var{A_K} = \Var{\frac{1}{K} \sum_{i=1}^K h_1(\lhs{\bX}_{\cu, i}, \lhs{\bX}_{-\cu, i}, \lhs{\bX}_{-\cu, i}^{\prime})} = \cO(K^{-1}) \ .
\end{equation}
Similarly, define
$
h_2(\lhs{\bX}_{\cu}, \lhs{\bX}_{-\cu}, \lhs{\bX}_{-\cu}^{\prime}) :=  \left( f(\lhs{\bX}_{\cu}, \lhs{\bX}_{-\cu}) + f(\lhs{\bX}_{\cu}, \lhs{\bX}_{-\cu}^{\prime}) \right)/2
$.
Then, we have
\begin{equation}\label{eq:var_Z_bar}
	\Var{\overline{Z}} = \Var{\frac{1}{K}\sum_{k=1}^{K} h_2(\lhs{\bX}_{\cu, k}, \lhs{\bX}_{-\cu, k}, \lhs{\bX}_{-\cu, k}^{\prime}) } = \cO(K^{-1}) \ .
\end{equation}
Notice that $B_K$ can be written as $g(\overline{Z})$ where $g(t) = t^2$. Since $g$ is smooth and $\overline{Z}$ has finite second moment, the Taylor series expansion yields
$
\Var{B_K}
= \Var{g(\overline{Z})}
= g^{\prime}(\E{\overline{Z}})^2 \, \Var{\overline{Z}} + \cO\big(\Var{\overline{Z}}^{2}\big)
$.
Therefore, it follows from \eqref{eq:var_Z_bar} that
\begin{equation}\label{eq:var_BK}
	\Var{B_K}
= \cO\left(\Var{\overline{Z}}\right) + \cO\left(\Var{\overline{Z}}^{\,2}\right)
= \cO\left(K^{-1}\right) \ .
\end{equation}
Combining \eqref{eq:var_AK} and \eqref{eq:var_BK} gives
$
\Var{V_{\mbox{\tiny L--PF}}^{\cu}} \leq 2 \Var{A_K} + 2 \Var{B_K} = \cO(K^{-1})
$.
\end{proof}

\subsubsection{Proof of Proposition \ref{prop:cr_lhs_bias_variance}} \label{app:proof_cr_lhs_bias_variance}

\begin{proof}
We first analyze the bias of $V_{\mbox{\tiny CR}}^{\cu}$. Notice that
\begin{align*}
	\E{\frac{1}{K} \sum_{i=1}^K f(\lhs{\bX}_{\cu,i}, \lhs{\bX}_{-\cu,i}) f(\lhs{\bX}_{\cu,i}, \lhs{\bX}_{-\cu,i}^{\prime})} = \E{f(\lhs{\bX}_{\cu,i}, \lhs{\bX}_{-\cu,i}) f(\lhs{\bX}_{\cu,i}, \lhs{\bX}_{-\cu,i}^{\prime})} \ .
\end{align*}
Similarly,
\begin{align*}
	& \E{\frac{1}{K} \sum_{i=1}^K f(\lhs{\bX}_{\cu,i}, \lhs{\bX}_{-\cu,i}) f(\lhs{\bX}_{\cu,i}^{\prime}, \lhs{\bX}_{-\cu,i}^{\prime})} = \E{f(\lhs{\bX}_{\cu,i}, \lhs{\bX}_{-\cu,i})} \E{ f(\lhs{\bX}_{\cu,i}^{\prime}, \lhs{\bX}_{-\cu,i}^{\prime})} \ , \\
	& \E{\frac{1}{K} \sum_{i=1}^K f(\lhs{\bX}_{\cu,i}^{\prime\prime}, \lhs{\bX}_{-\cu,i}) f(\lhs{\bX}_{\cu,i}, \lhs{\bX}_{-\cu,i}^{\prime})} = \E{f(\lhs{\bX}_{\cu,i}^{\prime\prime}, \lhs{\bX}_{-\cu,i})} \E{ f(\lhs{\bX}_{\cu,i}, \lhs{\bX}_{-\cu,i}^{\prime})} \ ,  \\
	& \E{\frac{1}{K} \sum_{i=1}^K f(\lhs{\bX}_{\cu,i}^{\prime\prime}, \lhs{\bX}_{-\cu,i}) f(\lhs{\bX}_{\cu,i}^{\prime}, \lhs{\bX}_{-\cu,i}^{\prime})} = \E{f(\lhs{\bX}_{\cu,i}^{\prime\prime}, \lhs{\bX}_{-\cu,i})} \E{ f(\lhs{\bX}_{\cu,i}^{\prime}, \lhs{\bX}_{-\cu,i}^{\prime})} \ .
\end{align*}
Hence, we have
\[\E{V_{\mbox{\tiny L--CR}}^{\cu}} - V = \E{f(\lhs{\bX}_{\cu,i}, \lhs{\bX}_{-\cu,i}) f(\lhs{\bX}_{\cu,i}, \lhs{\bX}_{-\cu,i}^{\prime})} - \Ep{ f(\lhs{\bX}_{\cu,i}, \lhs{\bX}_{-\cu,i})}{2} - V = 0 \ .\]
Now we analyze the variance of $V_{\mbox{\tiny CR}}^{\cu}$. Similar to the proof in Appendix \ref{app:proof_pf_lhs_bias_variance}, we can show that
\[\E{K^{-1} \sum_{i=1}^K f(\lhs{\bX}_{\cu,i}, \lhs{\bX}_{-\cu,i}) f(\lhs{\bX}_{\cu,i}, \lhs{\bX}_{-\cu,i}^{\prime})} = \cO(K^{-1}) \ .\]
The other terms can be proved analogously.
\end{proof}

\subsection{Proofs in Subsection \ref{subsec:standard_nest}}
\subsubsection{Proof of Proposition \ref{prop:lhs_nest_bias}}\label{app:proof_nest_lhs_bias}

\begin{proof}

By Corollary 1 of \cite{stein1987large}, conditional on $\lhs{\bX}_\cu$, we have
\begin{equation}\label{eq:var_LN_stein}
	\Vars{\lhs{L}_N(\lhs{\bX}_\cu) \bigm|  \lhs{\bX}_{\cu} }{\lhs{\bX}_{-\cu} } = \frac{1}{N} \int e^2(\lhs{\bX}_\cu, \lhs{\bX}_{-\cu})d\lhs{\bX}_{-\cu} + r_N + \cO(N^{-2}) \ ,
\end{equation}
where $r_N=o(N^{-1})$ and $r_N > 0$. To express $e(\lhs{\bX}_\cu, \lhs{\bX}_{-\cu})$ explicitly, we fix $x_\cu \in [0,1)^{|\cu|}$, consider $
\phi_{x_\cu}(\bX_{-\cu}) \coloneqq f(x_\cu,\bX_{-\cu}) ,
$
and apply Lemma \ref{lem:functional_anova_template} to $\phi_{x_\cu}$. The resulting mean and main effects are
\begin{align*}
    \phi_0(x_\cu) &= \E{\phi_{x_\cu}(\bX_{-\cu})} = \E{\cY \bigm| \bX_\cu = x_\cu}, \\
    \phi_j(x_\cu;X_j) &= \E{\phi_{x_\cu}(\bX_{-\cu}) \bigm| X_j} - \phi_0(x_\cu) \\
    &= \E{\cY \bigm| X_j, \bX_\cu=x_\cu} - \E{\cY \bigm| \bX_\cu=x_\cu}, \ \mbox{for } j \in -\cu \ .
\end{align*}

Let
$
\delta(x_\cu,\bX_{-\cu}) \coloneqq \phi_{x_\cu}(\bX_{-\cu}) - \phi_0(x_\cu) - \sum_{j \in -\cu} \phi_j(x_\cu;X_j)
$
denote the remainder after removing the mean and all main effects. We then evaluate this remainder at the randomly sampled outer-level scenarios via LHS and define
$
e(\lhs{\bX}_\cu, \lhs{\bX}_{-\cu}) \coloneqq \delta(\lhs{\bX}_\cu,\lhs{\bX}_{-\cu}).
$
Equivalently, we have
\[
e(\lhs{\bX}_\cu, \lhs{\bX}_{-\cu})
= f(\lhs{\bX}_\cu, \lhs{\bX}_{-\cu}) - \E{\cY \bigm| \bX_\cu=\lhs{\bX}_\cu}
- \sum_{j \in -\cu}\left(\E{\cY \bigm| \lhs{X}_j,\bX_\cu=\lhs{\bX}_\cu} - \E{\cY \bigm| \bX_\cu=\lhs{\bX}_\cu}\right) \ .
\]
It follows from Lemma~\ref{lem:functional_anova_template} that  $\Es{e(\lhs{\bX}_\cu, \lhs{\bX}_{-\cu})}{\lhs{\bX}_{-\cu}} = 0$ and
\begin{align}
	\int e^2(\lhs{\bX}_\cu, \lhs{\bX}_{-\cu})d\bX_{-\cu} &= \Var{e(\lhs{\bX}_\cu, \lhs{\bX}_{-\cu})} \nonumber \\
	&= \Var{\lhs{\cY} \bigm| \bX_\cu=\lhs{\bX}_\cu} -\sum_{j \in -\cu}\Var{\phi_j(\lhs{\bX}_\cu;\lhs{X}_j)} \nonumber \\
	&= \Var{\lhs{\cY} \bigm| \bX_\cu=\lhs{\bX}_\cu} -\sum_{j \in -\cu}\Var{\E{\lhs{\cY} \bigm| \lhs{X}_j, \bX_\cu=\lhs{\bX}_\cu}}. \label{eq:e_square_equality}
\end{align}
Define $R_{\cu} \coloneqq \Essquare{\Vars{e(\lhs{\bX}_\cu, \lhs{\bX}_{-\cu}) \mid \lhs{\bX}_\cu}{\lhs{\bX}_{-\cu} } }{\lhs{\bX}_\cu} / \Essquare{\Vars{\lhs{\cY} \mid \lhs{\bX}_{\cu}}{\lhs{\bX}_{-\cu}}}{\lhs{\bX}_\cu}$. It follows that
\begin{equation}
\begin{split}
R_{\cu} &= \frac{\Essquare{\Vars{\lhs{\cY} \bigm| \lhs{\bX}_{\cu}}{\lhs{\bX}_{-\cu}}}{\lhs{\bX}_\cu}
- \sum_{j \in -\cu}\Essquare{\Vars{\Es{\lhs{\cY} \bigm| \lhs{X}_j, \lhs{\bX}_\cu}{\lhs{\bX}_{-\cu \setminus \{j\}}}}{\lhs{X}_j}}{\lhs{\bX}_\cu}}
{\Essquare{\Vars{\lhs{\cY} \bigm| \lhs{\bX}_{\cu}}{\lhs{\bX}_{-\cu}}}{\lhs{\bX}_\cu}}  \\
&= 1 - \frac{\sum_{j \in -\cu}\Essquare{\Vars{\Es{\lhs{\cY} \bigm| \lhs{X}_j, \lhs{\bX}_\cu}{\lhs{\bX}_{-\cu \setminus \{j\}}}}{\lhs{X}_j}}{\lhs{\bX}_\cu}}
{\Essquare{\Vars{\lhs{\cY} \bigm| \lhs{\bX}_{\cu}}{\lhs{\bX}_{-\cu}}}{\lhs{\bX}_\cu}} \\
&= 1 - \frac{\sum_{j \in -\cu} \left(
\Essquare{\Vars{\lhs{\cY} \bigm| \lhs{\bX}_\cu}{\lhs{\bX}_{-\cu}}}{\lhs{\bX}_\cu}
-
\Essquare{\Es{ \Vars{\lhs{\cY} \bigm| \lhs{X}_j, \lhs{\bX}_\cu}{\lhs{\bX}_{-\cu \setminus \{j\}}} }{\lhs{X}_j}}{\lhs{\bX}_\cu}
\right)}
{\Essquare{\Vars{\lhs{\cY} \bigm| \lhs{\bX}_{\cu}}{\lhs{\bX}_{-\cu}}}{\lhs{\bX}_\cu}} \\
&= 1 - \frac{\sum_{j \in -\cu} \left(
\Varssquare{\Es{\lhs{\cY} \bigm| \lhs{\bX}_\cu, \lhs{X}_j}{\lhs{\bX}_{-\cu \setminus \{j\}}}}{\lhs{X}_{\cu \cup \{j\}}}
-
\Varssquare{\Es{\lhs{\cY} \bigm| \lhs{\bX}_\cu}{\lhs{\bX}_{-\cu}}}{\lhs{\bX}_\cu}
\right)}
{\Essquare{\Vars{\lhs{\cY} \bigm| \lhs{\bX}_{\cu}}{\lhs{\bX}_{-\cu}}}{\lhs{\bX}_\cu}} \\
&= \frac{S_T^{-\cu} - \sum_{j \in -\cu}(S^{j} + S^{\cu, j})}{S_T^{-\cu}} \ ,
\end{split}
\label{eq:r_u_derive}
\end{equation}
where $S_T^{-\cu}$ denotes the total index of $\bX_{-\cu}$. The first equality on the RHS of \eqref{eq:r_u_derive} follows from \eqref{eq:e_square_equality}.
The third and fourth equalities follow from the law of total variance, and the final inequality follows from the definition of the total index. It is evident that $R_{\cu} \in [0, 1]$, and $R_{\cu}$ quantifies the portion of the interaction effects involving the inputs in $\cu$ not captured by $\sum_{j \in -\cu}S^{\cu,j}$.

By utilizing $R_{\cu}$, we can further analyze the bias of $\E{V_{\tiny \mbox{L--NS}}^{\cu}}$ as follows:
	\begin{align*}
		\E{V_{\tiny \mbox{L--NS}}^{\cu}} - V &= \Var{\lhs{L}_N(\lhs{\bX}_\cu)} + \cO(K^{-1}) - V \\
		&= \Var{\E{\lhs{L}_N(\lhs{\bX}_\cu) \bigm| \lhs{\bX}_\cu}} + \E{\Var{\lhs{L}_N(\lhs{\bX}_\cu) \bigm| \lhs{\bX}_\cu}} + \cO(K^{-1}) - V \\
		&= V + \E{N^{-1}\Var{\lhs{\cY} \bigm| \lhs{\bX}_\cu} } \cdot R_{\cu} + r_N + \cO(N^{-2}) + \cO(K^{-1}) - V \\
		&= N^{-1}\E{\Var{\lhs{\cY} \bigm| \lhs{\bX}_\cu}} \cdot R_{\cu} + r_N + \cO(N^{-2}) + \cO(K^{-1}) \ ,
	\end{align*}
	where the first equality follows from Lemma \ref{lem:V_bias}, and the third equality follows from Equation \eqref{eq:var_LN_stein}. Furthermore, if $|\cu|=1$, it follows from Lemma \ref{lem:V_bias} that
		$\E{V_{\tiny \mbox{L--NS}}^{\cu}} - V = N^{-1}\E{\Var{\lhs{\cY} \mid \lhs{\bX}_\cu}} \cdot R_{\cu} + r_N + \cO(N^{-2}) + \cO(K^{-3}) + \cO(T^{-1})$.
\end{proof}

\subsubsection{Proof of Proposition \ref{prop:lhs_nest_var}}\label{app:proof_nest_lhs_var}
\begin{proof}
	To analyze the variance of $V_{\tiny \mbox{L--NS}}^{\cu}$, we use the following decomposition:
	\begin{equation}
\begin{split}
\Var{V_{\tiny \mbox{L--NS}}^{\cu}}
=&\ \Var{\frac{1}{K}\sum_{i=1}^{K}
\left( \lhs{L}_N(\lhs{\bX}_{\cu,i}) -
\frac{1}{K}\sum_{i=1}^{K} \lhs{L}_N(\lhs{\bX}_{\cu,i}) \right)^2 }  \\
=&\ \Var{\E{ \frac{1}{K} \sum_{i=1}^{K}
\left( \lhs{L}_N(\lhs{\bX}_{\cu,i}) -
\frac{1}{K}\sum_{i=1}^{K} \lhs{L}_N(\lhs{\bX}_{\cu,i}) \right)^2
\Bigg|\ \{\lhs{\bX}_{\cu,i}\}_{i \in[K]} }}  \\
&+ \E{\Var{ \frac{1}{K} \sum_{i=1}^{K}
\left( \lhs{L}_N(\lhs{\bX}_{\cu,i}) -
\frac{1}{K}\sum_{i=1}^{K} \lhs{L}_N(\lhs{\bX}_{\cu,i}) \right)^2
\Bigg|\ \{\lhs{\bX}_{\cu,i}\}_{i \in[K]} }}  \\
\coloneqq &\ \text{item } (i) + \text{item } (ii).
\end{split}
\label{eq:variance_lhs_decompose}
\end{equation}

	We first analyze item $(i)$ in \eqref{eq:variance_lhs_decompose}. It follows that
	\begin{align}
		&\Var{\E{ \frac{1}{K} \sum_{i=1}^{K}\left( \lhs{L}_N(\lhs{\bX}_{\cu,i}) - \frac{1}{K}\sum_{i=1}^{K} \lhs{L}_N(\lhs{\bX}_{\cu,i}) \right)^2 \Bigg|\ \{\lhs{\bX}_{\cu,i}\}_{i \in[K]} }} \nonumber \\
  		= & \Var{\frac{1}{K}\sum_{i=1}^{K}\E{ \lhs{L}^2_N(\lhs{\bX}_{\cu,i}) \mid \lhs{\bX}_{\cu,i} } -  \E{\left( \frac{1}{K}\sum_{i=1}^{K} \lhs{L}_N(\lhs{\bX}_{\cu,i}) \right)^2 \Bigg|\ \{\lhs{\bX}_{\cu,i}\}_{i \in[K]} } } \nonumber \\
  		\leq & 2  \Var{\frac{1}{K} \sum_{i=1}^{K} \E{  \lhs{L}^2_N(\lhs{\bX}_{\cu,i})  \Big|\ \{\lhs{\bX}_{\cu,i}\}_{i \in[K]}} }   + 2 \Var{\E{\left( \frac{1}{K}\sum_{i=1}^{K} \lhs{L}_N(\lhs{\bX}_{\cu,i}) \right)^2 \Bigg|\ \{\lhs{\bX}_{\cu,i}\}_{i \in[K]} }}  \nonumber \\
  		\coloneqq & 2 \cdot \mbox{item } (iii) +  2 \cdot \mbox{item } (iv) \nonumber \ .
 	\end{align}
 	Item $(iii)$ above is of order $\cO(K^{-1})$. Furthermore, when $|\cU|=1$, it reduces to $\cO(K^{-3})$ under LHS. Regarding item $(iv)$ above, we have
 	\begin{align}
 		\label{eq:var_exp_LN}
 		& \Var{\E{\left( \frac{1}{K}\sum_{i=1}^{K} \lhs{L}_N(\lhs{\bX}_{\cu,i}) \right)^2 \Bigg|\ \{\lhs{\bX}_{\cu,i}\}_{i \in[K]} }}  \nonumber \\
 		= & \Var{\Var{\frac{1}{K}\sum_{i=1}^{K} \lhs{L}_N(\lhs{\bX}_{\cu,i}) \ \Bigg|\ \{\lhs{\bX}_{\cu,i}\}_{i \in[K]} }  + \Ep{\frac{1}{K}\sum_{i=1}^{K} \lhs{L}_N(\lhs{\bX}_{\cu,i}) \ \Bigg|\ \{\lhs{\bX}_{\cu,i}\}_{i \in[K]}}{2} } \nonumber \\
 		= & \Var{\frac{1}{K^2} \sum_{i=1}^{K}\Var{\lhs{L}_N(\lhs{\bX}_{\cu,i}) \ \Big|\ \lhs{\bX}_{\cu,i} } + \left( \frac{1}{K}\sum_{i=1}^{K} L(\lhs{\bX}_{\cu,i})  \right)^2 }   \\
 		\leq &  \frac{2}{K^2}\Var{\frac{1}{K} \sum_{i=1}^{K}\Var{\lhs{L}_N(\lhs{\bX}_{\cu,i}) \ \Big|\ \lhs{\bX}_{\cu,i} }}  + 2  \Var{\left( \frac{1}{K}\sum_{i=1}^{K} L(\lhs{\bX}_{\cu,i})  \right)^2} \nonumber \\
 		= & \cO(K^{-3})  + 2 \Var{\left( \frac{1}{K}\sum_{i=1}^{K} L(\lhs{\bX}_{\cu,i})  \right)^2} \nonumber \\
 		= & \cO(K^{-3}) + 2 \cdot \mbox{item}(v) \nonumber \ ,
 	\end{align}
 	where the second equality on the RHS of \eqref{eq:var_exp_LN} follows from Lemma \ref{lem:L_unbias}.
 	Now we analyze item $(v)$ in \eqref{eq:var_exp_LN}. Let $\cL \coloneqq K^{-1}\sum_{i=1}^{K} L(\lhs{\bX}_{\cu,i})$, and recall that $\mu \coloneqq \E{\cY}$. Notice that $\E{\cL} = \mu$. It follows that
 	\begin{align*}
 		\Var{\cL^2} &= \E{\cL^4} - \Ep{\cL^2}{2}  \\
 		&= \E{(\cL - \mu)^4} + 4\mu\E{\cL-\mu}^3 + 6 \mu^2 \Var{\cL} + \mu^4 - (\mu^2 + \Var{\cL})^2 \\
 		&= (\kappa_{\cL} - 1) \Varp{\cL}{2} + 4\mu\E{\cL-\mu}^3 + 4 \mu^2 \Var{\cL} \ ,
 	\end{align*}
 	where $\kappa_{\cL}$ denotes the kurtosis of $\cL$. By applying the Cauchy--Schwarz inequality, we obtain the bound
 	\[\Var{\cL^2} \leq (\kappa_{\cL} - 1) \Varp{\cL}{2} + 4|\mu|\sqrt{\kappa_{\cL}}\Varp{\cL}{\frac{3}{2}}  + 4 \mu^2 \Var{\cL} \ .\]
 	Since $\Var{\cL} = \cO(K^{-1})$, item $(v)$ in \eqref{eq:var_exp_LN} is of order $\cO(K^{-1})$. Furthermore, when $|\cu| = 1$, we have $\Var{\cL} = \cO(K^{-3})$, and hence item $(v)$ is of order $\cO(K^{-3})$.

 	We now analyze item $(ii)$ in \eqref{eq:variance_lhs_decompose}. It follows that
 	\begin{align}
 		&\Esquare{\Var{ \frac{1}{K} \sum_{i=1}^{K}\left( \lhs{L}_N(\lhs{\bX}_{\cu,i}) - \frac{1}{K}\sum_{i=1}^{K} \lhs{L}_N(\lhs{\bX}_{\cu,i}) \right)^2 \ \middle|\ \{\lhs{\bX}_{\cu,i}\}_{i \in[K]} }} \nonumber \\
 		=& \Esquare{ \frac{1}{K-1} \E{ \left(  \lhs{L}_N(\lhs{\bX}_{\cu}) - L(\lhs{\bX}_{\cu}) \right)^4 \ \middle|\ \lhs{\bX}_{\cu} } - \frac{K-3}{(K-1)^2} \Varp{\lhs{L}_N(\lhs{\bX}_{\cu}) \ \middle|\ \lhs{\bX}_{\cu}}{2}} \nonumber \\
 		=& \frac{1}{K-1} \E{ \Esquare{ \left(  \lhs{L}_N(\lhs{\bX}_{\cu}) - L(\lhs{\bX}_{\cu}) \right)^4 \ \middle|\ \lhs{\bX}_{\cu} }} - \frac{K-3}{(K-1)^2} \Esquare{\Varp{\lhs{L}_N(\lhs{\bX}_{\cu}) \ \middle|\ \lhs{\bX}_{\cu}}{2}} \nonumber \\
 		=& \frac{1}{K-1} \Esquare{ \Var{ \left(  \lhs{L}_N(\lhs{\bX}_{\cu}) - L(\lhs{\bX}_{\cu}) \right)^2 \ \middle|\ \lhs{\bX}_{\cu} } }  + \frac{2}{(K-1)^2} \Esquare{\Varp{\lhs{L}_N(\lhs{\bX}_{\cu}) \ \middle|\ \lhs{\bX}_{\cu}}{2}} \nonumber \\
 		 \coloneqq & \frac{1}{K-1} \cdot \mbox{ item } (vi) +  \frac{2}{(K-1)^2} \Esquare{\Varp{\lhs{L}_N(\lhs{\bX}_{\cu}) \ \middle|\ \lhs{\bX}_{\cu}}{2}} \ ,
 		 \label{eq:variance_lhs_decompose_expect_var}
 	\end{align}
 	where the second to last equality follows from the following identity:
 	\begin{align*}
 		& \Esquare{ \E{ \left(  \lhs{L}_N(\lhs{\bX}_{\cu}) - L(\lhs{\bX}_{\cu}) \right)^4 \ \middle|\ \lhs{\bX}_{\cu} } } \\
 		=& \Esquare{ \Var{ \left(  \lhs{L}_N(\lhs{\bX}_{\cu}) - L(\lhs{\bX}_{\cu}) \right)^2 \ \middle|\ \lhs{\bX}_{\cu} } } + \Esquare{\Varp{\lhs{L}_N(\lhs{\bX}_{\cu}) \ \middle|\ \lhs{\bX}_{\cu}}{2} } \ .
 	\end{align*}

 	The second term on the RHS of \eqref{eq:variance_lhs_decompose_expect_var} is of order $\cO(T^{-2})$, since $\Esquare{\Varp{\lhs{L}_N(\lhs{\bX}_{\cu}) \ \middle|\ \lhs{\bX}_{\cu}}{2} } = \cO(N^{-2})$. Regarding item $(vi)$ in \eqref{eq:variance_lhs_decompose_expect_var}, we have
 	\begin{equation}\label{eq:conditional_var_bound}
 		\Esquare{ \Var{ \left(  \lhs{L}_N(\lhs{\bX}_{\cu}) - L(\lhs{\bX}_{\cu}) \right)^2 \bigm| \lhs{\bX}_{\cu} } }
 		\leq 2\Esquare{\Var{  \lhs{L}^2_N(\lhs{\bX}_{\cu})  \mid \lhs{\bX}_{\cu} }} + 2\Esquare{4L^2(\lhs{\bX}_{\cu})\Var{  \lhs{L}_N(\lhs{\bX}_{\cu})  \mid \lhs{\bX}_{\cu} } } \ .
 	\end{equation}

 	The second term on the RHS of \eqref{eq:conditional_var_bound} is of order $\cO(N^{-1})$. Similarly, the first term on the RHS of \eqref{eq:conditional_var_bound} is also $\cO(N^{-1})$, following the same reasoning as in the analysis of item $(v)$ in \eqref{eq:var_exp_LN}. Therefore, item $(ii)$ in \eqref{eq:variance_lhs_decompose} is of order $\cO(T^{-1})$. Combining the orders of items $(i)$ and $(ii)$ in \eqref{eq:variance_lhs_decompose} yields that the variance of the L-NS estimator is of order $\cO(K^{-1})$, and specifically $\cO(K^{-1}) + \cO(T^{-1})$ when $|\cU| = 1$.
\end{proof}

\subsection{Proofs in Subsection \ref{subsec:bias_correct_nest}}
\subsubsection{Proof of Proposition \ref{prop:bias_sj_lhs}}\label{app:proof_sj_lhs_bias}

\begin{proof}
To analyze the bias of $V_{\tiny \mbox{L--SJ}}^{\cu}$, we first note that
\begin{align*}
	\E{ \frac{1}{K} \sum_{i=1}^{K}\left(\lhs{L}_N(\bX_{\cu,i}) - \widehat{\mu}\right)^2} &= \Var{\lhs{L}_N(\bX_\cu)}  + \Var{\widehat{\mu}} = \Var{\lhs{L}_N(\bX_\cu)} + \cO(J^{-1}) \ ,
\end{align*}
	 where we recall that $\widehat{\mu}$ is the sample mean of $\cD_{\mbox{\tiny pre}}$ with size $J$. Hence, the bias of the L--SJ estimator follows as
\begin{align*}
	\E{V_{\tiny \mbox{L--SJ}}^{\cu}}- V
	=& \ I \cdot \Var{\lhs{L}_N(\bX_\cu)} - \frac{(I-1)}{I} \sum_{l=1}^{I}\Var{\lhs{L}_{N, -l}(\bX_\cu)}    + \Var{\widehat{\mu}} - V \\
	=& \ I \cdot \left(V + \frac{\E{\Var{\cY \mid \bX_\cu}}}{N} \cdot R_\cu + r_N + \cO(N^{-2})\right) \\
	& \ - \frac{(I-1)}{I} \sum_{l=1}^{I} \left(V+ \frac{\E{\Var{\cY \mid \bX_\cu}}}{N-N/I} \cdot R_\cu + r_N + \cO(N^{-2})  \right) + \cO(J^{-1}) - V  \\
	=& \  \cO(J^{-1}) + r_N + \cO(N^{-2}) \ ,
\end{align*}
where the second equality follows from Proposition \ref{prop:lhs_nest_bias} and Lemma \ref{lem:additive_approx}.
\end{proof}

\subsubsection{Proof of Proposition \ref{prop:bias_oh_lhs}}\label{app:proof_oh_lhs_bias}
\begin{proof}
	 The bias of the L--OH estimator can be expressed as
\begin{align*}
	\E{V_{\tiny \mbox{L--OH}}^{\cu}}- V &= \E{V_{\tiny \mbox{L--NS}}^{\cu}} - \frac{1}{N(N-1)K}\sum_{i=1}^{K}\sum_{j=1}^{N}\E{(\lhs{\cY}_{ij}-\lhs{L}_{N}(\lhs{\bX}_{\cu,i}))^2} - V\\
	&= \Var{\lhs{L}_N(\bX_\cu)} - \frac{1}{NK}\sum_{i=1}^{K}\E{\frac{1}{N-1}\sum_{j=1}^{N}\big(\lhs{\cY}_{ij}-\lhs{L}_{N}(\lhs{\bX}_{\cu,i})\big)^2} + \cO(K^{-1})- V \\
	&= V + \frac{\E{\Var{\cY \mid \bX_\cu}}}{N} \cdot R_\cu -\frac{\E{\Var{\cY \mid \bX_\cu}}}{N} \cdot R_\cu  + r_N + \cO(N^{-2}) + \cO(K^{-1}) - V \\
	&= \cO(K^{-1})+ r_N + \cO(N^{-2}) \ ,
\end{align*}
where the third equality follows from Proposition \ref{prop:lhs_nest_bias}.
\end{proof}

\subsubsection{Proof of Proposition \ref{prop:bias_jack_lhs}}\label{app:proof_jk_lhs_bias}
\begin{proof}
	 The bias of the L--JK estimator can be expressed as follows:
\begin{align*}
	\E{V_{\tiny \mbox{L--JK}}^{\cu}}- V =& \ I \E{V_{\tiny \mbox{L--NS}}^{\cu}} - (I-1)\E{\frac{1}{I}\sum_{l=1}^{I} V_{\tiny \mbox{L--NS},-l}^{\cu}} - V  \\
	=& \ I \cdot \Var{\lhs{L}_N(\bX_\cu)} - \frac{(I-1)}{I} \sum_{l=1}^{I}\Var{\lhs{L}_{N, -l}(\lhs{\bX}_\cu)} + \cO(K^{-1}) - V \\
	=& \ I \cdot \left(V + \frac{\E{\Var{\cY \mid \lhs{\bX}_\cu}}}{N} \cdot R_\cu + r_N + \cO(N^{-2})\right) \\
	& \ - \frac{(I-1)}{I} \sum_{l=1}^{I} \left(V+ \frac{\E{\Var{\cY \mid \lhs{\bX}_\cu}}}{N-N/I} \cdot R_\cu +r_N + \cO(N^{-2})  \right) + \cO(K^{-1}) - V \\
	=& \  \cO(K^{-1}) +  r_N + \cO(N^{-2}) \ ,
\end{align*}
where the third equality follows from Proposition \ref{prop:lhs_nest_bias} and Lemma \ref{lem:additive_approx}.
\end{proof}

\subsubsection{Proof of Proposition \ref{prop:var_jack_lhs}}\label{app:proof_jk_lhs_var}
\begin{proof}

To analyze the variance of $V_{\tiny \mbox{L--JK}}^{\cu}$, define $b^{(l)} \coloneqq V_{\mbox{\tiny L--NS}}^{\cu} - V_{\mbox{\tiny L--NS},-l}^{\cu}$ for each $l \in [I]$. We have
\begin{align}
	\Var{V_{\mbox{\tiny L--JK}}^{\cu}} =& \ \Var{I V_{\mbox{\tiny L--NS}}^{\cu} - \frac{I-1}{I}\sum_{l=1}^{I}V_{\mbox{\tiny L--NS},-l}^{\cu}} = \Var{V_{\mbox{\tiny L--NS}}^{\cu} + \frac{I-1}{I}\sum_{l=1}^{I}b^{(l)}} \nonumber \\
	= & \ \Var{V_{\mbox{\tiny L--NS}}^{\cu}} + \frac{(I-1)^2}{I}\Var{b^{(1)}} \label{eq:ljkvar_decompose_var} \\
	 & \ + 2(I-1) \Cov{V_{\mbox{\tiny L--NS}}^{\cu}}{b^{(1)}} + \frac{(I-1)^3}{I}\Cov{b^{(1)}}{b^{(2)}}. \label{eq:ljkvar_decompose_cov}
\end{align}
From the proof in Appendix \ref{app:proof_nest_lhs_var}, it follows that the first term in \eqref{eq:ljkvar_decompose_var} satisfies $\Var{V_{\mbox{\tiny L--NS}}^{\cu}} = \cO(K^{-1})$.

The term $\Var{b^{(1)}}$ in \eqref{eq:ljkvar_decompose_var} can be rewritten as
\begin{equation}\label{eq:variance_lbl_rewritten}
	\Var{b^{(1)}} = \Var{V_{\mbox{\tiny L--NS}}^{\cu} - V_{\mbox{\tiny L--NS},-1}^{\cu}} = \E{\left(V_{\mbox{\tiny L--NS}}^{\cu} - V_{\mbox{\tiny L--NS},-1}^{\cu}\right)^2} - \mathbb{E}^2\left( V_{\mbox{\tiny L--NS}}^{\cu} - V_{\mbox{\tiny L--NS},-1}^{\cu} \right).
\end{equation}

Since
\begin{equation}\label{eq:ljk_ns_diff_rate}
	\E{V_{\mbox{\tiny L--NS}}^{\cu} - V_{\mbox{\tiny L--NS},-1}^{\cu}} = \Var{\lhs{L}_N(\bX_\cu)} - \Var{\lhs{L}_{N,-1}(\lhs{\bX}_\cu)} = - \frac{1}{N(I-1)} \E{\Var{\cY \bigm| \lhs{\bX}_\cu}},
\end{equation}
it follows that $\Ep{V_{\mbox{\tiny L--NS}}^{\cu} - V_{\mbox{\tiny L--NS},-1}^{\cu}}{2} = N^{-2} \cdot(I-1)^{-2}  \cdot \Ep{\Var{\cY \bigm| \lhs{\bX}_\cu}}{2}$.

Regarding $\E{\left(V_{\mbox{\tiny L--NS}}^{\cu} - V_{\mbox{\tiny L--NS},-1}^{\cu} \right)^2}$, we have
\begin{align}
	\E{\left(V_{\mbox{\tiny L--NS}}^{\cu} - V_{\mbox{\tiny L--NS},-1}^{\cu} \right)^2} \leq \ & 2 \E{V_{\mbox{\tiny L--NS}}^{\cu} - \Var{\lhs{L}_N(\lhs{\bX}_\cu)} }^2  + 2 \E{V_{\mbox{\tiny L--NS},-1}^{\cu} - \Var{\lhs{L}_{N,-1}(\lhs{\bX}_\cu)}}^2 \nonumber \\
	&+ 2\left[\Var{\lhs{L}_N(\lhs{\bX}_\cu)} - \Var{\lhs{L}_{N,-1}(\lhs{\bX}_\cu)} \right]^2. \label{eq:lesquare}
\end{align}
We next analyze the three terms on the RHS of \eqref{eq:lesquare}. First, $\E{ V_{\mbox{\tiny L--NS}}^{\cu} - \Var{\lhs{L}_N(\lhs{\bX}_\cu)} }^2 = \Var{V_{\mbox{\tiny L--NS}}^{\cu}} = \cO(K^{-1})$. Similarly, $\E{ V_{\mbox{\tiny L--NS},-1}^{\cu} - \Var{\lhs{L}_{N,-1}(\lhs{\bX}_\cu)} }^2 =  \cO(K^{-1})$. Lastly, $\bigg(\Var{\lhs{L}_N(\lhs{\bX}_\cu)} -\Var{\lhs{L}_{N,-1}(\lhs{\bX}_\cu)} \bigg)^2 = \cO(N^{-2}) $. It follows from \eqref{eq:lesquare} that
\begin{equation}\label{eq:ljack_ns_different_second_moment_rate}
	\E{\left(V_{\mbox{\tiny L--NS}}^{\cu} - V_{\mbox{\tiny L--NS},-1}^{\cu} \right)^2} = \cO(K^{-1}) + \cO(N^{-2}).
\end{equation}
Combining \eqref{eq:variance_lbl_rewritten}, \eqref{eq:ljk_ns_diff_rate}, and \eqref{eq:ljack_ns_different_second_moment_rate} yields that
\begin{equation}\label{eq:ljack_proof_e2}
	\Var{b^{(l)}} = \frac{a^{\prime}}{K} + \frac{b^{\prime}}{N^2} + o(K^{-1}) + o(N^{-2}), \quad \forall l \in [I]
\end{equation}
for some positive constants $a^{\prime}$ and $b^{\prime}$.

For $\Cov{V_{\mbox{\tiny L--NS}}^{\cu}}{b^{(1)}}$ in \eqref{eq:ljkvar_decompose_cov}, it follows from the Cauchy–Schwarz inequality that
\begin{equation}\label{eq:ljack_proof_e3}
	\Cov{V_{\mbox{\tiny L--NS}}^{\cu}}{b^{(1)}} \leq \sqrt{\Var{V_{\mbox{\tiny L--NS}}^{\cu}}\Var{b^{(1)}}} = \cO(\max \{K^{-1}, K^{-1/2}N^{-1}\}).
\end{equation}
Similarly,
\begin{equation}\label{eq:ljack_proof_e4}
	\Cov{b^{(1)}}{b^{(2)}} \leq \sqrt{\Var{b^{(1)}}\Var{b^{(2)}}} = \cO(\max \{K^{-1}, N^{-2}\}).
\end{equation}
Combining \eqref{eq:ljack_proof_e2} through \eqref{eq:ljack_proof_e4} yields
	$\Var{V_{\mbox{\tiny L--JK}}^{\cu}} = a K^{-1}+ b N^{-2} + o(K^{-1}) +o(N^{-2})$
for some positive constants $a$ and $b$.
\end{proof}

\section{Additional Details for Section \ref{sec:experiment}}
\label{app:add_info_numerical}

\paragraph{\normalsize\emph{Ishigami function}} This example admits closed-form expressions for the first-order Sobol' indices \citep{gamboa2016statistical,ishigami1990importance}. The true values are $S^{1}=0.3139$, $S^{2}=0.4424$, and $S^{3}=0$.

\paragraph{\normalsize\emph{$g$-function}} This example also has closed-form first-order Sobol' indices \citep{tarantola2007estimating}. In the 3D case, the true values are $S^{1}=0.0476$, $S^{2}=0.1904$, and $S^{3}=0.7616$. In the 5D case, the true values are $S^{1}=0.48257$, $S^{2}=0.21443$, $S^{3}=0.12091$, $S^{4}=0.077382$, and $S^{5}=0.053866$.

\paragraph{\normalsize\emph{Hydrological model (hymod)}} Table \ref{app:hymod_input} summarizes the input variables of the hymod example, including their physical interpretations, probability distributions, and units.
Since the Sobol' indices for this example do not have an analytical form, we estimate them using the PF estimator in \eqref{eq:pick_freeze} with a total computational budget $T=10^8$ and treat the resulting estimates as ground truth. The estimated true values are $S^{1}=0.0411$, $S^{2}=0.0118$, $S^{3}=0.2657$, $S^{4}=0.0389$, and $S^{5}=0.2252$.

 \begin{table}[h!]
 	\centering
 	 \footnotesize
 	\captionsetup{font=footnotesize, labelfont=bf}
 	\caption{Descriptions and distributions of input variables for the hymod example. The notation
 		$U(a,b)$ denotes a uniform distribution on $[a,b]$, and ``--'' denotes a quantity with no units.}
 	\begin{tabular}{l l c c}
 		\hline
 		Input & Description & Units & Distribution \\
 		\hline
 		{\tt Sm} $(X_1)$   & maximum soil moisture                 & mm        & $U(0,400)$ \\
 		{\tt beta} $(X_2)$  & exponent in the soil moisture routine & --        & $U(0,2)$   \\
 		{\tt alfa}  $(X_3)$ & partition coefficient                  & --        & $U(0,1)$   \\
 		{\tt Rs}   $(X_4)$  & slow reservoir coefficient             & day$^{-1}$ & $U(0,0.1)$ \\
 		{\tt Rf}  $(X_5)$   & fast reservoir coefficient             & day$^{-1}$ & $U(0.1,1)$ \\
 		\hline
 	\end{tabular}
 	\label{app:hymod_input}
 \end{table}

}

\end{document}